\documentclass[11pt]{article}
\usepackage[margin=1in]{geometry}
\usepackage{setspace}
\usepackage{amsmath,amssymb,amsfonts}
\usepackage{amsthm}
\usepackage{graphicx}
\usepackage{xcolor}
\usepackage{textcomp}
\usepackage{bm}
\usepackage{cases}
\usepackage{algorithm}
\usepackage{algorithmic}
\usepackage{mathtools}
\usepackage{booktabs}
\usepackage{nicefrac}
\usepackage[numbers,sort&compress]{natbib}
\usepackage{hyperref}
\hypersetup{hidelinks}
\usepackage{cleveref}

\newtheorem{theorem}{Theorem}
\newtheorem{lemma}{Lemma}
\newtheorem{corollary}{Corollary}
\newtheorem{proposition}{Proposition}

\newtheorem{definition}{Definition}
\newtheorem{assumption}{Assumption}

\renewcommand{\Pr}{\mathrm{Pr}}

\newcommand{\Br}{\mathrm{Br}}

\title{How Learning Selects Equilibria and When It Leads to Collusion}
\author{Hanzhang Wang \qquad Manxi Wu\\[6pt]
\small University of California, Berkeley}
\date{}

\begin{document}
\maketitle
\begingroup
\renewcommand{\thefootnote}{}
\footnotetext{\texttt{hz-wang@berkeley.edu}, \texttt{manxiwu@berkeley.edu}}
\endgroup

\begin{abstract}
We study how learning algorithm design shapes equilibrium selection in
infinite-horizon discounted Markov games with finite states and
actions. Each episode, players choose smooth best responses to beliefs
about opponents’ policies, test them using noisy observations, and
resample rejected beliefs or explore at rates determined by increasing
transformations of their discounted values. We show that stochastically
stable states induce approximate stationary Nash equilibria maximizing
the minimum transformed discounted value across players, with
transformations governing which players selection favors. In repeated
games with bounded recall of noisy public signals, longer recall weakly
increases the selected minimum transformed value. In repeated Cournot
games, every stochastically stable state induces an approximate public
perfect equilibrium whose minimum discounted profit across firms
exceeds the Cournot–Nash benchmark. Collusion is therefore inevitable
under these conditions and learning dynamics. For symmetric firms,
selected profits can approach the monopoly benchmark arbitrarily
closely. These results offer a novel equilibrium-selection perspective
on algorithmic collusion.
\end{abstract}

\noindent\textbf{Keywords:} Algorithmic collusion, equilibrium selection, learning in games, Markov games, stochastic stability.

\section{Introduction}
\label{sec:introduction}
We study how the design of a learning algorithm determines equilibrium selection in infinite-horizon discounted Markov games with finite state and action spaces. When a game admits multiple equilibria, learning to play an equilibrium leaves open which outcome will emerge and persist. We ask which equilibria are selected by learning dynamics with belief testing and value-dependent exploration, and how the exploration rule can be designed to change those outcomes. We characterize the approximate stationary Nash equilibria selected in the long run and identify how players’ exploration rules determine selection. We then apply this characterization to repeated games with bounded public recall, represented as finite-state Markov games, and study its implications for algorithmic collusion in repeated Cournot competition. Under suitable conditions on patience, recall, monitoring, and learning parameters, every stochastically stable outcome exhibits collusion; for symmetric firms, selected profits can approach the monopoly benchmark arbitrarily closely. These results provide an equilibrium-selection perspective on algorithmic collusion, connecting the persistence of collusive outcomes to the design of individual learning rules.

We introduce an episodic learning dynamic in which each player maintains state-dependent beliefs about opponents’ stationary policies and chooses an entropy-regularized best response. Players hold their policies fixed during each epoch and observe opponents’ actions or informative, possibly noisy signals of those actions. At the end of the epoch, players use these observations to test whether their beliefs are consistent with opponents’ behavior. Rejected beliefs are resampled. Beliefs that pass the test may also be resampled through exploration, whose probability decreases with a player-specific increasing transformation of the discounted value evaluated under those beliefs. Conditional on passing the test, players with higher transformed values are less likely to explore, and the transformation function determines how each player’s propensity to explore responds to their discounted value. We establish conditions under which a generalized likelihood ratio test provides the error control required by the analysis.

We characterize the long-run outcomes of the learning dynamic through stochastically stable belief–policy states: those whose stationary probabilities remain positive as exploration and testing errors vanish. In this limit, players spend an arbitrarily large fraction of epochs in this set. Under the stated assumptions and suitable learning parameters, we show that these states induce approximate stationary Nash equilibria that maximize the minimum transformed discounted value across players (Theorem 1).

This result answers both questions: it establishes approximate equilibrium play in the long run and identifies how exploration design determines selection. A common increasing transformation selects outcomes maximizing the minimum value. Player-specific transformations can favor players whose exploration probabilities remain relatively high at comparable payoffs: assigning a player lower transformed values makes that player more likely to explore and can steer selection toward equilibria offering them higher payoffs. Intuitively, the player with the lowest transformed value is most likely to destabilize the current outcome, and selection therefore favors outcomes that maximizes this player’s long run utility. We formalize this intuition using Young’s stochastic-potential framework \cite{Young1993}, which identifies stochastically stable states through minimum stochastic potential.

We next apply the framework to repeated games with bounded public recall, where players condition their actions on the most recent \(L\) public observations and evaluate strategies by their discounted continuation payoffs. Viewing this history as the state yields a finite Markov game whose stationary policies correspond to strategies with bounded recall. Approximate perfect public equilibria (PPE) then correspond to approximate stationary Nash equilibria, allowing our equilibrium-selection results to apply. Holding the learning parameters fixed, we show that longer recall weakly increases the selected minimum transformed value (Theorem 2).

Applying this framework to repeated Cournot competition explains algorithmic collusion through equilibrium selection. With a common increasing transformation, learning favors consistent outcomes maximizing the lowest belief-based discounted profit across firms. Longer public recall weakly increases the selected belief-based collusion index, with a corresponding approximate monotonicity guarantee for actual profits. Under sufficient patience, recall, and monitoring accuracy, suitable learning parameters ensure that every stochastically stable outcome has a positive actual-payoff collusion index: the minimum discounted profit across firms exceeds the Cournot–Nash benchmark. For symmetric firms, this index can be made arbitrarily close to one (Theorem 3). Thus, the learning dynamics make collusion inevitable at stochastically stable outcomes under these conditions, without explicitly targeting joint-profit maximization.

\emph{Related literature.} Equilibrium selection through learning has been studied in normal-form games and stochastic games. In normal-form games, variants of log-linear learning select potential maximizers in potential games \cite{MardenShamma2012}. Completely uncoupled dynamics based on experimentation and payoff-dependent acceptance select welfare-maximizing pure Nash equilibria in generic games possessing such equilibria \cite{Pradelski2012}. Payoff-based experimentation with content/discontent states can also select aggregate-payoff-maximizing action profiles, which need not be equilibria \cite{MardenYoungPao2012}. These results show how different learning rules induce different selection objectives.

In stochastic games, Q-learning combined with policy exploration achieves team optimality in common-interest settings \cite{yongacoglu2022decentralized}, while log-linear updates combined with Q-learning yield near-efficient outcomes in stochastic teams \cite{donmez2025logitq}. Pareto Actor--Critic targets Pareto-optimal equilibria in no-conflict games \cite{christianos2023pareto}. For finite-horizon stochastic games, normal-form selection rules, including log-linear learning, are applied to continuation-value stage games, with selection properties established through backward induction \cite{zhang2024equilibrium}. Another approach uses single-bit feedback to select approximate pure stationary equilibria maximizing a sum of transformed values \cite{kiremitci2026decentralized}. Our result links individual exploration rules to max--min selection over consistent belief-policy states in infinite-horizon discounted games, without requiring a potential-game structure or pure stationary equilibria. We establish this characterization using the resistance-tree framework for stochastic stability \cite{Young1993}, applied to the process on beliefs and their induced smooth best-response policies.

Our hypothesis-testing approach builds on learning through the testing and revision of beliefs about opponents’ behavior \cite{FosterYoung2003}. In two-player repeated games, hypothesis testing combined with bounded-memory strategies and payoff-dependent experimentation yields efficient equilibrium selection and connections to bargaining solutions \cite{Jindani2022}. Our work develops this connection between experimentation and selection for discounted Markov games with multiple players, state-dependent policy beliefs, and identifiable imperfect monitoring. It also establishes how the selected minimum transformed value changes with public recall.

Our repeated-game application connects equilibrium selection to algorithmic collusion. Numerical studies show that Q-learning can sustain collusive behavior \cite{Calvano,klein2021autonomous} and examine how monitoring and algorithm design affect pricing outcomes \cite{calvano2021imperfect,asker2024impact}. Theoretical work establishes bounded-recall folk theorems \cite{MailathOlszewski2011} and characterizes learnable payoffs under smooth fictitious play in repeated potential games \cite{cartea2026algorithmic} and under Q-learning and gradient-based dynamics with finite recall \cite{AskenaziGolanEtAl2026}. We characterize how exploration selects among sustainable equilibrium outcomes, thereby determining whether competitive or collusive behavior persists. In repeated Cournot games, we establish recall comparative statics and conditions under which every stochastically stable outcome is collusive, with profits approaching the monopoly benchmark for symmetric firms. This provides an equilibrium-selection explanation of persistent algorithmic collusion and connects it directly to exploration design.

\emph{Relationship with our prior work \cite{ruifan}.} This paper builds on our prior work on equilibrium selection through episodic hypothesis testing in finite normal-form games \cite{ruifan}. Theorem 1 extends its max–min selection result to infinite-horizon discounted Markov games and allows players to observe noisy signals of opponents’ actions rather than the actions themselves. Its proof follows the same high-level structure as that in \cite{ruifan} but requires new analysis establishing (i) the sensitivity of smooth best responses to state-dependent beliefs in Lemma 4; (ii) the existence and approximate-equilibrium properties of consistent states in Proposition 1; and (iii) uniform testing-error bounds for dependent and noisy observations collected along policy-dependent state trajectories in Proposition 2. In particular, the testing analysis
accounts for dependent observations and random state-visit counts.
The repeated-game analysis, including the recall comparative statics in Theorem 2 and the algorithmic-collusion guarantees in Theorem 3, is new to this paper.

\section{Model}\label{sec: model}
Consider an infinite-horizon, discrete-time Markov game $G=(I,S,(A_i)_{i\in I},(u_i)_{i\in I}$, $P, \delta,\mu)$, where $I$ is a finite set of players with $|I|=n$; $S$ is a finite set of states; $A_i$ is the finite action set of player $i$ and $A:=\prod_{i\in I}A_i$ is the joint action set; $u_i:S\times A\to\mathbb{R}$ is player $i$'s one-stage payoff function; $P=(P(s'\mid s,a))_{s,s'\in S,\,a\in A}$ is the state transition matrix, where $P(s'\mid s,a)$ is the probability of state transitioning from $s$ to $s'$ given joint action $a$; $\delta\in(0,1)$ is the discount factor; and $\mu\in\Delta(S)$ is the initial state distribution.

For each player $i\in I$, let $\Delta_i$ denote the probability simplex over
$A_i$, and let $\Delta_{-i}:=\prod_{j\neq i}\Delta_j$. A stationary policy is a mapping $\pi_i:S\to\Delta_i$, where
$\pi_i(s,a_i)$ is the probability that player $i$ chooses action $a_i$ in
state $s$. At state $s$, the joint
mixed action is $\pi(s):=(\pi_i(s))_{i\in I}\in\Delta :=\prod_{i\in I}\Delta_i$, and
$\pi(s,a):=\prod_{i\in I}\pi_i(s,a_i)$ is the probability of joint action
$a=(a_i)_{i\in I}$. Player $i$'s expected one-stage payoff at state $s$ is
$
U_i(s,\pi)
=\mathbb{E}_{a\sim\pi(s)}[u_i(s,a)]
=\sum_{a\in A}\pi(s,a)u_i(s,a).
$

The game proceeds in discrete-time stages indexed by $t=0,1,2,\dots$. At
$t=0$, the initial state $s^0$ is sampled from $\mu$. At each stage $t$,
given state $s^t$, each player $i$ independently draws an action
$a_i^t$ according to $\pi_i(s^t)$, yielding the joint action
$a^t=(a_i^t)_{i\in I}$. The next state is drawn according to
$s^{t+1}\sim P(\cdot\mid s^t,a^t)$. Given an initial state distribution
$\mu$ and a stationary policy profile $\pi=(\pi_i)_{i\in I}$, player
$i$'s expected total discounted payoff is
$$
V_i(\mu,\pi)
:=\mathbb{E}\left[\sum_{t=0}^{\infty}\delta^t
U_i(s^t,\pi)\right],
$$
where the expectation is taken over the initial state and the state-action
trajectory induced by $\pi$ and $P$. For brevity, we refer to $V_i(\mu,\pi)$ as player $i$'s value.

\begin{definition}[$\epsilon$-stationary Nash equilibrium]
For $\epsilon\geq 0$, a stationary policy profile
$\pi^*=(\pi_i^*)_{i\in I}$ is an $\epsilon$-stationary Nash equilibrium
if, for every player $i\in I$, stationary policy $\pi_i$, and initial
state distribution $\mu\in\Delta(S)$,
\[
V_i\bigl(\mu,\pi_i^*,\pi_{-i}^*\bigr)
\geq
V_i\bigl(\mu,\pi_i,\pi_{-i}^*\bigr)-\epsilon.
\]
\end{definition}

In our learning dynamics, each player maintains state-dependent beliefs about
the opponents' policies. For computational tractability, we restrict
these beliefs to a finite grid. For a granularity parameter
$M\in\mathbb{N}_+$, let
\[
\Delta_i^M
:=
\left\{
x_i\in\Delta_i:
x_i(a_i)\in\left\{\frac{m}{M}:m=0,\ldots,M\right\},
\forall a_i\in A_i
\right\}.
\]

For each state $s\in S$ and opponent $j\neq i$, let
$b_{ij}(s)\in\Delta_j^M$ denote player $i$'s belief about player $j$'s
mixed action, where $b_{ij}(s,a_j)$ is the probability that player $i$
assigns to player $j$ choosing action $a_j$ at $s$. We assume that each
player believes the opponents' actions are conditionally independent given
the state. Thus, player $i$'s belief about all opponents at $s$ is
$b_i(s):=(b_{ij}(s))_{j\neq i}\in\prod_{j\neq i}\Delta_j^M$, and
\(b_i(s,a_{-i})
:=
\prod_{j\neq i}b_{ij}(s,a_j)\)
is the probability that player $i$ assigns to the opponents' joint action
$a_{-i}=(a_j)_{j\neq i}$. Player $i$'s belief profile across states is
$b_i:=(b_i(s))_{s\in S}\in\mathcal{B}_i$, where
$\mathcal{B}_i:=\bigl(\prod_{j\neq i}\Delta_j^M\bigr)^S$.
Let $A_{-i}:=\prod_{j\neq i}A_j$. Given $b_i(s)$, we use
\begin{align*}
U_i(s,a_i,b_i(s))
&:=
\mathbb E_{a_{-i}\sim b_i(s)}
\left[u_i(s,a_i,a_{-i})\right],\\
P(s'\mid s,a_i,b_i(s))
&:=
\mathbb E_{a_{-i}\sim b_i(s)}
\left[P(s'\mid s,a_i,a_{-i})\right]
\end{align*}
to denote player $i$'s subjective expected one-stage payoff and transition
probability, respectively. Given a stationary policy $\pi_i$ and belief
profile $b_i$, we use $V_i(s,\pi_i,b_i)$ to denote player $i$'s expected
total discounted payoff from initial state $s$ when player $i$ follows
$\pi_i$ and evaluates the opponents' actions according to $b_i$.

In our learning dynamics, player $i$ responds to their belief $b_i$ about
the opponents' stationary policies using a smooth best response (SBR).
Following the entropy-regularized Markov decision process formulation in
\cite{geist2019theory}, for $\pi_i(s)\in\Delta_i$, define the Shannon
entropy by
\(
\operatorname{Ent}(\pi_i(s))
:=
-\sum_{a_i\in A_i}\pi_i(s,a_i)\log\pi_i(s,a_i),
\)
where $0\log 0:=0$. For $\sigma>0$, the entropy-regularized value
of $\pi_i$ under belief $b_i$ is
\[
V_i^\sigma(s,\pi_i,b_i)
:=
V_i(s,\pi_i,b_i) 
+\sigma\mathbb E_{\pi_i,b_i}
\left[
\left.
\sum_{t=0}^{\infty}
\delta^t \operatorname{Ent}(\pi_i(s^t))
\,\right|\,s^0=s
\right], \quad \forall s \in S,
\]
and $V_i^\sigma(\mu,\pi_i,b_i)= \sum_{s \in S} \mu(s) V_i^\sigma(s,\pi_i,b_i)$.

\begin{definition}[Smooth best response]
For $\sigma>0$ and $b_i\in\mathcal B_i$, a stationary policy
$\Br_i^\sigma(b_i)\in(\Delta_i)^{|S|}$ is a smooth best response to $b_i$
if, for every $s\in S$,
\(
\Br_i^\sigma(b_i)
\in
\arg\max_{\pi_i\in(\Delta_i)^{|S|}}
 V_i^\sigma(s,\pi_i,b_i).
\)
\end{definition}

\begin{lemma}
\label{lemma: smooth_br}
For any $i\in I$, $\sigma>0$, and $b_i\in\mathcal B_i$, the smooth best
response $\Br_i^\sigma(b_i)$ exists and is unique. It can be computed
by regularized value iteration: given any
$V_{i,0}\in\mathbb R^{|S|}$, for $\alpha=0,1,\ldots$, define
\begin{align*}
&Q_{i,\alpha}(s,a_i)
=
U_i(s,a_i,b_i(s))
+\delta\sum_{s'\in S}
P(s'\mid s,a_i,b_i(s))V_{i,\alpha}(s'),\\
&\pi_{i,\alpha+1}(s,a_i)
=
\frac{
\exp\!\left(Q_{i,\alpha}(s,a_i)/\sigma\right)
}{
\sum_{a_i'\in A_i}
\exp\!\left(Q_{i,\alpha}(s,a_i')/\sigma\right)
},\\
&V_{i,\alpha+1}(s)
=
\sum_{a_i\in A_i}
\pi_{i,\alpha+1}(s,a_i)Q_{i,\alpha}(s,a_i)
+\sigma\operatorname{Ent}(\pi_{i,\alpha+1}(s)).
\end{align*}
Then, as $\alpha\to\infty$,
\[
V_{i,\alpha}
\to
V_i^\sigma\bigl(\cdot,\Br_i^\sigma(b_i),b_i\bigr),
\qquad
\pi_{i,\alpha+1}\to\Br_i^\sigma(b_i).
\]
\end{lemma}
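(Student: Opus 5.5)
The plan is to treat player $i$'s problem under a fixed belief $b_i$ as a single-agent entropy-regularized MDP with state space $S$, action set $A_i$, reward $U_i(s,a_i,b_i(s))$ and transition kernel $P(\cdot\mid s,a_i,b_i(s))$. Once the opponents are averaged out by $b_i$, nothing depends on their policies beyond these two objects. The argument then follows the standard soft Bellman analysis of \cite{geist2019theory}.

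\emph{Step 1: the soft Bellman operator is a contraction.} For $V\in\mathbb R^{|S|}$, define $Q_V(s,a_i)=U_i(s,a_i,b_i(s))+\delta\sum_{s'}P(s'\mid s,a_i,b_i(s))V(s')$ and
\[
(TV)(s)=\max_{x\in\Delta_i}\Bigl\{\textstyle\sum_{a_i}x(a_i)Q_V(s,a_i)+\sigma\operatorname{Ent}(x)\Bigr\}.
\]
By the Gibbs variational principle, the maximum equals $\sigma\log\sum_{a_i}\exp(Q_V(s,a_i)/\sigma)$. It is attained uniquely at the softmax distribution, because $\operatorname{Ent}$ is strictly concave and the objective is continuous on a compact set. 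Consequently, the iteration in the statement is exactly $V_{i,\alpha+1}=TV_{i,\alpha}$, with $\pi_{i,\alpha+1}$ the softmax of $Q_{i,\alpha}=Q_{V_{i,\alpha}}$. Log-sum-exp is monotone and satisfies $\mathrm{lse}(q+c)=\mathrm{lse}(q)+c$. Together these give $\|TV-TW\|_\infty\le\delta\|V-W\|_\infty$. Banach's fixed-point theorem then yields a unique fixed point $V^*$ with $V_{i,\alpha}\to V^*$. Because softmax is continuous and $Q_V$ depends continuously on $V$, we also get $\pi_{i,\alpha+1}\to\pi^*$, where $\pi^*(s)$ is the softmax of $Q_{V^*}(s,\cdot)$.

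\emph{Step 2: $V^*$ is the optimal regularized value.} For any stationary $\pi_i$, the function $V_i^\sigma(\cdot,\pi_i,b_i)$ is the unique fixed point of the policy-evaluation operator
\[
T^{\pi_i}V(s)=\textstyle\sum_{a_i}\pi_i(s,a_i)Q_V(s,a_i)+\sigma\operatorname{Ent}(\pi_i(s)).
\]
This follows by one-step conditioning of the series defining $V_i^\sigma$, which converges absolutely because payoffs and entropy are bounded (the latter by $\log|A_i|$). Since $T^{\pi_i}V\le TV$ pointwise and $T^{\pi_i}$ is monotone, iterating gives $V_i^\sigma(\cdot,\pi_i,b_i)=\lim_k (T^{\pi_i})^kV\le \lim_k T^kV=V^*$. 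On the other side, $T^{\pi^*}V^*=TV^*=V^*$, so $V_i^\sigma(\cdot,\pi^*,b_i)=V^*$. Hence $\pi^*$ maximizes $V_i^\sigma(s,\cdot,b_i)$ simultaneously for every $s$, which gives existence.

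\emph{Step 3: uniqueness.} This is the main subtlety, because the definition requires $\pi_i$ to be a maximizer at every $s$. Suppose $\pi_i$ attains $V_i^\sigma(s,\pi_i,b_i)=V^*(s)$ for all $s$, and write $V^{\pi_i}=V_i^\sigma(\cdot,\pi_i,b_i)$. Then $V^*=V^{\pi_i}=T^{\pi_i}V^{\pi_i}=T^{\pi_i}V^*$, so for each $s$ the distribution $\pi_i(s)$ attains the maximum defining $(TV^*)(s)$. Strict concavity of the entropy term makes that maximizer unique, so $\pi_i(s)$ equals the softmax of $Q_{V^*}(s,\cdot)$, i.e., $\pi_i=\pi^*$. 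Therefore $\Br_i^\sigma(b_i)=\pi^*$ exists and is unique. Combined with Step 1, this gives the two stated limits.
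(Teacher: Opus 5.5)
Your proof is correct and follows the same route as the paper. Both treat player $i$'s problem under $b_i$ as an entropy-regularized single-agent MDP, identify the iteration with the regularized Bellman optimality operator, and conclude from its $\delta$-contraction and the continuity of the softmax map. The only difference is that you derive directly, via the Gibbs variational principle, the log-sum-exp contraction, the policy-evaluation comparison, and strict concavity of the entropy for uniqueness, the facts the paper imports from Proposition~2(iii), Theorem~1, and Proposition~1(i) of \cite{geist2019theory}; this makes your version self-contained.
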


Lemma~\ref{lemma: smooth_br} follows from the regularized MDP results in \cite{geist2019theory}. The detailed discussion is in Appendix~\ref{appendix:prop1}.

\section{Learning Dynamics and Stochastically Stable Set}
\label{sec:algo}

This section introduces the episodic learning dynamics and characterizes the long run outcomes of the dynamics.

Two distinct notions of state appear in this analysis. Within an epoch, the Markov game evolves across the stage states $s\in S$ introduced in Section~\ref{sec: model}. Across epochs, the learning dynamics evolve through players' belief and policy updates. To distinguish the two notions, we reserve $s$ for the state of the Markov game and refer to the pair
$z=(b,\pi)$ with $b=(b_i)_{i\in I}$ and $\pi_i=\Br_i^\sigma(b_i)$
as the \emph{belief-policy state} of the learning dynamics. The belief-policy state space is
\[
\mathcal{Z}:=\{(b,\pi)\mid b_i\in\mathcal{B}_i,\ \pi_i=\Br_i^\sigma(b_i),\ \forall i\in I\},
\]
which is finite since each $\mathcal{B}_i$ is finite.

\begin{definition}[Consistent Belief-Policy State]
\label{def:consistent_state}
For any $\tau>0$, a belief-policy state $z=(b,\pi)\in\mathcal{Z}$ is $\tau$-consistent for player $i$ if $\|b_i(s)-\pi_{-i}(s)\|_2\le\tau$ for all $s\in S$. Let
\[
\mathcal{Z}_i^\dagger:=\{z\in\mathcal{Z}\mid\|b_i(s)-\pi_{-i}(s)\|_2\le\tau,\ \forall s\in S\}
\]
denote the set of $\tau$-consistent states for player $i$. A belief-policy state $z$ is $\tau$-consistent if it is $\tau$-consistent for every player $i\in I$, and we write $\mathcal{Z}^\dagger:=\bigcap_{i\in I}\mathcal{Z}_i^\dagger$.
\end{definition}

\subsection{Learning Dynamics}
\label{sec:learning}

The learning proceeds in discrete epochs indexed by $k=1,2,\ldots$, with belief-policy state $z^k=(b^k,\pi^k)$ of epoch $k$. Each epoch consists of four steps.

\textbf{Step 1 (Play).} An initial Markov-game state $s^{k,0}$ is drawn from $\mu$. Players play the Markov game $G$ for $T(\xi)$ time steps, where $\xi \in (0, 1)$ is a hyperparameter of the learning dynamics and the epoch length $T(\xi)$ is given in Proposition~\ref{prop:GLRT}. At each time step $t=0,1,\ldots,T(\xi)-1$, the state $s^{k,t}$ is observed by all players, player $i$ chooses action $a_i^{k,t}$ according to $\pi_i^k(s^{k,t})$, the state transitions to $s^{k,t+1}\sim P(\cdot\mid s^{k,t},a^{k,t})$, and player $i$ observes a signal $o_i^{k,t}\in\mathcal{O}$ according to
$o_i^{k,t}\sim
\phi_{i,s^{k,t}}(\cdot\mid a_i^{k,t},a_{-i}^{k,t})$,
where $\mathcal{O}$ is the finite signal space. Our observation signal model covers different monitoring structures. For example, under payoff-based monitoring mechanism, each player only observes their own realized payoff, i.e., $o_i^{k,t}=u_i(s^{k,t},a_i^{k,t},a_{-i}^{k,t})$, and does not observe opponents' payoffs or actions. On the other hand, under perfect monitoring mechanism, each player observes opponents' actions, i.e., $o_i^{k,t}=a_{-i}^{k,t}$. The data collected by player $i$ during epoch $k$ is
\[
\mathcal{D}_i^k=\{(s^{k,t},a_i^{k,t},o_i^{k,t})\}_{t=0}^{T(\xi)-1}.
\]

\textbf{Step 2 (Test).} At the end of each epoch, each player $i$ independently conducts a consistency test with a fixed, player-specific probability $\gamma_i\in(0,1)$. If player $i$ conducts the test, then they use $\mathcal{D}_i^k$ to test whether their belief $b_i^k$ is $\tau$-consistent with the opponents' current policy $\pi_{-i}^k$ induced by the belief-policy state $z^k=(b^k,\pi^k)$. That is, the player forms a null hypothesis
\[
H_0:\ \|b_i^k(s)-\pi_{-i}^k(s)\|_2\le \tau,\ \text{ for all }s\in S,
\]
against the alternative hypothesis
\[
H_1:\ \text{there exists }s\in S\text{ such that }\|b_i^k(s)-\pi_{-i}^k(s)\|_2>\tau.
\]
A consistency test for player $i$ can be viewed as a function $\mathcal{C}_i:\mathcal{Z}\times\mathcal{D}_i\to\{0,1\}$, where $\mathcal{C}_i(z^k,\mathcal{D}_i^k)=1$ means the test fails to reject $H_0$ and $\mathcal{C}_i(z^k,\mathcal{D}_i^k)=0$ means the test rejects $H_0$. Type I error occurs when $z^k\in\mathcal{Z}_i^\dagger$ but $\mathcal{C}_i(z^k,\mathcal{D}_i^k)=0$, while Type II error occurs when $z^k\notin\mathcal{Z}_i^\dagger$ but $\mathcal{C}_i(z^k,\mathcal{D}_i^k)=1$.

\textbf{Step 3 (Belief Update).} Player $i$'s belief is updated according to the outcome of Step 2:

If player $i$ tests and rejects $H_0$ (i.e., $\mathcal{C}_i(z^k,\mathcal{D}_i^k)=0$), then player $i$ resamples a new belief, $b_i^{k+1}\sim\psi_i(\cdot\mid b_i^k)$, where $\psi_i=(\psi_i(b_i'\mid b_i))_{b_i,b_i'\in\mathcal{B}_i}$ is player $i$'s resampled belief distribution and $\psi_i(b_i'\mid b_i)>0$ for all $b_i, b_i' \in \mathcal{B}_i$ and all $i \in I$.

If player $i$ tests and fails to reject $H_0$ (i.e., $\mathcal{C}_i(z^k,\mathcal{D}_i^k)=1$), then player $i$ still resamples a new belief with probability $\xi^{f_i(V_i(\mu,\pi_i^k,b_i^k))}$, where $f_i$ is strictly positive and increasing. Here, we refer to the resampling as an exploration, meaning that a player still resamples even if the hypothesis testing shows that the belief is consistent with the observation. We note that players with higher value $f_i(V_i(\mu,\pi_i^k,b_i^k))$ explore with lower probability since $\xi<1$ and are therefore less likely to update  their current belief when it is consistent with observation. When player $i$ explores, a new belief is again drawn as $b_i^{k+1}\sim\psi_i(\cdot\mid b_i^k)$. If player $i$ does not explore, then their belief remains unchanged, $b_i^{k+1}=b_i^k$.

If player $i$ does not test in epoch $k$, then their belief remains unchanged, $b_i^{k+1}=b_i^k$.

\textbf{Step 4 (Policy Update).} Each player $i$ updates their policy as the smooth best response to the updated belief, i.e., $\pi_i^{k+1}=\Br_i^\sigma(b_i^{k+1})$.

We formally present the learning dynamics in Algorithm \ref{alg:learning}. We next present all the assumptions needed for the analysis.
\begin{algorithm}[t]
\caption{Episodic Learning Dynamics}
\label{alg:learning}
\renewcommand{\algorithmicrequire}{\textbf{Input:}}
\begin{algorithmic}[1]
\REQUIRE $G$; $\xi\in(0,1)$, $\sigma,\tau>0$, $M\in\mathbb N_+$; $(\gamma_i)_{i\in I}$, $T(\xi)$, $(\mathcal C_i,f_i,\psi_i)_{i\in I}$; $b_i^1\in\mathcal B_i$, $i\in I$
\STATE \textbf{Initialization:} Set $\pi_i^1\gets\Br_i^\sigma(b_i^1)$ for all $i\in I$ and $z^1\gets(b^1,\pi^1)$.
\FOR{each epoch $k=1,2,\ldots$}
    \STATE \textbf{Step 1 (Play):} Draw $s^{k,0}\sim\mu$ and set $\mathcal D_i^k\gets\varnothing$ for all $i\in I$. For $t=0,\ldots,T(\xi)-1$, each player $i$ draws $a_i^{k,t}\sim\pi_i^k(s^{k,t})$; draw $s^{k,t+1}\sim P(\cdot\mid s^{k,t},a^{k,t})$; each player observes $o_i^{k,t}\sim\phi_{i,s^{k,t}}(\cdot\mid a_i^{k,t},a_{-i}^{k,t})$ and appends $(s^{k,t},a_i^{k,t},o_i^{k,t})$ to $\mathcal D_i^k$.
    \STATE \textbf{Step 2 (Test):} Each player $i$ independently chooses to run $\mathcal C_i(z^k,\mathcal D_i^k)$ with probability $\gamma_i$.
    \STATE \textbf{Step 3 (Belief Update):} For each player $i$: if $i$ does not test, set $b_i^{k+1}\gets b_i^k$; if $i$ tests and $\mathcal C_i=0$, draw $b_i^{k+1}\sim\psi_i(\cdot\mid b_i^k)$; if $i$ tests and $\mathcal C_i=1$, make the same draw with probability $\xi^{f_i(V_i(\mu,\pi_i^k,b_i^k))}$ and otherwise set $b_i^{k+1}\gets b_i^k$.
    \STATE \textbf{Step 4 (Policy Update):} Set $\pi_i^{k+1}\gets\Br_i^\sigma(b_i^{k+1})$ for all $i\in I$ and $z^{k+1}\gets(b^{k+1},\pi^{k+1})$.
\ENDFOR
\end{algorithmic}
\end{algorithm}

\begin{assumption}[Smoothness] \label{assump: sigma, tau, M}
There exist positive constants $K_V,K_U,K_P>0$ such that for all $i\in I$, $s\in S$, $a_i\in A_i$, all stationary policies $\pi_i\in(\Delta_i)^{|S|}$, and all opponents' stationary-policy profiles $\pi_{-i},\pi_{-i}'\in(\Delta_{-i})^{|S|}$, we have
\[
\begin{aligned}
&|V_i(s,\pi_i,\pi_{-i})-V_i(s,\pi_i,\pi_{-i}')| \le K_V\|\pi_{-i}-\pi_{-i}'\|_\infty,\\
&|U_i(s,a_i,\pi_{-i}(s))-U_i(s,a_i,\pi_{-i}'(s))| \le K_U\|\pi_{-i}-\pi_{-i}'\|_\infty.\\
&\sum_{s'}|P(s'\mid s,a_i,\pi_{-i}(s))
-P(s'\mid s,a_i,\pi_{-i}'(s))| 
\le K_P\|\pi_{-i}-\pi_{-i}'\|_\infty.
\end{aligned}
\]
\end{assumption}

Assumption~\ref{assump: sigma, tau, M} ensures that players' values, expected payoffs, and transition probabilities are Lipchitz in the opponents' stationary policies.

\begin{assumption}\label{as:learning_parameters}
    Given any $\epsilon$, the parameters $(\sigma, \tau, M)$ satisfy
\begin{align}
    \sigma &\leq \epsilon(1-\delta)/2\log(\max_{i\in I}|A_i|),\quad
    \tau\leq \epsilon/4K_V, \label{eq: sigma_tau_choice} \\
    M &\geq |S||I|\max_{i\in I}|A_i|/\tau
    \left(1+|I|K_{\Br}/\sigma\right), \label{eq: M_choice}
\end{align}
where $U_{\max}:=\max_{i\in I,\,s\in S,\,a\in A}|u_i(s,a)|$, $\overline V_\sigma:=[U_{\max}+\sigma\log(\max_{i\in I}|A_i|)]/(1-\delta)$, $K_Q:=(K_U+\delta K_P\overline V_\sigma)/(1-\delta)$, and $K_{\Br}:=K_Q\sqrt{|S|\max_{i\in I}|A_i|}/2$.
\end{assumption}
The following proposition establishes the existence of consistent belief-policy states and shows that any consistent belief-policy state is an $\epsilon$-stationary Nash equilibrium.

\begin{proposition} \label{lemma: sigma, tau, M}
For any $\epsilon>0$, under Assumptions~\ref{assump: sigma, tau, M} and \ref{as:learning_parameters}, the consistent state set $\mathcal Z^\dagger$ is non-empty, and for any $z^\dagger=(b^\dagger,\pi^\dagger)\in\mathcal Z^\dagger$, $\pi^\dagger$ is an $\epsilon$-stationary Nash equilibrium.
\end{proposition}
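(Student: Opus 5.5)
Two things have to be shown: that $\mathcal Z^\dagger$ is non-empty, and that every consistent state gives an $\epsilon$-stationary Nash equilibrium. For existence I would use a fixed point followed by rounding to the grid.

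\emph{Lipschitz smooth best response.} First I would show that $b_i\mapsto\Br_i^\sigma(b_i)$ is Lipschitz on the continuous belief space $(\Delta_{-i})^{|S|}$, with constant of order $|I|K_{\Br}/\sigma$. By Assumption~\ref{assump: sigma, tau, M}, the regularized $Q$-function at the fixed point of Lemma~\ref{lemma: smooth_br} satisfies
\[
|Q_i(s,a_i;b_i)-Q_i(s,a_i;b_i')|\le K_U\|b_i-b_i'\|_\infty+\delta K_P\overline V_\sigma\|b_i-b_i'\|_\infty+\delta\|V^\sigma_i(\cdot;b_i)-V^\sigma_i(\cdot;b_i')\|_\infty .
\]
Contraction then gives a $Q$-sensitivity of $K_Q$. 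The softmax with temperature $\sigma$ is $1/(2\sigma)$-Lipschitz from $\ell_\infty$ to $\ell_1$ (or $\ell_2$). Converting norms across $|S|\cdot|A_i|$ coordinates produces the $\sqrt{|S|\max|A_i|}$ factor, and the product structure of $b_i$ over opponents gives the factor $|I|$.

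\emph{Existence.} The map $\pi\mapsto(\Br_i^\sigma(\pi_{-i}))_i$ is continuous on the compact convex set $\prod_i(\Delta_i)^{|S|}$, so Brouwer gives a fixed point $\pi^*$. For each $i$, $j\ne i$ and $s$, I would round $\pi_j^*(s)$ to a nearest grid point of $\Delta_j^M$, within $|A_j|/M$ in $\ell_1$; call this $b_{ij}(s)$, and set $\pi_i=\Br_i^\sigma(b_i)$. Then
\[
\|b_i(s)-\pi_{-i}(s)\|_2\le\|b_i(s)-\pi^*_{-i}(s)\|+\|\pi^*_{-i}(s)-\pi_{-i}(s)\|,
\]
where $\pi^*_j=\Br_j^\sigma(\pi^*_{-j})$ and $\pi_j=\Br_j^\sigma(b_j)$. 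The first term is at most $|I|\max|A_i|/M$. The second is at most $|I|(K_{\Br}/\sigma)\cdot|I|\max|A_i|/M$, using the Lipschitz bound and summing over opponents and states (the $|S|$ factor covers norm conversion across states). Condition \eqref{eq: M_choice} makes the total at most $\tau$, so $(b,\pi)\in\mathcal Z^\dagger$. The main obstacle is tracking the constants so that they exactly match \eqref{eq: M_choice}; if the bound is loose by a constant factor, I would use the slack in the $|S|$ factor.

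\emph{Approximate equilibrium.} Take $z^\dagger\in\mathcal Z^\dagger$, a player $i$, any $\pi_i$ and any $\mu$. The entropy term lies in $[0,\sigma\log|A_i|/(1-\delta)]$. Since $\pi_i^\dagger$ maximizes $V_i^\sigma(\cdot,\cdot,b_i^\dagger)$ statewise,
\[
V_i(\mu,\pi_i^\dagger,b_i^\dagger)\ge V_i(\mu,\pi_i,b_i^\dagger)-\sigma\log|A_i|/(1-\delta)\ge V_i(\mu,\pi_i,b_i^\dagger)-\epsilon/2
\]
by \eqref{eq: sigma_tau_choice}. Assumption~\ref{assump: sigma, tau, M} allows replacing $b_i^\dagger$ by $\pi_{-i}^\dagger$ on both sides at cost $K_V\|b_i^\dagger-\pi^\dagger_{-i}\|_\infty\le K_V\tau\le\epsilon/4$ each. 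This works because $V_i(s,\pi_i,b_i)$ equals the value against opponents playing the product policy $b_i$, and the $\ell_\infty$ norm is bounded by the $\ell_2$ norm. Summing the errors gives $\epsilon/2+\epsilon/4+\epsilon/4=\epsilon$, which proves the claim.
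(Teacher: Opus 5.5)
Your proposal is correct and follows the paper's proof essentially step for step: Lipschitz continuity of $\Br_i^\sigma$ via the $\delta$-contraction of the regularized Bellman operator, a Brouwer fixed point of $\pi\mapsto(\Br_i^\sigma(\pi_{-i}))_{i}$ rounded coordinatewise to the belief grid, and then the entropy sandwich together with the $K_V$-Lipschitz bound of Assumption~\ref{assump: sigma, tau, M} to replace beliefs by the true opponent policies. Two bookkeeping fixes are needed: the softmax is $1/(2\sigma)$-Lipschitz from $\ell_\infty$ to $\ell_\infty$, not to $\ell_1$ (there the constant is $1/\sigma$), and after norm conversion this gives a Lipschitz constant of exactly $K_{\Br}/\sigma$ with no extra factor $|I|$, because Assumption~\ref{assump: sigma, tau, M} is already stated for the whole opponent profile---the $|I|$ enters only when you sum over opponents, as your existence estimate correctly does.
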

The proof of Proposition~\ref{lemma: sigma, tau, M} is given in Appendix~\ref{appendix:prop1}.

\begin{assumption}[State reachability]
\label{assump:recurrence}
There exists a constant $p_S>0$ such that, for every pair $s,s'\in S$, there exist a finite integer $m \geq 1$ and a sequence of joint actions $(a^0,\ldots,a^{m-1})\in A^m$ satisfying
$\Pr\!\left(
s^m=s'\mid s^0=s,a^0,\ldots,a^{m-1}
\right)\geq p_S$.
\end{assumption}
Assumption~\ref{assump:recurrence} ensures that any state can be reached within finitely many steps with positive probability. This assumption is weaker than those commonly imposed in prior work, such as the requirement that every one-step transition have positive probability \cite{maheshwari2025independent}.%

\begin{assumption}[Identifiability]
\label{ass:identifiability}
There exists $\kappa>0$ such that, for each player $i$, each state $s\in S$, and any $\pi_{-i}(s),\pi'_{-i}(s)\in\Delta_{-i}$ with $\pi_{-i}(s)\neq \pi'_{-i}(s)$, there exists $a_i\in A_i$ such that
\[
\left\|\sum_{a_{-i}\in A_{-i}}\phi_{i,s}(\cdot\mid a_i,a_{-i}) \bigl[\pi_{-i}(s,a_{-i})-\pi'_{-i}(s,a_{-i})\bigr] \right\|_{\mathrm{TV}} \ge \kappa\|\pi_{-i}(s)-\pi'_{-i}(s)\|_1.
\]
\end{assumption}

Assumption \ref{ass:identifiability} ensures that the observation distribution $\phi$ enables agents to distinguish different policies at any given state of the Markov game.

\begin{assumption}[Belief Space Richness]\label{assump: bad strategy}
For any player $i\in I$, any opponents' beliefs $b_{-i}=(b_j)_{j\neq i}\in\prod_{j\neq i}\mathcal{B}_j$, and the induced policy $\pi_{-i}=\Br^\sigma_{-i}(b_{-i})$, there exists $\tilde b_i\in\mathcal{B}_i$ with $\tilde\pi_i=\Br^\sigma_i(\tilde b_i)$ such that
$\max_{s\in S}\|\tilde b_i(s)-\pi_{-i}(s)\|_2>\tau$, and
$\max_{s\in S}\|\tilde\pi_i(s)-b_{ji}(s)\|_2>\tau$ for all $j\neq i$.
\end{assumption}Assumption \ref{assump: bad strategy} ensures that, given any opponents' beliefs, there exists a belief $\tilde b_i$ of player $i$ such that the smooth best response of $\tilde b_i$ makes every player's belief inconsistent, so that all testing players reject the null hypothesis and resample with high probability. In Appendix~\ref{appendix:assumption6}, we show that this assumption holds for sufficiently small tolerance $\tau$, provided that each player’s smooth best-response map attains more distinct stationary policies than the number of players.

\begin{assumption}[Type I and type II Error Control]\label{ass:error_control}
For every player $i\in I$, the consistency test $\mathcal{C}_i$ satisfies
\begin{align}
\Pr\bigl[\mathcal{C}_i(z,\mathcal{D}_i)=0\mid z\bigr]&\le\xi^{\bar{r}},\quad \forall z\in\mathcal{Z}_i^\dagger,\label{eq:type1}\\
\Pr\bigl[\mathcal{C}_i(z,\mathcal{D}_i)=1\mid z\bigr]&\le\xi^{\bar{r}},\quad \forall z\notin\mathcal{Z}_i^\dagger,\label{eq:type2}
\end{align}
where $\bar{r} \geq \max_{z=(b,\pi)\in\mathcal{Z}}\sum_{i\in I}f_i(V_i(\mu,\pi_i,b_i))+1$ is a sufficiently large number.
\end{assumption}

Given any $\xi\in(0,1)$, the learning dynamics can adopt any testing
mechanism that satisfies Assumption~\ref{ass:error_control} with
a finite epoch length $T(\xi)$.

For concreteness, we construct
a generalized likelihood ratio test (GLRT). For any opponents' mixed-action profile
$\pi_{-i}'(s)\in\Delta_{-i}$, define the log-likelihood
of the observations at state $s$ in epoch $k$ by
\[
\ell_{i,s}^k(\pi_{-i}'(s))
:= \sum_{t:\,s^{k,t}=s}\log\left(
\sum_{a_{-i}\in A_{-i}}\pi_{-i}'(s,a_{-i})\cdot\phi_{i,s}(o_i^{k,t}\mid a_i^{k,t},a_{-i})
\right).
\]
The statewise GLRT statistic compares the unrestricted maximum
log-likelihood with the maximum over profiles consistent with
player $i$'s belief:
\begin{equation}\label{eq:glrt_statistic}
\Lambda_i^k(s)
:=
\max_{\pi_{-i}'(s)\in\Delta_{-i}}
\ell_{i,s}^k(\pi_{-i}'(s)) -
\max_{\substack{\pi_{-i}'(s)\in\Delta_{-i}\\
\|\pi_{-i}'(s)-b_i^k(s)\|_2\le\tau}}
\ell_{i,s}^k(\pi_{-i}'(s)).
\end{equation}
We use $\log 0=-\infty$ and take empty sums to be zero.
Thus, $\Lambda_i^k(s)=0$ if $s$ is not visited; if every
null candidate assigns zero likelihood to the observations,
set $\Lambda_i^k(s)=+\infty$.

For $T=T(\xi)$, the rejection threshold is given by
\begin{equation}\label{eq:glrt-threshold}
c_i(T,\xi)
:=|A_i||\mathcal O|\log(T+1)
+\log|S|+\bar r\log(1/\xi).
\end{equation}
Player $i$ rejects $H_0$ if
$\Lambda_i^k(s)>c_i(T,\xi)$ at any state, and otherwise
fails to reject it. Equivalently,
\begin{equation}\label{eq:glrt_decision}
\mathcal C_i(b_i^k,\mathcal D_i^k)
:=
\mathbf 1\{\Lambda_i^k(s)\le c_i(T,\xi),\ \forall s\in S\}.
\end{equation}
The test can be conducted using only player $i$'s belief and observations, and does not require the knowledge of opponents' utilities,
beliefs, or best-response functions.

\begin{proposition}\label{prop:GLRT}
Suppose Assumptions~\ref{assump:recurrence}
and~\ref{ass:identifiability} hold.
For every $\xi\in(0,1)$, there exists a finite
$\bar T(\xi)$ such that the GLRT test according to
\eqref{eq:glrt_statistic}--\eqref{eq:glrt_decision}
satisfies Assumption~\ref{ass:error_control}
with $T(\xi)\ge\bar T(\xi)$.
\end{proposition}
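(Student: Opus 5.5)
We prove the two error bounds of Assumption~\ref{ass:error_control} separately. Fix $z=(b,\pi)$ and a player $i$. Within an epoch, the pair $(s^{k,t},a_i^{k,t})$ is generated by the Markov chain induced by $\pi$ and $P$. Conditional on the whole state--own-action trajectory, the signals $o_i^{k,t}$ are independent. Each one is drawn from the mixture
\[
q_{s,a_i}^{\pi}(\cdot):=\sum_{a_{-i}}\pi_{-i}(s,a_{-i})\,\phi_{i,s}(\cdot\mid a_i,a_{-i}),
\]
which depends only on $(s,a_i)$. Indeed, $a_{-i}^{k,t}$ is drawn independently of $a_i^{k,t}$ given $s^{k,t}$. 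The subtlety is that the next state depends on $a_{-i}^{k,t}$, so the conditioning has to be done carefully. The clean way is a filtration argument: let $\mathcal F_t$ be generated by the states and own actions up to $t+1$ and by the signals up to $t-1$. The likelihood ratio process is then still a supermartingale, so the Chernoff/Ville-type bounds below go through. I will use this martingale formulation throughout.

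\textbf{Type I error.} Suppose $z\in\mathcal Z_i^\dagger$. Then the true mixture $\pi_{-i}(s)$ is feasible in the restricted maximization, so
\[
\Lambda_i^k(s)\le \max_{\pi'}\ell_{i,s}^k(\pi')-\ell_{i,s}^k(\pi_{-i}(s)).
\]
I bound this by a method-of-types or universal-code argument. Group the observations at $s$ by own action $a_i$ into empirical signal counts. The unrestricted maximum over $\pi'_{-i}(s)$ is at most the maximum over arbitrary conditional distributions $q(\cdot\mid a_i)$, and that is bounded by the empirical-type likelihood. The standard mixture bound (Krichevsky--Trofimov or a type-counting argument, with at most $(T+1)^{|A_i||\mathcal O|}$ types) then shows
\[
\Pr\bigl[\max_{q}\log q(\text{data})-\log q^{\pi}(\text{data})>|A_i||\mathcal O|\log(T+1)+x\bigr]\le e^{-x},
\]
uniformly over the random visit counts. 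This uses the supermartingale property of $\bar q(\text{data})/q^\pi(\text{data})$ for the mixture $\bar q$ and Ville's inequality. Taking $x=\log|S|+\bar r\log(1/\xi)$ and a union bound over $s\in S$ gives error at most $\xi^{\bar r}$. This also explains the exact form of $c_i$.

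\textbf{Type II error.} Suppose $z\notin\mathcal Z_i^\dagger$. Then some state $s^*$ has $\|b_i(s^*)-\pi_{-i}(s^*)\|_2>\tau$. Grid finiteness is not needed here: every $\pi'$ in the null ball satisfies $\|\pi'-\pi_{-i}(s^*)\|_1\ge \|\cdot\|_2>\eta$ for some $\eta>0$, with $\eta$ uniform over the finitely many $z$. By Assumption~\ref{ass:identifiability}, for each such $\pi'$ some action $a_i$ gives $\mathrm{TV}(q^{\pi}_{s^*,a_i},q^{\pi'}_{s^*,a_i})\ge\kappa\eta$, hence KL divergence at least $2\kappa^2\eta^2$ by Pinsker.

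Two further facts are needed. First, under the smooth best response $\pi_i=\Br_i^\sigma(b_i)$ every action has probability at least some $p_{\min}>0$ (entropy regularization, value bounded by $\overline V_\sigma$). Second, every state is visited linearly often: Assumption~\ref{assump:recurrence}, together with full-support policies, gives a uniform lower bound on the probability of reaching $s^*$ within a fixed horizon. A block decomposition plus Azuma then shows $N_{s^*,a_i}\ge cT$ except with probability $e^{-c'T}$.

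On that event, the restricted maximum is at most $\ell(\pi_{-i})-c''T$ with high probability. I will get this by covering the compact null set with a finite net, applying a concentration bound for the log-likelihood ratio to each net point, and controlling the net error by Lipschitz continuity of the log-likelihood after restricting to signals of positive probability. Meanwhile the unrestricted maximum is at least $\ell(\pi_{-i})$. So $\Lambda_i^k(s^*)\ge c''T$, which exceeds $c_i(T,\xi)=O(\log T)+O(\log(1/\xi))$ for large $T$. Hence the type II error is at most $e^{-c'''T}\le\xi^{\bar r}$ once $T\ge\bar T(\xi)$. Since $\mathcal Z$ is finite, the constants can be taken uniform.

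\textbf{Main obstacle.} I expect the hardest step to be the uniform lower bound on the restricted log-likelihood gap. The difficulty is that the log-likelihood is unbounded when a null candidate assigns near-zero probability to observed signals, and the observations are dependent. I will first handle this by truncating the mixtures, $\pi'\mapsto(1-\rho)\pi'+\rho\,\mathrm{unif}$; the truncation only helps the null by a controlled amount. Combining this with the martingale concentration should give the bound.
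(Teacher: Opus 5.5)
Your plan works in outline, and the visit-count step matches the paper's. One foundational step is wrong as written, and your Type II route differs from the paper's.

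\textbf{The conditioning.} Your claim that, conditional on the whole state--own-action trajectory, each $o_i^{k,t}$ is an independent draw from $q^{\pi}_{s,a_i}$ is false. The next state $s^{k,t+1}$ depends on $a_{-i}^{k,t}$, which is correlated with $o_i^{k,t}$. For example, with perfect monitoring and a transition that reveals $a_{-i}^{k,t}$, the signal is a deterministic function of the next state.

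Your filtration has the same defect. ``States and own actions up to $t+1$'' puts $s^{t+1}$ into $\mathcal F_t$, and then $\mathbb E[X_t\mid\mathcal F_t]\neq 1$ in general. The likelihood ratio is a (super)martingale for a filtration whose $t$-th $\sigma$-field contains the states only through $s^t$ (optionally also $a_i^t$) and the signals through $o_i^{t-1}$. This is the paper's $\mathcal F_t=\sigma(z,s^0,(a_i^u,o_i^u,s^{u+1})_{u<t})$. With that fix, both halves of your argument go through.

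\textbf{Type I.} Your Laplace/KT sequential mixture with Ville's inequality gives the same bound as the paper's device. The paper observes that $\ell_{i,s}$ depends on the data only through the action--signal count array. It fixes one maximizer per array in advance, giving at most $(T+1)^{|A_i||\mathcal O|}$ data-independent candidates, and applies Markov's inequality with a union bound over them. Your ``type-counting'' alternative is exactly this.

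\textbf{Type II.} The paper reuses the same trick for the restricted maximizers. It then bounds each fixed null candidate with the supermartingale $\prod_t e^{\beta\mathbf 1\{s^t=s^*\}}\sqrt{X_t}$. Its one-step factor is a Hellinger affinity, so it is at most one however small the candidate's signal probabilities are. This removes the obstacle you single out: no net, no truncation, no Lipschitz control and no bounded increments are needed, and an explicit $\bar T(\xi)$ comes out.

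Your net-plus-truncation route is also valid and more generic, but ``controlled amount'' must be made precise. Passing from $\pi'$ to $(1-\rho)\pi'+\rho\,u$ can lower its log-likelihood by up to $T_{s^*}\log\frac{1}{1-\rho}$. So a bound on the truncated null maximum transfers to the original only after adding a term linear in the visit count.

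You must therefore fix $\rho$ in the following order:
\begin{enumerate}
\item Choose $\rho$ small relative to the per-visit drift $\pi_{\min}\cdot 2\kappa^2\eta^2$, after also accounting for the $O(\rho)$ loss in separation.
\item Only then choose the mesh against the $\rho$-dependent Lipschitz constant.
\end{enumerate}
The net size is then independent of $T$, so the union bound costs only a constant. However, you would get implicit constants where the paper gets explicit ones. Reusing your own type-counting idea in place of the net would let you skip this machinery entirely.
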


The explicit bound on $\bar T(\xi)$ and the proof are given
in Appendix~\ref{appendix:prop2}.

\subsection{Stochastically Stable Set Characterization}
\label{sec:stable}

We are now ready to characterize the long-run outcomes induced by the learning dynamics. Since each epoch restarts the Markov game from $\mu$ and the policy $\pi^k$ is determined by the beliefs $b^k$, the distribution of $z^{k+1}$ depends on the history only through $z^k$. Hence, the sequence $\{z^k\}_{k\ge1}$ is a Markov chain on the finite belief-policy state space $\mathcal{Z}$.

For any $\xi\in(0,1)$, let $P^\xi=(P^\xi_{zz'})_{z,z'\in\mathcal{Z}}$ denote the transition matrix of the belief-policy state induced by the learning dynamics, where
\[
P^\xi_{zz'}:=\Pr(z^{k+1}=z'\mid z^k=z),
\qquad \forall z,z'\in\mathcal{Z}.
\]

The parameter $\xi$ governs the perturbations in $P^\xi$. Testing errors occur with probability at most $\xi^{\bar r}$ under Assumption \ref{ass:error_control}, and player $i$ explores with probability $\xi^{f_i(V_i(\mu,\pi_i,b_i))}$ when the test fails to reject $H_0$. Both probabilities vanish as $\xi\to0$, so $\xi$ is the perturbation parameter of $P^\xi$.

\begin{lemma}[Lemma~1 in \cite{ruifan}]\label{lemma: P_ep}
For any $\xi\in(0,1)$, the Markov process $P^\xi$ is aperiodic and irreducible on the finite state space $\mathcal{Z}$, and admits a unique stationary distribution $\mu^{\xi}$.
\end{lemma}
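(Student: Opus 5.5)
To prove Lemma~\ref{lemma: P_ep}, I will show that for every fixed $\xi\in(0,1)$ the one-step transition probabilities of the belief-policy chain have a uniform positive lower bound on every pair of states, $P^\xi_{zz'}>0$ for all $z,z'\in\mathcal Z$. Irreducibility and aperiodicity both follow at once from this. Existence and uniqueness of the stationary distribution $\mu^\xi$ then follow from the standard Perron--Frobenius theorem for finite irreducible chains. I will assume the chain is time-homogeneous Markov on $\mathcal Z$, as argued just before the lemma.

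First, note that $z=(b,\pi)$ is determined by $b$, since $\pi_i=\Br_i^\sigma(b_i)$ is unique by Lemma~\ref{lemma: smooth_br}. So it suffices to show that, from any $b$, the probability that the next belief profile equals any target $b'$ is positive. Fix $z$ and $b'$. Condition on the epoch data $\mathcal D^k=(\mathcal D_i^k)_{i\in I}$; given the data, players' test decisions, exploration draws and resampling draws are independent across players. For each player $i$, I will lower-bound the conditional probability that player $i$ tests and then resamples. If the test rejects, resampling happens with probability $1$. If it fails to reject, resampling happens with probability $\xi^{f_i(V_i(\mu,\pi_i,b_i))}$, which is strictly positive because $\xi>0$ and $f_i$ is finite-valued. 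Hence, regardless of the data, player $i$ resamples with probability at least
\[
\gamma_i\,\xi^{f_i(V_i(\mu,\pi_i,b_i))}>0.
\]
Conditional on resampling, $b_i^{k+1}=b_i'$ with probability $\psi_i(b_i'\mid b_i)>0$ by the full-support assumption on $\psi_i$. Multiplying across players, conditioning on the data, and taking expectations over $\mathcal D^k$ gives
\[
P^\xi_{zz'}\ \ge\ \prod_{i\in I}\gamma_i\,\xi^{f_i(V_i(\mu,\pi_i,b_i))}\,\psi_i(b_i'\mid b_i)>0.
\]
The bound does not depend on the data, so the expectation step is harmless.

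Positivity of all entries gives irreducibility directly. It gives aperiodicity because $P^\xi_{zz}>0$ for every $z$. On a finite state space an irreducible aperiodic chain has a unique stationary distribution. The only point needing care is the measurability of the within-epoch randomness, and the data-independent lower bound handles it. So I expect no real obstacle; the main thing to verify is that the exploration probability is never zero, which holds because $\xi\in(0,1)$ is fixed and strictly positive.
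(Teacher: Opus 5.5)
Your proof is correct and takes essentially the route behind the cited lemma, which the paper invokes without reproducing. Whatever the test outcome, each player tests and resamples with probability at least $\gamma_i\xi^{f_i(V_i(\mu,\pi_i,b_i))}>0$, since $\xi^{f_i(\cdot)}\le 1$ covers the rejection branch; full support of $\psi_i$ together with the independent update randomizations (as the paper uses them) then makes every entry of $P^\xi$ strictly positive, which gives irreducibility, aperiodicity through positive self-loops, and a unique $\mu^\xi$ on the finite space $\mathcal Z$.
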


With the unique stationary distribution $\mu^\xi$ well defined for every $\xi>0$, we study the support set of $\mu^{\xi}$ as $\xi\to0$ to identify the states that retain positive long-run probability when testing errors and exploration become arbitrarily rare.

\begin{definition}
A belief-policy state $z\in\mathcal{Z}$ is \textit{stochastically stable} relative to the process $P^\xi$ if
$\lim_{\xi\to0}\mu_z^\xi>0$,
where $\mu_z^\xi$ is the stationary probability of $z$.
\end{definition}

When $\xi$ is small, the learning dynamics spend almost all epochs at stochastically stable belief-policy states, so these states are the long-run outcomes of the learning process.

\begin{theorem}\label{theorem:main}
Under Assumptions~\ref{assump: sigma, tau, M}--\ref{ass:error_control}, a belief-policy state $z=(b,\pi)$ is stochastically stable if and only if $z\in\mathcal{Z}^\dagger$, and
\[
\min_{i\in I}f_i(V_i(\mu,\pi_i,b_i))=\max_{z'=(b',\pi')\in\mathcal{Z}^\dagger}\min_{i\in I}f_i(V_i(\mu,\pi'_i,b'_i)).
\]
When $f_i(x)=x$ for all $i\in I$, the stochastically stable states are $\epsilon$-Nash equilibria that maximize the minimum value across players, i.e. $
\min_{i\in I}V_i(\mu,\pi_i,b_i)=\max_{z'=(b',\pi')\in\mathcal{Z}^\dagger}\min_{i\in I}V_i(\mu,\pi'_i,b'_i)$.
\end{theorem}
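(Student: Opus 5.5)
We will use Young's resistance-tree characterization \cite{Young1993}: for a regular perturbed Markov process, the stochastically stable states are exactly the states of minimum stochastic potential. The proof has four steps: (i) compute resistances of one-step transitions, (ii) identify the recurrent classes of the unperturbed process, (iii) compute the minimal resistance between recurrent classes, and (iv) compare stochastic potentials.

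\textbf{Step (i): resistances.} Fix $z=(b,\pi)$ and a subset $J\subseteq I$ of players whose beliefs change. For $i\in J$ with $z\in\mathcal Z_i^\dagger$, the change requires testing (probability $\gamma_i$, resistance $0$). Then either a type I error occurs (resistance at least $\bar r$) or, after passing, an exploration occurs (resistance $f_i(V_i(\mu,\pi_i,b_i))$). Since $\bar r$ exceeds every such $f_i$, the effective resistance is $f_i(V_i(\mu,\pi_i,b_i))$. For $i\in J$ with $z\notin\mathcal Z_i^\dagger$, rejection occurs with probability at least $1-\xi^{\bar r}$, so the resistance is $0$. Players not in $J$ can keep their beliefs by not testing, with probability $1-\gamma_i$ and resistance $0$. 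The resample draw $\psi_i>0$ adds no resistance. To make $P^\xi$ a regular perturbation, so that $\lim \xi^{-r}P^\xi_{zz'}$ exists, we treat the error probabilities via the bounds of Assumption~\ref{ass:error_control}: resistances are bounded below by these values, and the upper bounds come from explicit constructions. It should suffice to sandwich the stochastic potential between matching upper and lower bounds, rather than requiring exact regularity.

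\textbf{Step (ii): recurrent classes.} From any $z\notin\mathcal Z^\dagger$, some player $i$ is inconsistent and can resample at zero cost. We use Assumption~\ref{assump: bad strategy} to move $i$ to $\tilde b_i$. This makes every other player's belief inconsistent, so all players can resample at zero cost, and in particular they can jump directly to any consistent $z^\dagger$, which exists by Proposition~\ref{lemma: sigma, tau, M}. One point needs care: all players resampling simultaneously, including $i$, lands on an arbitrary profile with positive probability. So from any nonconsistent state, every state is reachable at zero resistance. The states of $\mathcal Z^\dagger$ are absorbing in the unperturbed process, since at $\xi=0$ no one errs or explores. Hence the recurrent classes are exactly the singletons $\{z\}$ with $z\in\mathcal Z^\dagger$, and no nonconsistent state is stochastically stable.

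\textbf{Step (iii): minimal exit resistance.} For consistent $z$, leaving $z$ costs at least $\min_i f_i(V_i(\mu,\pi_i,b_i))=:R(z)$. Conversely, the player $i^*$ attaining this minimum can explore at cost $R(z)$ and draw $\tilde b_{i^*}$, which by Assumption~\ref{assump: bad strategy} makes all others inconsistent. All players then resample freely to reach any target $z'$ at zero further cost. So the resistance from $z$ to every other recurrent class equals exactly $R(z)$. With exit costs uniform across targets, the stochastic potential of $z$ is
\[
\sum_{z'\in\mathcal Z^\dagger}R(z')-R(z),
\]
achieved by the star tree rooted at $z$ in which every other consistent state exits directly to $z$. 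A lower bound holds for any tree, because each non-root node must pay at least its own exit cost.

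\textbf{Step (iv): conclusion.} Minimizing the stochastic potential is equivalent to maximizing $R(z)$ over $\mathcal Z^\dagger$, which gives the characterization in the statement. The case $f_i(x)=x$ follows immediately and, combined with Proposition~\ref{lemma: sigma, tau, M}, yields the $\epsilon$-Nash property.

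The main obstacle I expect is Step (i): rigorously handling the fact that the test error probabilities are only upper-bounded, not exact powers of $\xi$, so the process may not be a regular perturbation. I would first try to establish the resistance bounds directly, and if needed argue via Freidlin–Wentzell bounds on $\mu^\xi$, following Lemma~1 of \cite{ruifan}.
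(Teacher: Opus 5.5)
Your four steps follow the paper's proof essentially step for step. You use the same resistance formula $r_{zz'}=\sum_{i\in I_c(z),\,b_i'\neq b_i}f_i(V_i(\mu,\pi_i,b_i))$. You use the same two-step escape $z\to\tilde z\to w$ through Assumption~\ref{assump: bad strategy} to show that the recurrent classes of $P^0$ are the consistent singletons. You get the same exit cost $\min_i f_i$ from the minimizing player drawing $\tilde b_i$; you should also note $\tilde b_i\neq b_i$, which holds because $z$ is consistent and $\tilde z$ is not. Finally, you make the same observation that every $z$-tree has weight $\sum_{w\in\mathcal Z^\dagger}R(w)-R(z)$.

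The only piece you leave open is regularity of $P^\xi$. It is not a real obstacle, and no sandwich or Freidlin--Wentzell fallback is needed.
\begin{itemize}
\item The paper couples $P^\xi$ with an auxiliary chain $\widetilde P^\xi$. This chain uses the same trajectory and the same testing, exploration and resampling randomizations, but replaces every performed test outcome by the correct one.
\item In $\widetilde P^\xi$, players update independently and every transition has positive probability. Moreover, $\widetilde P^\xi_{zz'}\xi^{-r_{zz'}}$ converges to $\prod_{i\in I_c(z),\,b_i'\neq b_i}\gamma_i\psi_i(b_i'\mid b_i)\prod_{i\notin I_c(z)}P^0_{i,zz'}\in(0,\infty)$.
\item The two chains disagree only on the event that some performed test errs. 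A union bound therefore gives $|P^\xi_{zz'}-\widetilde P^\xi_{zz'}|\le|I|\xi^{\bar r}$.
\item Since $r_{zz'}\le\sum_i f_i(V_i(\mu,\pi_i,b_i))<\bar r$, the normalized limits coincide, so $P^\xi$ is exactly regular.
\end{itemize}
This matters beyond tidiness. Exact regularity lets Young's theorem guarantee that $\lim_{\xi\to0}\mu^\xi_z$ exists, which the definition of stochastic stability requires. Two-sided bounds alone would only give $\liminf_{\xi\to0}\mu^\xi_z>0$ for minimum-potential states.

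Relatedly, your Step~(i) justification that $\bar r$ exceeds each individual $f_i$ is not the comparison that is needed. All players' tests use data from the same within-epoch trajectory, so their errors are dependent. The probability that $k$ consistent players commit type I errors simultaneously can therefore only be bounded by $\xi^{\bar r}$, not by $\xi^{k\bar r}$. For such routes to be negligible against the exploration route, whose probability is of order $\xi^{r_{zz'}}$, you need $\bar r>r_{zz'}$. In other words, $\bar r$ must dominate the sum of $f_i$ over all consistent players who move. This is exactly why Assumption~\ref{ass:error_control} requires $\bar r\ge\max_z\sum_{i\in I}f_i(V_i(\mu,\pi_i,b_i))+1$ rather than only $\bar r>\max_i f_i$.
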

The proof is shown in Appendix \ref{appendix:theorem1}. Theorem~\ref{theorem:main} establishes both approximate equilibrium play and a selection criterion for the learning dynamics. Hypothesis testing prevents inconsistent belief-policy states from being stochastically stable. Under the conditions of Proposition~\ref{lemma: sigma, tau, M}, every consistent state induces an approximate stationary Nash equilibrium. Value-dependent exploration then determines which of these consistent states persist in the long run.

At a consistent state, the player with the smallest transformed value has the highest exploration probability and is therefore most likely to initiate a departure. The learning dynamics select consistent states that maximize this minimum transformed value, making the largest exploration probability across players as small as possible. This characterization also reveals how exploration design shapes selection: a common strictly increasing transformation yields max--min selection on values over the consistent set, while player-specific transformations can favor different equilibrium value distributions. In particular, if one player's transformed value is the lowest at every consistent state, the selected states maximize that player's value over the consistent set.

\section{Equilibrium Selection in Repeated Games and Algorithmic Collusion}
\label{sec:repeated}

This section specializes the Markov-game framework of Section~\ref{sec:algo} to infinitely repeated games with bounded public recall. Section~\ref{sec:induced_markov_games} represents such a repeated game as a finite-state Markov game and characterizes the effect of recall length on equilibrium selection. Section~\ref{sec:collusion} applies the results to repeated Cournot competition to show that algorithmic collusion can be analyzed through the lens of equilibrium selection.

\subsection{Repeated Games with Public Monitoring}
\label{sec:induced_markov_games}

We consider an infinitely repeated game $\mathcal G$. At each stage $t$, every player $i\in I$ chooses an action $a_i^t$ from a finite set $A_i$, receives the stage payoff $u_i(a^t)$, where $a^t\in A:=\prod_{i\in I}A_i$, and observes the public signal $o^t\sim\phi(\cdot\mid a^t)$ in the finite signal space $\mathcal O$. Fix $L\geq1$. At the beginning of stage $t$, the public history $h_L^t$ records the most recent $L$ public signals, i.e., $h_L^t=(o^{t-L},\ldots,o^{t-1})$. When $t<L$, missing signals are denoted by $\varnothing\notin\mathcal O$; in particular, $h_L^0:=(\varnothing,\ldots,\varnothing)$. The history space is
$H_L:=\bigcup_{r=0}^{L}\left(\{\varnothing\}^{L-r}\times\mathcal O^r\right)$.

With a slight abuse of notation, an $L$-recall public policy of player $i$ is denoted by $\pi_{i,L}:H_L\to\Delta_i$, and the joint policy is $\pi_L=(\pi_{i,L})_{i\in I}$. Player $i$'s continuation payoff under $\pi_L$ with history $h_L^t$ is
\[
V_i(h_L^t,\pi_L):=\mathbb E_{\pi_L}\!\left[\left.\sum_{\ell=0}^{\infty}\delta^\ell u_i(a^{t+\ell})\,\right|\,h_L^t\right].
\]

\begin{definition}[$\epsilon$-perfect public equilibrium]
\label{def:epsilon-ppe}
For any $\epsilon\geq0$, a joint policy $\pi_L^*$ is an \emph{$\epsilon$-perfect public equilibrium} ($\epsilon$-PPE) of $\mathcal G$ if, for every $i\in I$, every $L$-recall public policy $\pi_{i,L}'$, and every $h_L\in H_L$,
$V_i(h_L,\pi_L^*)\geq V_i(h_L,\pi_{i,L}',\pi_{-i,L}^*)-\epsilon$.
\end{definition}

We next represent the repeated game $\mathcal G$ with $L$-recall as an induced finite-state Markov game $G_L$ with state space $H_L$. The action sets, stage payoffs, and discount factor coincide with those of $\mathcal G$, and the initial state is $h_L^0$. Let $\mu_L$ denote the initial distribution concentrated at $h_L^0$. Let $h_L=(h_{L,1},\ldots,h_{L,L})$. The transition kernel is
\[
P(h_L'\mid h_L,a):=\sum_{o\in\mathcal O}\mathbf 1\{h_L'=(h_{L,2},\ldots,h_{L,L},o)\}\,\phi(o\mid a).
\]
In $G_L$, a stationary policy maps $H_L$ to $\Delta_i$; hence, $\pi_{i,L}$ is also a stationary policy.

\begin{lemma}
\label{lem:equiv}
For any $\epsilon\geq0$, a joint policy $\pi_L^*$ is an $\epsilon$-PPE of $\mathcal G$ if and only if it is an $\epsilon$-stationary Nash equilibrium of $G_L$.
\end{lemma}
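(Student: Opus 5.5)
The plan is to show that the two definitions quantify over the same objects and compare the same numbers. Once the repeated game $\mathcal G$ with $L$-recall is identified with the Markov game $G_L$, the payoff functionals coincide, and the only work is to reconcile the quantifier ``every $h_L\in H_L$'' in Definition~\ref{def:epsilon-ppe} with the quantifier ``every $\mu\in\Delta(S)$'' in the definition of an $\epsilon$-stationary Nash equilibrium.

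\textbf{Step 1 (identifying the processes).} Fix a joint $L$-recall public policy $\pi_L$. I will show by induction on $\ell$ that, conditional on $h_L^t=h$ in $\mathcal G$, the law of $(h_L^{t+\ell},a^{t+\ell})_{\ell\ge0}$ equals the law of the state--action process of $G_L$ started at $s^0=h$ under the stationary policy $\pi_L$.
\begin{itemize}
\item In both processes, actions at a given history are drawn independently from $\pi_{i,L}(h)$.
\item The next public history is obtained by shifting $h$ and appending $o\sim\phi(\cdot\mid a)$, which is exactly the kernel $P(h_L'\mid h_L,a)$. The padding convention with $\varnothing$ is covered by the same formula, since it shifts out the oldest entry and the definition of $H_L$ handles the partially filled histories.
\item Stage payoffs $u_i(a)$ do not depend on the state.
\end{itemize}
Hence the continuation payoff $V_i(h,\pi_L)$ of $\mathcal G$ equals the $G_L$ value $V_i(\delta_h,\pi_L)$ with the initial distribution concentrated at $h$. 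This holds in particular for any profile $(\pi'_{i,L},\pi^*_{-i,L})$, because a unilateral deviation to an $L$-recall public policy is still a stationary policy of $G_L$. Conversely, every stationary policy of $G_L$ is an $L$-recall public policy. The two deviation sets therefore coincide.

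\textbf{Step 2 (the two directions).}
\begin{itemize}
\item \emph{Nash $\Rightarrow$ PPE.} Suppose $\pi_L^*$ is an $\epsilon$-stationary Nash equilibrium of $G_L$. Applying the Nash condition with $\mu=\delta_h$ for each $h\in H_L$ gives exactly the PPE inequality by Step 1.
\item \emph{PPE $\Rightarrow$ Nash.} Suppose $\pi_L^*$ is an $\epsilon$-PPE. For any $\mu\in\Delta(H_L)$, $V_i(\mu,\pi)=\sum_h\mu(h)V_i(h,\pi)$ by linearity of expectation over the initial state. Averaging the pointwise inequalities $V_i(h,\pi^*_L)\ge V_i(h,\pi'_{i,L},\pi^*_{-i,L})-\epsilon$ with weights $\mu(h)$ gives the Nash inequality with the same $\epsilon$, since $\sum_h\mu(h)=1$.
\end{itemize}

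\textbf{Main obstacle.} The only delicate point is Step 1, specifically checking that conditioning on $h_L^t$ in $\mathcal G$ does not depend on $t$ or on earlier signals. This requires that play beyond $h_L^t$ depends on the past only through $h_L^t$. That holds because both the policies and the signal distribution $\phi$ depend on the past only through the current history and the current action, so the history process is time-homogeneous Markov. I would state this as a short Markov-property verification, after which the rest is bookkeeping.
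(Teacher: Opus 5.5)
Your proposal is correct and matches the paper's proof: one direction averages the pointwise PPE inequalities with weights $\mu(h_L)$, and the other applies the stationary-Nash condition with $\mu$ concentrated at a single history $h_L$. Your Step~1 makes explicit two facts the paper uses without separate comment: that the continuation payoff $V_i(h_L,\pi_L)$ equals the $G_L$ value from a point mass at $h_L$, and that the two deviation sets coincide.
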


\begin{proof}
Suppose first that $\pi_L^*$ is an $\epsilon$-PPE. For every $i\in I$, $\mu\in\Delta(H_L)$, and stationary policy $\pi_{i,L}'$,
$V_i(\mu,\pi_{i,L}^*,\pi_{-i,L}^*)=\sum_{h_L\in H_L}\mu(h_L)V_i(h_L,\pi_L^*)\geq\sum_{h_L\in H_L}\mu(h_L)
V_i(h_L,\pi_{i,L}',\pi_{-i,L}^*)-\epsilon=V_i(\mu,\pi_{i,L}',\pi_{-i,L}^*)-\epsilon$.
Thus, $\pi_L^*$ is an $\epsilon$-stationary Nash equilibrium of $G_L$. Conversely, suppose that $\pi_L^*$ is an $\epsilon$-stationary Nash equilibrium. Fix $i\in I$, $h_L\in H_L$, and an $L$-recall public policy $\pi_{i,L}'$, and take $\mu\in\Delta(H_L)$ such that $\mu(h_L)=1$. Then,
$V_i(h_L,\pi_L^*)=V_i(\mu,\pi_L^*)\geq V_i(\mu,\pi_{i,L}',\pi_{-i,L}^*)-\epsilon=V_i(h_L,\pi_{i,L}',\pi_{-i,L}^*)-\epsilon$.
Thus, $\pi_L^*$ is an $\epsilon$-PPE.
\end{proof}

We next apply the learning dynamics to $G_L$. States containing
$\varnothing$ occur only during the first $L$ stages and are
therefore transient. Increasing the epoch length alone cannot
provide repeated observations at these states to ensure the existence of hypothesis testing that satisfies
Assumption~\ref{ass:error_control}. We therefore assume access
to repeated runs the game, each starting from
$h_L^0=(\varnothing,\ldots,\varnothing)$, with beliefs and
policies held fixed throughout the epoch. Observations are
pooled across runs for testing policies associated with states in the first $L$ stages. These restarts affect only
data collection; best responses and discounted values remain
defined by the original infinite-horizon game.

Every history in
$H_L$ with $r$ nonempty entries, $0\le r< L$, can be
reached at stage $r$ of a run with positive probability under
the full-support policies. Finiteness of $H_L$ and
$\mathcal Z_L$ gives a uniform positive lower bound on these
probabilities. For the GLRT in Proposition~\ref{prop:GLRT},
the conditional likelihood calculations in
Appendix~\ref{appendix:prop2} remain unchanged, and the
visit-count argument applies with repeated runs for states in the first $L$ stages.
Thus, under Assumption~\ref{ass:identifiability}, sufficiently
many runs ensure both error bounds in
Assumption~\ref{ass:error_control} for every $\xi\in(0,1)$.
Restarts therefore replace the need of
Assumption~\ref{assump:recurrence} for states in the first $L$ stages of the repeated game in establishing error
control
in the proof of Theorem~\ref{theorem:main}. Therefore, the stochastic stable set
characterization continues to apply. Let $\mathcal Z_L^\dagger$
denote the consistent belief-policy states of $G_L$, as in
Definition~\ref{def:consistent_state}.

\begin{corollary}
\label{cor:repeated-selection}
For any $L\ge1$ and $\epsilon>0$,
$\mathcal Z_L^\dagger$ is nonempty, and every
$z_L\in\mathcal Z_L^\dagger$ induces an $\epsilon$-PPE
of $\mathcal G$. Additionally,
a belief-policy state $z_L=(b_L,\pi_L)\in\mathcal Z_L$
is stochastically stable if and only if
$z_L\in\mathcal Z_L^\dagger$ and
$\min_{i\in I}f_i\!\left(
V_i(\mu_L,\pi_{i,L},b_{i,L})\right)
=
\max_{z_L'=(b_L',\pi_L')\in\mathcal Z_L^\dagger}
\min_{i\in I}f_i\!\left(
V_i(\mu_L,\pi_{i,L}',b_{i,L}')\right)$.
\end{corollary}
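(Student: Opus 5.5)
The corollary follows by applying Proposition~\ref{lemma: sigma, tau, M}, Lemma~\ref{lem:equiv} and Theorem~\ref{theorem:main} to the induced Markov game $G_L$. The state space is $S=H_L$ and the initial distribution is $\mu_L$. The only real work is to check that $G_L$, run under the modified data-collection protocol, meets the hypotheses of those results.

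\emph{First step.} Note that $G_L$ is a finite Markov game: $H_L$ is finite, the action sets are finite, the payoffs $u_i(s,a)=u_i(a)$ do not depend on the state, and $P$ is given by the shift-and-append kernel. Assumption~\ref{assump: sigma, tau, M} holds automatically. $U_i$ and $P(\cdot\mid s,a_i,\pi_{-i}(s))$ are multilinear in the finitely many coordinates of $\pi_{-i}(s)$, so they are Lipschitz in the sup norm. $V_i$ is the fixed point $(I-\delta P_\pi)^{-1}U_\pi$, and the resolvent identity gives Lipschitz dependence with constant of order $(K_U+\delta K_P U_{\max}/(1-\delta))/(1-\delta)$. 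Choose $(\sigma,\tau,M)$ according to Assumption~\ref{as:learning_parameters} with these constants. Proposition~\ref{lemma: sigma, tau, M} then shows that $\mathcal Z_L^\dagger\neq\varnothing$ and that every $z_L\in\mathcal Z_L^\dagger$ induces an $\epsilon$-stationary Nash equilibrium of $G_L$. By Lemma~\ref{lem:equiv}, that profile is an $\epsilon$-PPE of $\mathcal G$. This proves the first claim.

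\emph{Second step: stochastic stability.} Assume the remaining standing hypotheses of Theorem~\ref{theorem:main} are imposed on $G_L$: identifiability (Assumption~\ref{ass:identifiability}) for the public signal $\phi$, and belief-space richness (Assumption~\ref{assump: bad strategy}). The proof of Theorem~\ref{theorem:main} uses the testing procedure only through the error bounds of Assumption~\ref{ass:error_control}; the resistance-tree argument never uses how the data were generated. So it suffices to show that the GLRT with restarts satisfies \eqref{eq:type1}--\eqref{eq:type2}. Once this holds, the belief-policy process on $\mathcal Z_L$ is again a finite Markov chain: each run restarts at $h_L^0$ and beliefs are fixed within an epoch. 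Lemma~\ref{lemma: P_ep} and the stochastic-potential argument then apply verbatim and give the stated max--min characterization with $\mu=\mu_L$.

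\emph{Main obstacle: error control.} The difficulty is that Assumption~\ref{assump:recurrence} fails for the transient histories containing $\varnothing$. I would split $H_L$ into two classes.

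For the full histories in $\mathcal O^L$, reachability from any state is immediate when $\phi$ gives each signal positive probability under some action profile. In that case $L$ steps with a suitable action sequence reach any target history, and the visit-count argument of Proposition~\ref{prop:GLRT} applies within each run. If some signal can never occur, the corresponding histories are unreachable; I would either discard them or let restarts handle them as well.

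For the transient histories with $r<L$ nonempty entries, use the fact that smooth best responses have full support. Then $\min_{z\in\mathcal Z_L}\Pr(h^r=h)\ge p_0>0$, because $\mathcal Z_L$ and $H_L$ are finite. With $N$ independent runs per epoch, the number of visits to $h$ is bounded below by a binomial with parameters $(N,p_0)$. A Chernoff bound gives at least $p_0N/2$ visits except with probability $e^{-p_0N/8}$.

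The conditional-likelihood concentration used for the GLRT in Appendix~\ref{appendix:prop2}, namely the method-of-types bound behind the threshold \eqref{eq:glrt-threshold} together with the identifiability separation $\kappa$, depends only on the number of visits, not on how the visits arose. Hence choosing $N$ and $T$ large enough, as functions of $\xi$, $p_0$, $\kappa$ and $\tau$, makes both error probabilities at most $\xi^{\bar r}$ after a union bound over states. The step I would verify most carefully is that pooling observations across independent runs keeps the martingale structure of the likelihood-ratio argument. I expect it does, since conditional on the state and player $i$'s action, each signal is drawn independently of the past.
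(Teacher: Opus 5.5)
Your proposal is correct and follows the paper's route. Nonemptiness and the $\epsilon$-PPE property come from Proposition~\ref{lemma: sigma, tau, M} and Lemma~\ref{lem:equiv}, and the selection rule comes from the proof of Theorem~\ref{theorem:main} once the GLRT error bounds are re-established: restarts from $h_L^0$ give uniformly many visits to transient histories, and the $L$-step shift gives reachability for full histories. One small correction: a history containing a signal that no action profile can generate is never reached, even with restarts, so of your two options only dropping such signals from $\mathcal O$ works, and that is what the paper implicitly assumes when it calls every history reachable under full-support policies.
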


Nonemptiness and the equilibrium claim follow from
Proposition~\ref{lemma: sigma, tau, M}
and Lemma~\ref{lem:equiv}.
The selection characterization follows from the proof of
Theorem~\ref{theorem:main}.

\begin{theorem}
\label{theorem:monotonicity}
For any $L\ge1$ and $\delta\in(0,1)$, fix the same learning parameters $(\sigma,\tau,M)$
and exploration functions $(f_i)_{i\in I}$ in
$G_L$ and $G_{L+1}$.
Any stochastically stable states
$z_L^\dagger=(b_L^\dagger,\pi_L^\dagger)$ and
$z_{L+1}^\dagger=(b_{L+1}^\dagger,\pi_{L+1}^\dagger)$
satisfy
$\min_{i\in I}f_i\!\left(
V_i(\mu_L,\pi_{i,L}^\dagger,b_{i,L}^\dagger)\right)
\le
\min_{i\in I}f_i\!\left(
V_i(\mu_{L+1},\pi_{i,L+1}^\dagger,b_{i,L+1}^\dagger)\right)$.
\end{theorem}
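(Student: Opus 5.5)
The plan is to reduce the comparison to an embedding of consistent sets, using the characterization in Corollary~\ref{cor:repeated-selection}. By that corollary, a stochastically stable state of $G_L$ attains $\max_{z_L'\in\mathcal Z_L^\dagger}\min_{i\in I}f_i(V_i(\mu_L,\pi'_{i,L},b'_{i,L}))$, and likewise for $G_{L+1}$. It therefore suffices to build a map $\Phi:\mathcal Z_L^\dagger\to\mathcal Z_{L+1}^\dagger$ with
\[
V_i\bigl(\mu_{L+1},\pi_{i,L+1},b_{i,L+1}\bigr)=V_i\bigl(\mu_L,\pi_{i,L},b_{i,L}\bigr)\quad\text{for all } i\in I.
\]
Given such a map, the maximum over $\mathcal Z_{L+1}^\dagger$ is at least the maximum over $\mathcal Z_L^\dagger$, and the inequality follows. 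The learning parameters $(\sigma,\tau,M)$ and the functions $f_i$ are the same in both games, so $\Phi$ must only respect the common grid and tolerance.

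\emph{Construction.} Define the truncation $\rho:H_{L+1}\to H_L$ by deleting the oldest entry, $\rho(h_1,\ldots,h_{L+1})=(h_2,\ldots,h_{L+1})$. This map sends $H_{L+1}$ onto $H_L$, sends $h_{L+1}^0$ to $h_L^0$, and commutes with the transitions:
\[
\rho\bigl((h_2,\ldots,h_{L+1},o)\bigr)=\bigl(\rho(h)_2,\ldots,\rho(h)_L,o\bigr).
\]
Hence $\rho(h^t_{L+1})=h^t_L$ along every realized signal path. For $z_L=(b_L,\pi_L)$, set $b_{i,L+1}(h):=b_{i,L}(\rho(h))$. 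These beliefs lie in the same grid $(\prod_{j\neq i}\Delta_j^M)^{H_{L+1}}$, so they belong to $\mathcal B_i$ for $G_{L+1}$.

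\emph{Key step (the main obstacle).} We must show that $\Br_i^\sigma(b_{i,L+1})=\pi_{i,L}\circ\rho$. Let $W:=V_i^\sigma(\cdot,\pi_{i,L},b_{i,L})$ be the regularized value on $H_L$; it is the fixed point of the soft Bellman operator in Lemma~\ref{lemma: smooth_br}. Because stage payoffs do not depend on the history and the transitions commute with $\rho$, the function $W\circ\rho$ satisfies the soft Bellman equation of $G_{L+1}$ under $b_{i,L+1}$. The corresponding $Q$-values are $Q(\rho(h),\cdot)$, and the softmax policy they induce is $\pi_{i,L}\circ\rho$. The soft Bellman operator is a $\delta$-contraction, so its fixed point is unique, and Lemma~\ref{lemma: smooth_br} then identifies $\pi_{i,L}\circ\rho$ as the unique smooth best response. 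The care needed is in verifying that the expected next-state values agree under the lifted transition kernel, including histories that still contain $\varnothing$.

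\emph{Consistency and values.} Once the key step holds, consistency transfers directly. For every $h\in H_{L+1}$,
\[
\|b_{i,L+1}(h)-\pi_{-i,L+1}(h)\|_2=\|b_{i,L}(\rho(h))-\pi_{-i,L}(\rho(h))\|_2\le\tau,
\]
so $\Phi(z_L)\in\mathcal Z_{L+1}^\dagger$. For the values, couple the two games along the same signal path started from $h^0_{L+1}$ and $h^0_L$. Player $i$'s actions and subjective opponent distributions coincide at every stage, because they depend on the history only through $\rho(h^t_{L+1})=h^t_L$. The stage payoffs are therefore identical, which gives $V_i(\mu_{L+1},\pi_{i,L+1},b_{i,L+1})=V_i(\mu_L,\pi_{i,L},b_{i,L})$ and completes the argument.
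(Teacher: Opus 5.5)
Your proposal is correct and takes essentially the same route as the paper. Both lift each consistent state of $G_L$ to $G_{L+1}$ by dropping the oldest signal, check that the lifted policy is still the smooth best response and that consistency and belief-based values are preserved, and then compare the two max--min values via Corollary~\ref{cor:repeated-selection}; your soft-Bellman fixed-point argument just spells out what the paper asserts in one line, namely that player $i$'s decision problem at $h_{L+1}$ is identical to that at the truncated history.
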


\begin{proof}
Fix $z_L=(b_L,\pi_L)\in\mathcal Z_L^\dagger$. For every $h_{L+1}\in H_{L+1}$, let $h_L':=(h_{L+1,2},\ldots,h_{L+1,L+1})\in H_L$ and define
$\widetilde\pi_{i,L+1}(h_{L+1})
:=\pi_{i,L}(h_L')$, $\widetilde b_{i,L+1}(h_{L+1})
:=b_{i,L}(h_L')$ for all $i\in I$.
Since the same discretization parameter $M$ is used in both games, $\widetilde b_{i,L+1}$ belongs to the belief space of $G_{L+1}$. Moreover, player $i$'s decision problem under $\widetilde b_{i,L+1}$ at $h_{L+1}$ is identical to that under $b_{i,L}$ at $h_L'$. Thus, for every $i\in I$ and $h_{L+1}\in H_{L+1}$,
$\widetilde\pi_{i,L+1}(h_{L+1})=\pi_{i,L}(h_L')=\Br_i^\sigma(b_{i,L};h_L')
=\Br_i^\sigma(\widetilde b_{i,L+1};h_{L+1})$,
and $\|\widetilde b_{i,L+1}(h_{L+1})
-\widetilde\pi_{-i,L+1}(h_{L+1})\|_2 =\|b_{i,L}(h_L')-\pi_{-i,L}(h_L')\|_2\leq\tau$.
Therefore, $\widetilde z_{L+1}:=(\widetilde b_{L+1},\widetilde\pi_{L+1})\in\mathcal Z_{L+1}^\dagger$. Let $\widetilde{\mathcal Z}_{L+1}^\dagger:=\{\widetilde z_{L+1}:z_L\in\mathcal Z_L^\dagger\}\subseteq\mathcal Z_{L+1}^\dagger$ denote the image of this construction. Starting from $h_L^0$ and $h_{L+1}^0$, the transition kernel and the definitions above imply $z_L$ and $\widetilde z_{L+1}$ generate the same action distributions, public-signal distributions, and stage payoffs. Hence,
$V_i(\mu_{L+1},\widetilde\pi_{i,L+1},\widetilde b_{i,L+1})=V_i(\mu_L,\pi_{i,L},b_{i,L})$ for all $i\in I$.
Therefore,
\begin{align*}
&\max_{z_L\in\mathcal Z_L^\dagger}\min_{i\in I}f_i\!\left(V_i(\mu_L,\pi_{i,L},b_{i,L})\right)\\
&=\max_{\widetilde z_{L+1}\in\widetilde{\mathcal Z}_{L+1}^\dagger}\min_{i\in I}f_i\!\left(V_i(\mu_{L+1},\widetilde\pi_{i,L+1},\widetilde b_{i,L+1})\right)\\
&\leq\max_{z_{L+1}\in\mathcal Z_{L+1}^\dagger}\min_{i\in I}f_i\!\left(V_i(\mu_{L+1},\pi_{i,L+1},b_{i,L+1})\right).
\end{align*}
By Corollary~\ref{cor:repeated-selection}, each maximum equals the minimum transformed value at any stochastically stable state of the corresponding game, proving the claim.
\end{proof}
Theorem~\ref{theorem:monotonicity} shows that recall length can affect long-run equilibrium selection even when the learning parameters remain fixed: longer recall weakly increases the minimum transformed value at the selected $\epsilon$-PPE outcomes.

\subsection{Algorithmic Collusion in Cournot Competition}
\label{sec:collusion}
We now apply the equilibrium-selection results to repeated Cournot competition to examine how recall affects algorithmic collusion.
For any $L\ge1$ and let $n:=|I|\ge2$. At stage $t$, firms
simultaneously choose $q_i^t\in A_i:=\{0,\omega,2\omega,\ldots,\bar q\}$,
where $\omega>0$ and $\bar q/\omega\in\mathbb N_+$.
Let $q^t=(q_i^t)_{i\in I}$ and $Q^t:=\sum_iq_i^t$.
The inverse demand function $p:[0,n\bar q]\to\mathbb R_+$ is continuous
and nonincreasing. Firm $i$ has a continuous, nondecreasing
convex cost function $c_i$, with $c_i(0)=0$, and stage profit
$u_i(q):=q_i p(\sum_{j\in I}q_j)-c_i(q_i)$.
The cost functions may differ across firms, and inverse demand
need not be linear.

Each firm knows $q_i^t$ and observes the public signal
$o^t=q^t+\varepsilon^t\in\mathcal O$,
where $\varepsilon^t$ is drawn independently across stages and
independently of $q^t$ from a commonly known finite-support,
mean-zero distribution. We say that public monitoring has
$\eta$-small noise if
\[
\Pr\!\left(\|\varepsilon^t\|_\infty<\frac{\omega}{2}\right)
\ge1-\eta.
\]

Fix a pure strict Cournot--Nash equilibrium
$\mathbf q^N=(q_i^N)_{i\in I}$ of the stage game, and write
$u_i^N:=u_i(\mathbf q^N)$.
We assume that there exists a cooperative
profile $\mathbf q^+\in A$ such that
\begin{equation}\label{eq:cournot-cooperative-improvement}
u_i^+:=u_i(\mathbf q^+)>u_i^N,\qquad \forall i\in I.
\end{equation}
The profile $\mathbf q^+$ need not maximize joint profit.
Condition~\eqref{eq:cournot-cooperative-improvement} requires
only that it strictly improve every firm's profit over the
Nash benchmark. Define the minimum Nash profit and the maximum average joint
profit by
\[
u^N:=\min_i u_i^N,
\qquad
u^C:=\frac1n\max_{q\in A}\sum_i u_i(q).
\]
Condition~\eqref{eq:cournot-cooperative-improvement} implies
$u^C>u^N$. We use these benchmarks to define the collusion index.

\begin{definition}[Collusion Index]
\label{def:cournot-collusion-index}
For a discounted payoff $v$, define its collusion index by
\[
\Gamma(v):=\frac{(1-\delta)v-u^N}{u^C-u^N}.
\]
\end{definition}

The factor $1-\delta$ expresses discounted values in units of
one-stage profit. Applied to the minimum actual value across
firms, the index is zero when Nash profile is played in every stage. A positive index means that minimum profit
exceeds this competitive Nash benchmark.
Full collusion means maximum joint profit, and the index is one.

Throughout the analysis below, firms use the restart protocol of
Section~\ref{sec:induced_markov_games} and a common strictly positive, strictly increasing
exploration function $f_i=f$, fixed across recall lengths.
For every game and learning parameter choice under consideration,
assume that Assumptions~\ref{assump: sigma, tau, M},
\ref{as:learning_parameters}, \ref{assump: bad strategy},
and~\ref{ass:error_control} hold. Therefore, by Corollary~\ref{cor:repeated-selection}, every stochastically stable belief-policy state is consistent and induces an $\epsilon$-PPE of the repeated Cournot game. In comparisons across recall lengths, the discount factor and learning parameters are common to the games being compared.

For each $L$, choose any stochastically stable state
$z_L^\dagger=(b_L^\dagger,\pi_L^\dagger)$ and define its
subjective and actual collusion indices, respectively, by
\[
\begin{aligned}
\Gamma_L
&:=\Gamma\!\left(\min_i
V_i(\mu_L,\pi_{i,L}^\dagger,b_{i,L}^\dagger)\right),\\
\widehat{\Gamma}_L
&:=\Gamma\!\left(\min_iV_i(\mu_L,\pi_L^\dagger)\right).
\end{aligned}
\]
Subjective payoffs
$V_i(\mu_L,\pi_{i,L}^\dagger,b_{i,L}^\dagger)$ may differ
from actual payoffs $V_i(\mu_L,\pi_L^\dagger)$ because
beliefs need not coincide with opponents' policies.
Consequently, unlike $\Gamma_L$, the actual-payoff
collusion index $\widehat{\Gamma}_L(\pi_L^\dagger)$
may vary across stochastically stable states. For every recall length $L$, policy $\pi_i$, and beliefs
$b_i,b_i'$, discounted values satisfy
$\left|V_i(\mu_L,\pi_i,b_i)-V_i(\mu_L,\pi_i,b_i')\right|
\le K_V\max_{h\in H_L}\|b_i(h)-b_i'(h)\|_2$,
where a valid constant, independent of $L$, is
\[
K_V:=
\max_{i\in I}
\frac{
\bigl[\max_{\mathbf q\in A}u_i(\mathbf q)
-\min_{\mathbf q\in A}u_i(\mathbf q)\bigr]
\sqrt{\sum_{j\ne i}|A_j|}
}{2(1-\delta)^2}.
\]
This bound follows from the discounted Bellman equations
and requires no additional smoothness assumption.
Applying it with $b_i=b_{i,L}^\dagger$ and
$b_i'=\pi_{-i,L}^\dagger$, consistency gives
\begin{equation}\label{eq:cournot-actual-index-error}
\left|\widehat{\Gamma}_L(\pi_L^\dagger)-\Gamma_L\right|
\le\frac{(1-\delta)K_V\tau}{u^C-u^N}.
\end{equation}
For a fixed stage game and discount factor, this difference
vanishes as $\tau\to0$, uniformly over recall lengths
and stochastically stable states.

\begin{theorem}
\label{thm:cournot-collusion}
\textup{(i) Recall monotonicity.}
For $L\ge1$, the belief-based collusion indices satisfy
$\Gamma_L\le\Gamma_{L+1}$. For any policies
$\pi_L^\dagger$ and $\pi_{L+1}^\dagger$ induced by
stochastically stable states in the respective games,
the actual-payoff collusion indices satisfy
\begin{equation}\label{eq:cournot-actual-monotonicity}
\widehat{\Gamma}_{L+1}(\pi_{L+1}^\dagger)
\ge\widehat{\Gamma}_L(\pi_L^\dagger)
-\frac{2(1-\delta)K_V\tau}{u^C-u^N}.
\end{equation}

\textup{(ii) Inevitable collusion.}
There exist $\bar L\ge1$, $\bar\delta\in(0,1)$, and
$\bar\eta>0$ such that, for any game with recall
$L\ge\bar L$, discount factor $\delta>\bar\delta$, and
$\eta$-small monitoring noise with $\eta\le\bar\eta$,
learning parameters can be chosen so that every
stochastically stable state induces a policy
$\pi_L^\dagger$ satisfying
\[
\widehat{\Gamma}_L(\pi_L^\dagger)>0.
\]
\textup{(iii) Near-full collusion with symmetric firms.} Suppose all firms have the same cost function and $\mathbf q^N$ is symmetric. Then for every $\varepsilon>0$, there exist $\bar L\ge1$, $\bar\delta\in(0,1)$, and $\bar\eta>0$ such that, whenever $L\ge\bar L$, $\delta>\bar\delta$, and monitoring has $\eta$-small noise with $\eta\le\bar\eta$, learning parameters can be chosen so that every stochastically stable state induces a policy $\pi_L^\dagger$ satisfying
\begin{equation}\label{eq:cournot-full-selection-bound}
\widehat{\Gamma}_L(\pi_L^\dagger)\ge\frac{\max_{q\in A_i}u_i(q,\ldots,q)-u^N}{u^C-u^N}-\varepsilon .
\end{equation}
In particular, if $\max_{q\in A_i}u_i(q,\ldots,q)=u^C$, that is, if some symmetric profile maximizes joint profit on the grid, then $\widehat{\Gamma}_L(\pi_L^\dagger)\ge1-\varepsilon$.
\end{theorem}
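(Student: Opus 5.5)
For part (i), I will apply Theorem~\ref{theorem:monotonicity} with the common transformation $f_i=f$. Because $f$ is strictly increasing, $\min_i f(V_i)=f(\min_i V_i)$. The inequality between minimum transformed values therefore transfers to $\min_i V_i(\mu_L,\pi_{i,L}^\dagger,b_{i,L}^\dagger)\le \min_i V_i(\mu_{L+1},\pi_{i,L+1}^\dagger,b_{i,L+1}^\dagger)$. Since $\Gamma$ is affine and increasing in $v$, this gives $\Gamma_L\le\Gamma_{L+1}$. For the actual-payoff statement I will apply \eqref{eq:cournot-actual-index-error} once in each game, obtaining $\widehat\Gamma_{L+1}\ge\Gamma_{L+1}-c\tau\ge\Gamma_L-c\tau\ge\widehat\Gamma_L-2c\tau$ with $c=(1-\delta)K_V/(u^C-u^N)$. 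One detail needs checking: $\min_i$ of actual values and $\min_i$ of subjective values differ by at most $\max_i$ of the individual differences. This holds because the min operator is $1$-Lipschitz in the sup norm.

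For part (ii), the strategy is to exhibit a consistent state whose belief-based minimum value is bounded away above the Nash benchmark. By Corollary~\ref{cor:repeated-selection}, the selected belief-based index $\Gamma_L$ is then at least that state's index, and \eqref{eq:cournot-actual-index-error} with $\tau$ small finishes the argument. I will use a grim-trigger-like bounded-recall strategy. Each firm plays $q_i^+$ if the last $L$ signals all round (to the nearest grid point) to $\mathbf q^+$, or if the history is the initial empty one. Otherwise it plays $q_i^N$, with the punishment lasting until $L$ clean signals have accumulated, i.e.\ an $L$-period punishment phase. With $\eta$-small noise, rounding identifies $q^t$ with probability at least $1-\eta$. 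Under near-perfect monitoring, this profile (call it $\pi^*$) is a strict PPE of $\mathcal G$ with bounded recall once $L$ is large and $\delta$ is close to $1$. The argument is standard: a one-shot deviation gains at most $(1-\delta)\cdot$ a bounded amount, while it loses roughly $\sum_{\ell=1}^{L}\delta^\ell(u_i^+-u_i^N)$. Here $L$ must be large enough that the punishment outweighs the gain. Noise-triggered punishments occur with per-period probability at most $\eta$, so the normalized values stay within $O(L\eta)$ of $u_i^+$. Choosing $\bar\eta$ after $\bar L$ and $\bar\delta$ keeps all incentive constraints strict by a margin $g>0$, and keeps $(1-\delta)V_i(\pi^*)>u_i^N$ by a margin. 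A subtlety is that with $L$-recall and a punishment tied to signals, the profile must be a function of $h_L$ only. Defining it as ``play $\mathbf q^+$ iff every non-$\varnothing$ entry of $h_L$ rounds to $\mathbf q^+$'' satisfies this.

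The main obstacle is that learning only reaches smooth best responses to grid beliefs, not $\pi^*$ itself. I need a consistent state $z$ with $\pi\approx\pi^*$. I will set $b_i(h)=$ the grid point nearest $\pi^*_{-i}(h)$; the pure profile $\pi^*$ lies exactly on the grid. Because $\pi^*$ is strict with margin $g$, the soft $Q$-values under $b_i$ single out $\pi_i^*(h)$ with gap about $g/(1-\delta)$ uniformly. The entropy-regularized best response then places mass at least $1-|A_i|e^{-g'/\sigma}$ on $\pi_i^*(h)$, so $\|\Br_i^\sigma(b_i)(h)-\pi_i^*(h)\|$ is tiny for small $\sigma$. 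Choosing $\sigma\ll\tau$ (compatible with Assumption~\ref{as:learning_parameters}, then $M$ large) gives $\|b_i(h)-\pi_{-i}(h)\|_2\le\tau$, so $z\in\mathcal Z_L^\dagger$. Its subjective values equal $V_i(\mu_L,\pi_i,\pi^*_{-i})$, which are close to $V_i(\pi^*)$ by Lipschitz continuity. Hence $\Gamma_L\ge\Gamma(\min_iV_i(\pi^*))-o(1)>0$, and $\widehat\Gamma_L>0$ for every stochastically stable state after taking $\tau$ small. Part (iii) uses the same construction with $\mathbf q^+=(q^*,\dots,q^*)$, where $q^*$ maximizes $u_i(q,\dots,q)$. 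If this profile does not strictly dominate Nash, the bound is trivial. As $\delta\to1$ and $\eta\to0$, the min normalized value tends to $u_i(q^*,\dots,q^*)$, which yields \eqref{eq:cournot-full-selection-bound} after absorbing all errors into $\varepsilon$.
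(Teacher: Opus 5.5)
Your part (i) is the paper's argument. Your outline for (ii)--(iii) also matches the paper: a strict bounded-recall trigger equilibrium, beliefs $b_i=\pi_{-i}$ on the grid, softmax concentration for consistency, then the selection characterization and the belief-to-actual error bound. The gap is in the equilibrium construction.

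Your rule ``play $\mathbf q^{+}$ iff every non-$\varnothing$ entry of $h_L$ rounds to $\mathbf q^{+}$'' does not produce an $L$-period punishment. While punishing, firms play $\mathbf q^N$, whose signals round to $\mathbf q^N\neq\mathbf q^{+}$ with probability at least $1-\eta$. So the window never clears, and the profile is essentially grim trigger. At the cooperative history this gives $(1-\delta)V_i=[(1-\delta)u_i^{+}+\delta p_+u_i^N]/(1-\delta+\delta p_+)$, which tends to $u_i^N$ as $\delta\to1$ for any fixed $p_+>0$. Hence your claim that values stay within $O(L\eta)$ of $u_i^{+}$ is false, and no single $\bar\eta$ serves all $\delta>\bar\delta$.

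The intended $L$-period version fails too. Its noise-induced loss is of order $\min\{L\eta,1\}\Delta u_i$, which grows with recall, so choosing $\bar\eta$ after $\bar L$ does not cover all $L\ge\bar L$. Part (iii) needs $(1-\delta)V_i\ge u_i^{+}-\epsilon_u$ uniformly over $L\ge\bar L$, $\delta>\bar\delta$, $\eta\le\bar\eta$, so it is not established. In the grim version, even the strict incentives you need for the $Q$-gap are unclear at some off-path windows. Take a window whose only unclean entry is the oldest: a deviation can raise the chance that the next signal rounds to $\mathbf q^{+}$, which would restore cooperation.

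The missing idea is the paper's trigger set.
\begin{itemize}
\item Flag only signals at distance at least $\omega/2$ from \emph{both} $\mathbf q^{+}$ and $\mathbf q^N$.
\item For every $L\ge\bar L$, read only the last $\bar L$ signals, with $\bar L$ fixed by $\bar L\Delta u_i>\max_{q_i'}u_i(q_i',\mathbf q^{+}_{-i})-u_i^{+}$.
\end{itemize}
Compliant play in either phase is then unflagged with probability at least $1-\eta$. Punishment therefore ends after $\bar L$ periods, and the normalized loss is at most $2\eta\bar L\Delta u_i$, independent of $L$ and $\delta$.

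For unilateral deviations from either phase to be flagged, you need $\mathbf q^{+}$ and $\mathbf q^N$ to differ in at least two coordinates. This holds: if only firm $i$'s output differed, $u_i^{+}>u_i^N$ would contradict strictness of $\mathbf q^N$. Another firm's coordinate then keeps the signal away from both profiles. This design also makes punishment-phase incentives immediate, since a deviation can only raise the chance of restarting punishment.

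Finally, the degenerate case in (iii) is not trivial. If $\max_q u_i(q,\ldots,q)=u^N$, you still must show $\widehat\Gamma_L\ge-\varepsilon$. That requires running the same approximation-plus-selection argument on the stationary Nash profile.
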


Part~\textup{(i)} shows
that the belief-based collusion index $\Gamma_L$
weakly increases with recall length, with approximate
monotonicity for the actual-payoff collusion index
$\widehat{\Gamma}_L(\pi_L^\dagger)$. This indicates that longer recall promotes collusion.
Part~\textup{(ii)} characterizes conditions under which collusion is inevitable in that every stochastically stable state has
a positive actual-payoff collusion index, ruling out
the fully competitive Nash benchmark.
In part~\textup{(iii)}, with symmetric firms, we show that the actual-payoff collusion index approaches
one, yielding joint profits arbitrarily close to the
monopoly level. These results indicate that, although playing the
competitive Nash equilibrium in every stage remains a PPE of the repeated
game, the learning dynamics favor collusive equilibria. The extent of collusion, as measured by the
belief-based collusion index,
increases with firms' recall length. With symmetric firms, selected
outcomes can yield joint profits arbitrarily close to
the monopoly level.

\begin{proof}
\emph{Part~\textup{(i)}.}
Theorem~\ref{theorem:monotonicity}, and strict monotonicity
of $f$ imply that the selected minimum subjective value $f(\min_{i\in I} V_i(\mu_L, \pi_{i, L}^{\dagger}, b^{\dagger}_{i, L}))$ is
nondecreasing in $L$. Since the function $\Gamma(\cdot)$ is increasing,
$\Gamma_L\le\Gamma_{L+1}$. Applying
\eqref{eq:cournot-actual-index-error} at both recall lengths gives
\[
\widehat{\Gamma}_{L+1}(\pi_{L+1}^\dagger) \ge\Gamma_{L+1} -\frac{(1-\delta)K_{V}\tau}{u^C-u^N}\ge\Gamma_L -\frac{(1-\delta)K_{V}\tau}{u^C-u^N} \ge\widehat{\Gamma}_L(\pi_L^\dagger) -\frac{2(1-\delta)\tau K_{V}}{u^C-u^N}.
\]

\emph{Part~\textup{(ii)}.}
We first construct a strict equilibrium with normalized
profits within any prescribed $\epsilon_u>0$ of $u_i^+$,
then approximate it by a consistent belief-policy state.
Write $\Delta u_i:=u_i^+-u_i^N>0$ and choose an integer
$\bar L\ge1$ satisfying
\begin{equation}\label{eq:cournot-punishment-length}
\bar L\Delta u_i>
\max_{q_i'\in A_i}\{u_i(q_i',\mathbf q_{-i}^+)-u_i^+\},
\qquad \forall i.
\end{equation}
Define the trigger set
\[
\mathcal D:=\left\{o\in\mathcal O:
\min\{\|o-\mathbf q^+\|_\infty,\|o-\mathbf q^N\|_\infty\}
\ge\frac{\omega}{2}\right\}.
\]
For $L\ge\bar L$, the policy $\pi$ prescribes $\mathbf q^N$
if any of the last $\bar L$ signals belongs to $\mathcal D$,
and $\mathbf q^+$ otherwise. Empty entries do not trigger
punishment. Thus play begins with cooperation, and each
trigger starts or restarts $\bar L$ punishment periods.

The profiles $\mathbf q^+$ and $\mathbf q^N$ must differ in at
least two coordinates: otherwise the sole changing firm
would profitably deviate from the Nash profile, contradicting
Nash optimality.
Suppose the prescribed profile is $\mathbf q^+$ and firm
$i$ deviates to $q_i'\ne q_i^+$. Because outputs lie on
a grid with spacing $\omega$, the deviator's output
differs from $q_i^+$ by at least $\omega$.
Moreover, $\mathbf q^+$ and $\mathbf q^N$ differ in at
least two coordinates, so some firm $j\ne i$ still
produces $q_j^+\ne q_j^N$. Its output therefore differs
from $q_j^N$ by at least $\omega$.

When $\|\varepsilon^t\|_\infty<\omega/2$, these two
differences remain detectable:
$|o_i^t-q_i^+|>\omega/2,
|o_j^t-q_j^N|>\omega/2$.
The observed signal is consequently outside the
$\omega/2$-neighborhoods of both prescribed profiles,
so it belongs to the trigger set $\mathcal D$.
The same argument applies to a deviation from
$\mathbf q^N$. Under compliance, the signal remains
within $\omega/2$ of the prescribed profile and hence
does not trigger punishment.

Let $\nu\in\{0,\ldots,\bar L\}$ be the remaining punishment
periods, with $\nu=0$ denoting cooperation, and let
$V_{i,\nu}$ be firm $i$'s continuation value under $\pi$ with $\nu$ remaining punishment periods.
Write $p_+:=\Pr(o\in\mathcal D\mid\mathbf q^+)$ and
$p_N:=\Pr(o\in\mathcal D\mid\mathbf q^N)$; both are at most
$\eta$ from the observation accuracy assumption. The Bellman equations are
\begin{align*}
V_{i,0}&=u_i^+
+\delta[(1-p_+)V_{i,0}+p_+V_{i,\bar L}],\\
V_{i,\nu}&=u_i^N
+\delta[(1-p_N)V_{i,\nu-1}+p_NV_{i,\bar L}],
\quad 1\le\nu\le\bar L.
\end{align*}
Subtracting the second equation from the first and iterating
gives
\begin{equation}\label{eq:cournot-value-difference}
V_{i,0}-V_{i,\nu}
=
\frac{\Delta u_i\sum_{t=0}^{\nu-1}[\delta(1-p_N)]^t}
{1-\delta(p_N-p_+)\sum_{t=0}^{\bar L-1}[\delta(1-p_N)]^t}.
\end{equation}
Since $p_N,p_+\in[0,\eta]$, we have $|p_N-p_+|\le\eta$.
Moreover, each term in the denominator's geometric sum
is at most one, so the denominator differs from one by
at most $\eta\bar L$. Thus, for $\eta\bar L<1/2$, the denominator
lies in $[1-\eta\bar L,1+\eta\bar L]$ and is positive.
The numerator strictly increases with $\nu$, while the
denominator is independent of $\nu$, giving
$V_{i,0}>\cdots>V_{i,\bar L}$.
Using $p_N\le\eta$ to bound the numerator from below
and bounding its $\bar L$ summands by one from above,
we obtain
\begin{equation}\label{eq:cournot-value-bounds}
\frac{\Delta u_i}{1+\eta\bar L}
\sum_{t=0}^{\bar L-1}[\delta(1-\eta)]^t
\le V_{i,0}-V_{i,\bar L}\le2\bar L\Delta u_i.
\end{equation}

At a cooperation history, the payoff advantage of compliance
over deviating once to $q_i'\ne q_i^+$ and returning to $\pi$ is

\begin{align}
&u_i^+-u_i(q_i',\mathbf q_{-i}^+)+\delta[\Pr(o\in\mathcal D\mid q_i',\mathbf q_{-i}^+)-p_+]
(V_{i,0}-V_{i,\bar L})\notag\\
&\ge u_i^+-u_i(q_i',\mathbf q_{-i}^+)
+\frac{\delta(1-2\eta)\Delta u_i}{1+\eta\bar L}
\sum_{t=0}^{\bar L-1}[\delta(1-\eta)]^t. \label{eq:cournot-cooperation-incentive}
\end{align}
By \eqref{eq:cournot-punishment-length}, this lower bound
is strictly positive at $(\delta,\eta)=(1,0)$.
Continuity and finiteness of the action sets give
$\bar\delta\in(0,1)$ and $\bar\eta>0$ for which it remains positive
for every firm and deviation whenever
$\delta>\bar\delta$ and $\eta\le\bar\eta$. At a punishment history with $\nu\ge1$ periods remaining,
the corresponding advantage is
\begin{align}
&u_i^N-u_i(q_i',\mathbf q_{-i}^N)+\delta[\Pr(o\in\mathcal D\mid q_i',\mathbf q_{-i}^N)-p_N]
(V_{i,\nu-1}-V_{i,\bar L})\notag\\
&\ge u_i^N-u_i(q_i',\mathbf q_{-i}^N)>0,
\qquad q_i'\ne q_i^N, \label{eq:cournot-punishment-incentive}
\end{align}
using $\eta<1/2$, the ordering of continuation values,
and strictness of $\mathbf q^N$.
Thus the one-shot deviation principle establishes $\pi$
as a stationary Nash equilibrium. Since $V_i(\mu_L,\pi)=V_{i,0}$, the first Bellman equation gives
$0\le u_i^+-(1-\delta)V_i(\mu_L,\pi)
=\delta p_+(V_{i,0}-V_{i,\bar L})\le2\eta\bar L\Delta u_i<\epsilon_u$,
after reducing $\bar\eta$ so that
$\bar\eta\bar L<1/2$ and
$2\bar\eta\bar L\max_i\Delta u_i<\epsilon_u$.
These choices do not depend on $L\ge\bar L$.

\emph{Approximation by a consistent belief-policy state.}
Fix $L\ge\bar L$, $\delta>\bar\delta$, and an observation signal accuracy
with $\eta\le\bar\eta$. Let $g>0$ be the minimum one-shot
deviation advantage in
\eqref{eq:cournot-cooperation-incentive} and
\eqref{eq:cournot-punishment-incentive}, over all firms,
histories, and alternative actions.
Set $b_i=\pi_{-i}$ and
$\widetilde\pi_i=\Br_i^\sigma(b_i)$.
Because $\pi$ is pure, every $b_i$ belongs to the belief grid
for every integer $M$.
The entropy bound from Appendix~\ref{appendix:prop1} gives
\begin{equation}\label{eq:cournot-regularized-value}
0\le V_i^{\sigma,*}(h,b_i)-V_i(h,\pi)
\le\frac{\sigma\log|A_i|}{1-\delta}.
\end{equation}
For the prescribed output $q_i\in\{q_i^+,q_i^N\}$,
the unregularized $Q$-value gap is at least $g$.
Regularization reduces this gap by at most
$\delta\sigma\log|A_i|/(1-\delta)$.
If this quantity is at most $g/2$, the softmax formula gives
$1-\widetilde\pi_i(h,q_i)
\le(|A_i|-1)e^{-g/(2\sigma)}$.
All firms have the same output grid. Since $b_i=\pi_{-i}$,
 $\|b_i(h)-\widetilde\pi_{-i}(h)\|_2^2
\le2\sum_{j\ne i}[1-\widetilde\pi_j(h,q_j)]^2\le2(n-1)(|A_i|-1)^2e^{-g/\sigma}$.
Also, SBR optimality and the entropy bound imply
$V_i(\mu_L,\widetilde\pi_i,b_i)
\ge V_{i,0}-\sigma\log|A_i|/(1-\delta)$.

For a given $\epsilon_v>0$, first choose $\tau>0$ satisfying
the restriction in Assumption~\ref{as:learning_parameters}
and $K_{V}\tau\le\epsilon_v/2$.
Then choose $\sigma>0$ satisfying that assumption and
\[
\begin{gathered}
\frac{\sigma\log|A_i|}{1-\delta}
\le\min\{g/(2\delta),\epsilon_v/2\},\\
\sqrt{2(n-1)}(|A_i|-1)e^{-g/(2\sigma)}\le\tau.
\end{gathered}
\]
Finally choose $M$ sufficiently large to satisfy
Assumption~\ref{as:learning_parameters}.
Then $(b,\widetilde\pi)\in\mathcal Z_L^\dagger$ and
$V_i(\mu_L,\widetilde\pi_i,b_i)\ge V_{i,0}-\epsilon_v/2$.
The standing assumptions allow us to apply the selection
characterization to these dynamics. For every stochastically stable state, the selection
characterization therefore gives $\min_iV_i(\mu_L,\pi_{i,L}^\dagger,b_{i,L}^\dagger)
\ge\min_iV_i(\mu_L,\widetilde\pi_i,b_i)\ge\min_iV_{i,0}-\epsilon_v/2.$
For actual values, consistency further gives
$\min_iV_i(\mu_L,\pi_L^\dagger)
\ge\min_iV_i(\mu_L,\pi_{i,L}^\dagger,b_{i,L}^\dagger)
-K_{V}\tau\ge\min_iV_{i,0}-\epsilon_v$.
Combining these inequalities yields the actual-profit bound
\begin{equation}\label{eq:cournot-general-selection-bound}
\widehat{\Gamma}_L(\pi_L^\dagger)
\ge\frac{\min_i u_i^+-u^N-\epsilon_u-(1-\delta)\epsilon_v}
{u^C-u^N}.
\end{equation}
Let $\gamma:=\min_i u_i^+-u^N>0$.
Choose $\epsilon_u=\gamma/4$ in the equilibrium construction.
This fixes $\bar L$, $\bar\delta$, and $\bar\eta$.
For each game satisfying these thresholds, choose
$\epsilon_v=\gamma/[4(1-\delta)]$ and learning parameters
as above. Then
\[
\widehat{\Gamma}_L(\pi_L^\dagger)
\ge\frac{\gamma}{2(u^C-u^N)}>0,
\]
which proves part~\textup{(ii)}.

\emph{Part~\textup{(iii)}.}
Use a symmetric profile that attains $\max_{q\in A_i}u_i(q,\ldots,q)$ as the
cooperative profile in the preceding construction.
Since all firms have the same cost function and $\mathbf q^N$ is symmetric, it gives every firm profit $\max_{q\in A_i}u_i(q,\ldots,q)>u^N=u_i^N$, so all steps above apply with $\min_iu_i^+=\max_{q\in A_i}u_i(q,\ldots,q)$.
For a given $\varepsilon>0$, choose
\[
\epsilon_u=\frac{\varepsilon(u^C-u^N)}{2},
\qquad
\epsilon_v=\frac{\varepsilon(u^C-u^N)}{2(1-\delta)}.
\]
The first choice determines the recall, discount, and noise
thresholds. For each admissible game, the second determines
the learning-parameter choices. Substitution into
\eqref{eq:cournot-general-selection-bound} gives
\eqref{eq:cournot-full-selection-bound}, whose right-hand side equals $1-\varepsilon$ when $\max_{q\in A_i}u_i(q,\ldots,q)=u^C$.

\end{proof}

\section{Conclusion}

This paper characterized equilibrium selection under episodic learning with belief testing and value-dependent exploration in discounted Markov games. The analysis accommodates any consistency test satisfying the required error bounds, which we establish for a generalized likelihood ratio test. The stochastically stable states maximize the minimum transformed value among consistent states and induce approximate stationary Nash equilibria under suitable learning parameters. By representing repeated games with bounded public recall as finite-state Markov games, we showed that longer recall does not decrease the selected minimum transformed value. In Cournot competition, the belief-based collusion index is therefore nondecreasing in recall, while the actual-payoff index satisfies an approximate monotonicity bound. With sufficiently long recall, patient firms, sufficiently accurate monitoring, and suitable learning parameters, every stochastically stable state yields collusion. For symmetric firms, the selected outcomes can yield joint profits arbitrarily close to the monopoly level. Thus, value-dependent exploration can select collusive outcomes without explicitly targeting joint-profit maximization. The current implementation requires model information and coordinated epochs and restarts, while the growth of the state space limits scalability, particularly as public recall increases. Developing decentralized and scalable implementations that preserve the equilibrium-selection criterion is an important direction for future work.

\appendix
\section{Proof of Proposition~\ref{lemma: sigma, tau, M}}
\label{appendix:prop1}

\paragraph{Preliminaries for Lemma~\ref{lemma: smooth_br}}
Fix $i\in I$ and $\sigma>0$. For any $b_i\in\mathcal B_i$, apply the regularized MDP framework of \cite{geist2019theory} with reward $U_i(s,a_i,b_i(s))$, transition kernel $P(s'\mid s,a_i,b_i(s))$, discount factor $\delta$, and regularizer $-\sigma\operatorname{Ent}$. The same construction extends $V_i^\sigma$ and $\Br_i^\sigma$ to any $b_i\in(\Delta_{-i})^{|S|}$, as needed below.

Define the optimal regularized value by
\[
V_i^{\sigma,*}(s,b_i)
:=\max_{\pi_i\in(\Delta_i)^{|S|}}V_i^\sigma(s,\pi_i,b_i).
\]
The associated optimal regularized $Q$-value is
\[
Q_i^{\sigma,*}(s,a_i,b_i)
:=U_i(s,a_i,b_i(s)) +\delta\sum_{s'\in S}P(s'\mid s,a_i,b_i(s))V_i^{\sigma,*}(s',b_i).
\]

Vector norms below are over $s$ for $V_i^{\sigma,*}$ and $(s,a_i)$ for $Q_i^{\sigma,*}$.

For any $v\in\mathbb R^{|S|}$, define the regularized Bellman optimality operator by
\[
\begin{aligned}
&(\mathcal T_{i,b_i}^\sigma v)(s)
:=\max_{\pi_i(s)\in\Delta_i}\{
\sum_{a_i\in A_i}\pi_i(s,a_i)U_i(s,a_i,b_i(s))\\
&+\delta\sum_{a_i\in A_i,\,s'\in S}\pi_i(s,a_i)
P(s'\mid s,a_i,b_i(s))v(s') +\sigma\operatorname{Ent}(\pi_i(s))\}.
\end{aligned}
\]
By \citet[Proposition~2(iii) and Theorem~1]{geist2019theory}, $\mathcal T_{i,b_i}^\sigma$ is a $\delta$-contraction in the supremum norm with fixed point $V_i^{\sigma,*}(\cdot,b_i)$, attained by the unique optimal policy $\Br_i^\sigma(b_i)$. Proposition~1(i) and the negative-entropy example in Section~2.2 of that paper give
\begin{equation}\label{eq:regularized_br_formula}
\Br_i^\sigma(b_i)(s,a_i)
=\frac{\exp\!\left(Q_i^{\sigma,*}(s,a_i,b_i)/\sigma\right)}
{\sum_{a_i'\in A_i}\exp\!\left(Q_i^{\sigma,*}(s,a_i',b_i)/\sigma\right)}.
\end{equation}

Since the value update in Lemma~\ref{lemma: smooth_br} is $V_{i,\alpha+1}=\mathcal T_{i,b_i}^\sigma V_{i,\alpha}$, the stated convergence follows from the contraction property and continuity of the softmax map.

\paragraph{Proof of Proposition~\ref{lemma: sigma, tau, M}.}
Using the notation and contraction property established above, we first establish Lipschitz continuity of the smooth best response with respect to beliefs.

\begin{lemma}\label{lemma: Lipschitz}
For any $i\in I$, $\sigma>0$, and $b_i,b_i'\in(\Delta_{-i})^{|S|}$,
\begin{align}\label{eq: logit lipschitz in Br}
\|\Br_i^\sigma(b_i)-\Br_i^\sigma(b_i')\|_2
\leq\frac{K_{\Br}}{\sigma}\|b_i-b_i'\|_2.
\end{align}
\end{lemma}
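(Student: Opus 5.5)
The plan is to factor the smooth best response as the composition $b_i\mapsto Q_i^{\sigma,*}(\cdot,\cdot,b_i)\mapsto \Br_i^\sigma(b_i)$. Formula \eqref{eq:regularized_br_formula} makes the second map a statewise softmax with temperature $\sigma$, so the argument reduces to two bounds: a Lipschitz bound for the optimal regularized $Q$-function in the belief, and a Lipschitz bound for the softmax map.

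\emph{Step 1: perturbation of the optimal regularized value.} Fix $b_i,b_i'$ and write $\Delta:=\max_s\|b_i(s)-b_i'(s)\|$ in the norm used in Assumption~\ref{assump: sigma, tau, M}, which is $\le\|b_i-b_i'\|_2$.

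Compare the Bellman operators $\mathcal T^\sigma_{i,b_i}$ and $\mathcal T^\sigma_{i,b_i'}$ applied to the same vector $v$. The difference of two maxima is at most the supremum of the difference of the objectives, so
\[
\|\mathcal T^\sigma_{i,b_i}v-\mathcal T^\sigma_{i,b_i'}v\|_\infty\le K_U\Delta+\delta K_P\Delta\|v\|_\infty .
\]
The two terms come from the Lipschitz bounds on $U_i$ and on $P$ in Assumption~\ref{assump: sigma, tau, M}. The entropy terms are identical and cancel.

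Apply this with $v=V_i^{\sigma,*}(\cdot,b_i')$. Its sup norm is at most $\overline V_\sigma$, because entropy is at most $\log|A_i|$.

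By the standard contraction perturbation argument,
\[
\|V^{\sigma,*}(b_i)-V^{\sigma,*}(b_i')\|_\infty\le \frac{(K_U+\delta K_P\overline V_\sigma)\Delta}{1-\delta}=K_Q\Delta .
\]
This uses $\|V-V'\|\le\|\mathcal T V-\mathcal T V'\|+\|\mathcal T V'-\mathcal T' V'\|$, the first term being at most $\delta\|V-V'\|$.

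Plugging into the definition of $Q_i^{\sigma,*}$ gives a bound of the form $(K_U+\delta K_P\overline V_\sigma+\delta K_Q)\Delta$. This is slightly larger than $K_Q\Delta$, so I will need to check the constant bookkeeping. Alternatively, I would run the contraction argument directly on $Q$: the map $Q\mapsto U+\delta P\,\sigma\log\sum\exp(Q/\sigma)$ is a $\delta$-contraction, and the same perturbation argument then yields $\|Q^{\sigma,*}(b_i)-Q^{\sigma,*}(b_i')\|_\infty\le K_Q\Delta$, which I expect is the intended route.

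\emph{Step 2: softmax Lipschitz constant.} The Jacobian of $x\mapsto\mathrm{softmax}(x/\sigma)$ is $(\mathrm{diag}(p)-pp^\top)/\sigma$. Its operator $2$-norm is at most $1/(2\sigma)$ (the bound in e.g.\ Gao--Pavel), which gives, at each state,
\[
\|\Br_i^\sigma(b_i)(s)-\Br_i^\sigma(b_i')(s)\|_2\le\frac{1}{2\sigma}\|Q(s,\cdot)-Q'(s,\cdot)\|_2\le\frac{\sqrt{|A_i|}}{2\sigma}K_Q\Delta .
\]

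Summing the squares over states contributes a factor $\sqrt{|S|}$. Bounding $\Delta\le\|b_i-b_i'\|_2$ then yields exactly $K_{\Br}/\sigma=K_Q\sqrt{|S|\max|A_i|}/(2\sigma)$.

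The main obstacle is matching constants precisely: getting $K_Q$ rather than a larger constant for the $Q$-perturbation, and reconciling the $\infty$-norm in Assumption~\ref{assump: sigma, tau, M} with the $2$-norms in the statement. I would handle the $Q$-perturbation through the $Q$-space contraction described in Step 1.
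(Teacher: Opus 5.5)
Your proposal is correct and follows the paper's proof essentially step for step: the operator-perturbation bound plus the contraction argument for $V_i^{\sigma,*}$, the resulting bound on $Q_i^{\sigma,*}$, a $1/(2\sigma)$ Lipschitz bound for the statewise softmax, and a norm conversion that gives exactly $K_{\Br}/\sigma$. The constant you flag in Step 1 is not too large: $K_U+\delta K_P\overline V_\sigma+\delta K_Q=(1-\delta)K_Q+\delta K_Q=K_Q$, so the direct route already gives $K_Q$ and the $Q$-space contraction is unnecessary, though also valid; your $1/(2\sigma)$ softmax constant is most simply justified by noting that every row of $\operatorname{diag}(p)-pp^\top$ has absolute sum $2p_k(1-p_k)\le 1/2$.
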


\begin{proof}
Since $0\leq\operatorname{Ent}(\pi_i(s))\leq\log|A_i|\leq\log(\max_{j\in I}|A_j|)$,
\[
\|V_i^{\sigma,*}(\cdot,b_i)\|_\infty
\leq\frac{U_{\max}+\sigma\log(\max_{j\in I}|A_j|)}{1-\delta}
=\overline V_\sigma.
\]
Then Assumption~\ref{assump: sigma, tau, M} gives
\begin{equation}\label{eq:regularized_operator_lipschitz}
\begin{aligned}
&\|\mathcal T_{i,b_i}^{\sigma}v-
\mathcal T_{i,b_i'}^{\sigma}v\|_\infty \leq\max_{s,a_i}|U_i(s,a_i,b_i(s))-U_i(s,a_i,b_i'(s))|\\
&+\delta\|v\|_\infty\max_{s,a_i}\!
\|P(\cdot\mid s,a_i,b_i(s))-P(\cdot\mid s,a_i,b_i'(s))\|_1\\
&\leq (K_U+\delta K_P\|v\|_\infty)
\|b_i-b_i'\|_\infty.
\end{aligned}
\end{equation}
Therefore,
\[
\begin{aligned}
&\|V_i^{\sigma,*}(\cdot,b_i)-
V_i^{\sigma,*}(\cdot,b_i')\|_\infty =\|\mathcal T_{i,b_i}^\sigma
V_i^{\sigma,*}(\cdot,b_i)-
\mathcal T_{i,b_i'}^\sigma
V_i^{\sigma,*}(\cdot,b_i')\|_\infty\\
&\leq\|\mathcal T_{i,b_i}^\sigma
V_i^{\sigma,*}(\cdot,b_i)-
\mathcal T_{i,b_i}^\sigma
V_i^{\sigma,*}(\cdot,b_i')\|_\infty
{}+\|\mathcal T_{i,b_i}^\sigma
V_i^{\sigma,*}(\cdot,b_i')-
\mathcal T_{i,b_i'}^\sigma
V_i^{\sigma,*}(\cdot,b_i')\|_\infty\\
&\leq\delta
\|V_i^{\sigma,*}(\cdot,b_i)-
V_i^{\sigma,*}(\cdot,b_i')\|_\infty {}+(K_U+\delta K_P\overline V_\sigma)
\|b_i-b_i'\|_\infty.
\end{aligned}
\]
In the last inequality, the first term follows from the $\delta$-contraction property, and the second follows from \eqref{eq:regularized_operator_lipschitz}.
Rearranging and using the definition of $K_Q$ gives
\begin{align}\label{eq:regularized_value_lipschitz}
\|V_i^{\sigma,*}(\cdot,b_i)-
V_i^{\sigma,*}(\cdot,b_i')\|_\infty
\leq K_Q\|b_i-b_i'\|_\infty.
\end{align}
By the definition of $Q_i^{\sigma,*}$, Assumption~\ref{assump: sigma, tau, M}, and \eqref{eq:regularized_value_lipschitz},
\[
\begin{aligned}
&\|Q_i^{\sigma,*}(\cdot,\cdot,b_i)-
Q_i^{\sigma,*}(\cdot,\cdot,b_i')\|_\infty\\
&\leq(K_U+\delta K_P\overline V_\sigma)
\|b_i-b_i'\|_\infty +\delta\|V_i^{\sigma,*}(\cdot,b_i)-
V_i^{\sigma,*}(\cdot,b_i')\|_\infty\\
&\leq(K_U+\delta K_P\overline V_\sigma
+\delta K_Q)
\|b_i-b_i'\|_\infty\\
&=K_Q\|b_i-b_i'\|_\infty.
\end{aligned}
\]
At each state $s$, the softmax map in \eqref{eq:regularized_br_formula} is $1/(2\sigma)$-Lipschitz with respect to the $Q$-value vector $Q_i^{\sigma,*}(s,\cdot,b_i)$ in the supremum norm. Combining this bound across states with the preceding bound on $Q_i^{\sigma,*}$ yields
\[
\|\Br_i^\sigma(b_i)-\Br_i^\sigma(b_i')\|_\infty \leq\frac{1}{2\sigma} \|Q_i^{\sigma,*}(\cdot,\cdot,b_i)-Q_i^{\sigma,*}(\cdot,\cdot,b_i')\|_\infty \leq\frac{K_Q}{2\sigma} \|b_i-b_i'\|_\infty.
\]
Converting the norms and using the definition of $K_{\Br}$ gives
\[
\|\Br_i^\sigma(b_i)-\Br_i^\sigma(b_i')\|_2
\leq\frac{K_{\Br}}{\sigma}\|b_i-b_i'\|_2.
\]

\end{proof}

We now prove Proposition~\ref{lemma: sigma, tau, M}. The map $\pi\mapsto\bigl(\Br_i^\sigma(\pi_{-i})\bigr)_{i\in I}$ maps $\prod_{i\in I}(\Delta_i)^{|S|}$ into itself, since $\Br_i^\sigma(\pi_{-i})\in(\Delta_i)^{|S|}$ for every $i$. Lemma~\ref{lemma: Lipschitz} implies that this map is continuous. As $\prod_{i\in I}(\Delta_i)^{|S|}$ is nonempty, compact, and convex, Brouwer's fixed-point theorem gives a fixed point $\pi^\sigma\in\prod_{i\in I}(\Delta_i)^{|S|}$ satisfying $\pi_i^\sigma=\Br_i^\sigma(\pi_{-i}^\sigma)$ for every $i\in I$.

For each $i$, $j\ne i$, and $s$, choose $b_{ij}^\dagger(s)\in\Delta_j^M$ by simplex rounding so that
\[
|b_{ij}^\dagger(s,a_j)-\pi_j^\sigma(s,a_j)|\leq\frac1M,
\qquad a_j\in A_j.
\]
Let $b_i^\dagger=(b_{ij}^\dagger)_{j\ne i}$ and $\pi_i^\dagger:=\Br_i^\sigma(b_i^\dagger)$. Then
\[
\|b_i^\dagger-\pi_{-i}^\sigma\|_2
\leq\frac{|S||I|\max_{j\in I}|A_j|}{M}.
\]
Using Lemma~\ref{lemma: Lipschitz} and the triangle inequality,
\begin{align*}
\|b_i^\dagger-\pi_{-i}^\dagger\|_2
&\leq\|b_i^\dagger-\pi_{-i}^\sigma\|_2
+\|\pi_{-i}^\sigma-\pi_{-i}^\dagger\|_2\\
&\leq\|b_i^\dagger-\pi_{-i}^\sigma\|_2
+\sum_{j\ne i}\|\Br_j^\sigma(\pi_{-j}^\sigma)-
\Br_j^\sigma(b_j^\dagger)\|_2\\
&\leq\frac{|S||I|\max_{j\in I}|A_j|}{M}
+\frac{K_{\Br}}{\sigma}
\sum_{j\ne i}\|\pi_{-j}^\sigma-b_j^\dagger\|_2\\
&\leq\frac{|S||I|\max_{j\in I}|A_j|}{M}
\left(1+\frac{|I|K_{\Br}}{\sigma}\right)
\leq\tau.
\end{align*}
Hence, $\|b_i^\dagger(s)-\pi_{-i}^\dagger(s)\|_2\leq\tau$ for every $s$ and $i$, so $\mathcal Z^\dagger\ne\varnothing$.

For any initial distribution $\mu$, write
\[
V_i^\sigma(\mu,\pi_i,b_i)
:=\sum_{s\in S}\mu(s)V_i^\sigma(s,\pi_i,b_i).
\]
Since $0\leq\operatorname{Ent}(\pi_i(s))\leq\log|A_i|$,
\[
V_i(\mu,\pi_i,b_i)
\leq V_i^\sigma(\mu,\pi_i,b_i)
\leq V_i(\mu,\pi_i,b_i)+\frac{\sigma\log|A_i|}{1-\delta}.
\]
For any $\pi_i'\in(\Delta_i)^{|S|}$, the optimality of $\Br_i^\sigma(b_i)$ gives
\[
V_i(\mu,\pi_i',b_i) \leq V_i^\sigma(\mu,\pi_i',b_i) \leq V_i^\sigma(\mu,\Br_i^\sigma(b_i),b_i) \leq V_i(\mu,\Br_i^\sigma(b_i),b_i) +\frac{\sigma\log|A_i|}{1-\delta}.
\]
Taking the maximum over $\pi_i'$ yields
\begin{align}\label{eq: second term on V}
\max_{\pi_i'\in(\Delta_i)^{|S|}}V_i(\mu,\pi_i',b_i)
-V_i(\mu,\Br_i^\sigma(b_i),b_i)
\leq\frac{\sigma\log|A_i|}{1-\delta}.
\end{align}

Fix $z^\dagger=(b^\dagger,\pi^\dagger)\in\mathcal Z^\dagger$ and any deviation $\pi_i'$. Since $\mathcal Z^\dagger\subseteq\mathcal Z$, we have $\pi_i^\dagger=\Br_i^\sigma(b_i^\dagger)$. Consistency implies $\|b_i^\dagger-\pi_{-i}^\dagger\|_\infty\leq\max_{s\in S}\|b_i^\dagger(s)-\pi_{-i}^\dagger(s)\|_2\leq\tau$. Assumption~\ref{assump: sigma, tau, M} and \eqref{eq: second term on V} then yield
\[
\begin{aligned}
&V_i(\mu,\pi_i',\pi_{-i}^\dagger)
-V_i(\mu,\pi_i^\dagger,\pi_{-i}^\dagger)\\
&\leq|V_i(\mu,\pi_i',\pi_{-i}^\dagger)-V_i(\mu,\pi_i',b_i^\dagger)|
+V_i(\mu,\pi_i',b_i^\dagger)-V_i(\mu,\pi_i^\dagger,b_i^\dagger)
+|V_i(\mu,\pi_i^\dagger,b_i^\dagger)-V_i(\mu,\pi_i^\dagger,\pi_{-i}^\dagger)|\\
&\leq2K_V\tau+\frac{\sigma\log(\max_{j\in I}|A_j|)}{1-\delta}
\leq\epsilon.
\end{aligned}
\]
The last inequality follows from \eqref{eq: sigma_tau_choice}, which implies $\sigma\log(\max_{j\in I}|A_j|)/(1-\delta)\leq\epsilon/2$ and $2K_V\tau\leq\epsilon/2$.
Thus, $\pi^\dagger$ is an $\epsilon$-stationary Nash equilibrium.

\section{Proof of Proposition~\ref{prop:GLRT}}
\label{appendix:prop2}

Fix a player $i$, an epoch, and its initial belief-policy state
$z=(b,\pi)\in\mathcal Z$. Suppress the epoch superscript.
The policies remain fixed over stages $t=0,\ldots,T-1$.
For a candidate $\pi_{-i}'(s)\in\Delta_{-i}$, we define
\begin{align*}%
&\rho_i(o\mid s,a_i,\pi_{-i}'(s)):=
\sum_{a_{-i}\in A_{-i}}\pi_{-i}'(s,a_{-i})
\phi_{i,s}(o\mid a_i,a_{-i}).
\end{align*}
This is the signal distribution if player $i$ chooses $a_i$ at $s$
and the opponents use $\pi_{-i}'(s)$.
Thus $\ell_{i,s}(\pi_{-i}'(s))
=\sum_{t:s^t=s}\log\rho_i(o_i^t\mid s,a_i^t,\pi_{-i}'(s))$.
The null set $
\Pi_{i,0}(s):=\{\pi_{-i}'(s)\in\Delta_{-i}:
\|\pi_{-i}'(s)-b_i(s)\|_2\le\tau\}$
is compact and nonempty, since it contains $b_i(s)$.

To specify the conditional expectations, let $\mathcal F_t:=\sigma(z,s^0,(a_i^u,o_i^u,s^{u+1})_{0\le u<t})$. This contains the initial belief-policy state and the observations
through the current state $s^t$, before the stage-$t$ action and signal
are drawn. Under the true policies,
\begin{equation}\label{eq:glrt_conditional_law}
\begin{aligned}
&\Pr(a_i^t=a_i,o_i^t=o\mid\mathcal F_t)=\pi_i(s^t,a_i)
\rho_i(o\mid s^t,a_i,\pi_{-i}(s^t)).
\end{aligned}
\end{equation}
Define the possible signals at $(s,a_i)$ by $\mathcal O_i(s,a_i):=\{o\in\mathcal O:
\phi_{i,s}(o\mid a_i,a_{-i})>0\text{ for some }a_{-i}\}$.
The true opponents' policies have full support, so
$\rho_i(o\mid s,a_i,\pi_{-i}(s))>0$ for every
$o\in\mathcal O_i(s,a_i)$. Hence the likelihood ratios below have positive
denominators almost surely.

\medskip\noindent\textit{1. Finite likelihood representatives.}
For each state $s$, the data enter the likelihood only through the
counts of action-signal pairs $\mathbf n=(n_{a_i,o})_{a_i\in A_i,o\in\mathcal O}$, where
$n_{a_i,o}:=\sum_{t=0}^{T-1}
\mathbf1\{s^t=s,a_i^t=a_i,o_i^t=o\}$.
For any count array $n$, define
\[
L_{i,s,\mathbf n}(\pi_{-i}'(s))
:=\prod_{a_i\in A_i}\prod_{o\in\mathcal O}
\rho_i(o\mid s,a_i,\pi_{-i}'(s))^{n_{a_i,o}},
\]
where $0^0=1$. For the observed array,
$L_{i,s,n}(\pi_{-i}'(s))
=\exp\{\ell_{i,s}(\pi_{-i}'(s))\}$.
Each count $n_{a_i, o}$ is in $\{0,\ldots,T\}$, so there are at most $H_i(T):=(T+1)^{|A_i||\mathcal O|}$
array $n$. For every $n\in\{0,\ldots,T\}^{A_i\times\mathcal O}$, we choose one maximizer of the likelihood function in the entire mixed strategy set $\Delta_{-i}$ and the null hypothesis set $\Pi_{i,0}(s)$.
\[
\begin{aligned}
\widehat\pi_{-i}^{\mathbf n}(s)
&\in\operatorname*{arg\,max}_{\pi_{-i}'(s)\in\Delta_{-i}}
L_{i,s,\mathbf n}(\pi_{-i}'(s)),\\
\widehat\pi_{-i,0}^{\mathbf n}(s)
&\in\operatorname*{arg\,max}_{\pi_{-i}'(s)\in\Pi_{i,0}(s)}
L_{i,s,\mathbf n}(\pi_{-i}'(s)).
\end{aligned}
\]
We define the set of maximizers for every vector $n$
\[
\begin{aligned}
\widehat\Pi_{i,s}(T)
&:=\{\widehat\pi_{-i}^{n}(s):
       n\in\{0,\ldots,T\}^{A_i\times\mathcal O}\},\\
\widehat\Pi_{i,s,0}(T)
&:=\{\widehat\pi_{-i,0}^{n}(s):
    n\in\{0,\ldots,T\}^{A_i\times\mathcal O}\}.
\end{aligned}
\]
The maximizers exist by continuity and compactness. Since each array $n$ contributes one profile to each set, we have
\begin{equation}\label{eq:glrt_representative_cardinality}
|\widehat\Pi_{i,s}(T)|\le H_i(T),
\qquad
|\widehat\Pi_{i,s,0}(T)|\le H_i(T).
\end{equation}
Each set contains a maximizer for the realized counts and lies within
its original feasible set. Therefore,
\begin{equation}\label{eq:glrt_finite_representatives}
\begin{aligned}
\sup_{\pi_{-i}'(s)\in\Delta_{-i}}\ell_{i,s}(\pi_{-i}'(s))
&=\max_{\pi_{-i}'(s)\in\widehat\Pi_{i,s}(T)}
\ell_{i,s}(\pi_{-i}'(s)),\\
\sup_{\pi_{-i}'(s)\in\Pi_{i,0}(s)}\ell_{i,s}(\pi_{-i}'(s))
&=\max_{\pi_{-i}'(s)\in\widehat\Pi_{i,s,0}(T)}
\ell_{i,s}(\pi_{-i}'(s)).
\end{aligned}
\end{equation}
If all null likelihoods vanish, both sides of the second equality
are $-\infty$.

\medskip\noindent\textit{2. Type I error.} Fix a state $s$ and a candidate $\pi_{-i}'(s)$. We define
\[
X_t:=
\begin{cases}
\displaystyle
\frac{\rho_i(o_i^t\mid s,a_i^t,\pi_{-i}'(s))}
{\rho_i(o_i^t\mid s,a_i^t,\pi_{-i}(s))},&s^t=s,\\[5pt]
1,&s^t\ne s.
\end{cases}
\]
If $\{s^t=s\}$, following
\eqref{eq:glrt_conditional_law}, we have $\mathbb E[X_t\mid\mathcal F_t]=\sum_{a_i\in A_i}\pi_i(s,a_i)
\sum_{o\in\mathcal O_i(s,a_i)}
\rho_i(o\mid s,a_i,\pi_{-i}'(s))=1$. If $\{s^t\ne s\}$, the same expectation equals one because $X_t=1$.
Therefore $\mathbb E[X_t\mid\mathcal F_t]=1$ at every stage.

Let $M_0=1$ and $M_t:=\prod_{u=0}^{t-1}X_u$.
Since $M_t$ is $\mathcal F_t$-measurable,
$\mathbb E[M_{t+1}\mid\mathcal F_t]
=M_t\mathbb E[X_t\mid\mathcal F_t]=M_t$, and $\mathbb E[M_T\mid z]
=\mathbb E[M_{T-1}\mid z]=\cdots=M_0=1.
$.
Moreover, $M_T=\exp\{\ell_{i,s}(\pi_{-i}'(s))
-\ell_{i,s}(\pi_{-i}(s))\}$. Hence
\begin{equation}\label{eq:type1_lr_mean}
\mathbb E\!\left[
e^{\ell_{i,s}(\pi_{-i}'(s))-\ell_{i,s}(\pi_{-i}(s))}
\,\middle|\,z\right]=1.
\end{equation}

Suppose $z\in\mathcal Z_i^\dagger$. The true profile belongs to
$\Pi_{i,0}(s)$ at every state, so
\[
\max_{\pi^{'}_{-i}(s) \in \Pi_{i,0}(s)} \ell_{i, s}(\pi_{-i}^{'}(s)) \geq \ell_{i, s}(\pi_{-i}(s)).
\]
Following
\eqref{eq:glrt_finite_representatives},
\[
\Lambda_i(s)\le
\max_{\pi_{-i}'(s)\in\widehat\Pi_{i,s}(T)}
\{\ell_{i,s}(\pi_{-i}'(s))-\ell_{i,s}(\pi_{-i}(s))\}.
\]
For simplicity, we write $c_i=c_i(T,\xi)$. Following Markov's inequality, for each
fixed candidate, $\Pr\{\ell_{i,s}(\pi_{-i}'(s))-\ell_{i,s}(\pi_{-i}(s))>c_i\mid z\}=\Pr\{M_T>e^{c_i}\mid z\}
\le e^{-c_i}\mathbb E[M_T\mid z]=e^{-c_i}$.
Taking a union bound over the finite representative set yields $\Pr\{\Lambda_i(s)>c_i\mid z\}\le\sum_{\pi_{-i}'(s)\in\widehat\Pi_{i,s}(T)}
\Pr\{\ell_{i,s}(\pi_{-i}'(s))-\ell_{i,s}(\pi_{-i}(s))>c_i\mid z\}\le H_i(T)e^{-c_i}$. A second union bound, now over all states, proves $
\Pr\{\mathcal C_i(b_i,\mathcal D_i)=0\mid z\}\le |S|H_i(T)e^{-c_i(T,\xi)}=\xi^{\bar r}$.
This Type I bound holds for every positive integer $T$.

\medskip\noindent\textit{3. Separation from null candidates.}
Define
\begin{equation*}
\pi_{\min}:=\min_{z=(b,\pi)\in\mathcal Z}
\min_{\substack{j\in I,\ s\in S\\a_j\in A_j}}\pi_j(s,a_j)>0.
\end{equation*}
Positivity follows from the full support of smooth best responses
and the finiteness of $\mathcal Z$. Define also
\begin{equation*}
d:=\min_{\substack{j\in I,\ z=(b,\pi)\in\mathcal Z\\
z\notin\mathcal Z_j^\dagger}}
\left(\max_{s\in S}\|b_j(s)-\pi_{-j}(s)\|_2-\tau\right)>0.
\end{equation*}
Every term in this minimum is strictly positive by inconsistency,
so finiteness of $\mathcal Z$ gives $d>0$.
If the minimizing set is empty, set $d=1$; then no Type II bound
is needed.

For $z\notin\mathcal Z_i^\dagger$, choose $s^*$ maximizing
$\|b_i(s)-\pi_{-i}(s)\|_2$, so this distance is at least $\tau+d$.
Every $\pi_{-i}'(s^*)\in\Pi_{i,0}(s^*)$ satisfies $\|\pi_{-i}(s^*)-\pi_{-i}'(s^*)\|_2\ge\|\pi_{-i}(s^*)-b_i(s^*)\|_2
-\|\pi_{-i}'(s^*)-b_i(s^*)\|_2 \ge(\tau+d)-\tau=d$.
By Assumption~\ref{ass:identifiability}, there exists an action
$a_i^*$ such that
\begin{equation}\label{eq:signal_tv_gap}
\begin{aligned}
&\bigl\|\rho_i(\cdot\mid s^*,a_i^*,\pi_{-i}(s^*)) -\rho_i(\cdot\mid s^*,a_i^*,\pi_{-i}'(s^*))\bigr\|_{\rm TV}\\
&\ge\kappa\|\pi_{-i}(s^*)-\pi_{-i}'(s^*)\|_1 \ge\kappa\|\pi_{-i}(s^*)-\pi_{-i}'(s^*)\|_2\ge\kappa d.
\end{aligned}
\end{equation}
Define the signal affinity
\begin{equation}\label{eq:glrt_affinity}
h_i(a_i;\pi_{-i}'(s^*)):= \sum_{o\in\mathcal O}
\sqrt{\rho_i(o\mid s^*,a_i,\pi_{-i}(s^*))}
\sqrt{\rho_i(o\mid s^*,a_i,\pi_{-i}'(s^*))}.
\end{equation}
For distributions $p,\widetilde p$ with affinity
$h=\sum_o\sqrt{p(o)\widetilde p(o)}$, Cauchy--Schwarz gives
$0\le h\le1$ and $
\|p-\widetilde p\|_{\rm TV}^2\le\tfrac14
\sum_o(\sqrt{p(o)}-\sqrt{\widetilde p(o)})^2\cdot
\sum_o(\sqrt{p(o)}+\sqrt{\widetilde p(o)})^2=\tfrac14(2-2h)(2+2h)\le2(1-h)$.
Consequently, \eqref{eq:signal_tv_gap} implies
$h_i(a_i^*;\pi_{-i}'(s^*))\le1-\kappa^2d^2/2$,
whereas the affinity at every other action is at most one.
With $\beta:=\pi_{\min}\kappa^2d^2/2>0$, we obtain
\begin{equation}\label{eq:uniform_hellinger_gap}
\begin{aligned}
\sum_{a_i}\pi_i(s^*,a_i)h_i(a_i;\pi_{-i}'(s^*)) &\le\pi_i(s^*,a_i^*)(1-\kappa^2d^2/2)
+\sum_{a_i\ne a_i^*}\pi_i(s^*,a_i)\\
&=1-\pi_i(s^*,a_i^*)\kappa^2d^2/2
\le1-\beta\le e^{-\beta}.
\end{aligned}
\end{equation}

\medskip\noindent\textit{4. Type II error given sufficiently many state visits.}
Fix a null candidate $\pi_{-i}'(s^*)$ and use $X_t$ defined above
with $s=s^*$. Following
\eqref{eq:glrt_conditional_law} and \eqref{eq:uniform_hellinger_gap}, if $\{s^t=s^*\}$,
\[
\mathbb E[\sqrt{X_t}\mid\mathcal F_t]
=\sum_{a_i}\pi_i(s^*,a_i)h_i(a_i;\pi_{-i}'(s^*))
\le e^{-\beta}.
\]
If $\{s^t\neq s^*\}$, $X_t=1$. Hence
$Y_t:=e^{\beta\mathbf1\{s^t=s^*\}}\sqrt{X_t}$ satisfies
$\mathbb E[Y_t\mid\mathcal F_t]\le1$ and
\[
\mathbb E\!\left[\prod_{u=0}^tY_u\mid\mathcal F_t\right]
=\left(\prod_{u=0}^{t-1}Y_u\right)
\mathbb E[Y_t\mid\mathcal F_t]
\le\prod_{u=0}^{t-1}Y_u.
\]
Iterating expectations gives
$\mathbb E[\prod_{t=0}^{T-1}Y_t\mid z]\le1$.
Writing $T_{s^*}:=\sum_{t=0}^{T-1}\mathbf1\{s^t=s^*\}$
and expanding this product yields
\begin{equation}\label{eq:type2_factor_expectation}
\mathbb E\!\left[
\exp\!\left\{
\frac{\ell_{i,s^*}(\pi_{-i}'(s^*))-\ell_{i,s^*}(\pi_{-i}(s^*))}{2}
+\beta T_{s^*}
\right\}\,\middle|\,z\right]\le1.
\end{equation}
For $N\ge0$, let $\mathcal E_N(\pi_{-i}'(s^*))$ be the event that
$\ell_{i,s^*}(\pi_{-i}'(s^*))-\ell_{i,s^*}(\pi_{-i}(s^*))\ge-c_i$
and $T_{s^*}\ge N$.
On this event, the exponential in
\eqref{eq:type2_factor_expectation} is at least
$e^{-c_i/2+\beta N}$, so Markov's inequality gives
\begin{equation}\label{eq:type2_fixed_candidate}
\Pr\{\mathcal E_N(\pi_{-i}'(s^*))\mid z\}
\le e^{c_i/2-\beta N}.
\end{equation}
If $\mathcal C_i(b_i,\mathcal D_i)=1$ and $T_{s^*}\ge N$,
the definition of the GLRT and
\eqref{eq:glrt_finite_representatives} imply that
$\mathcal E_N(\pi_{-i}'(s^*))$ occurs for some
$\pi_{-i}'(s^*)\in\widehat\Pi_{i,s^*,0}(T)$.
Following \eqref{eq:glrt_representative_cardinality} and \eqref{eq:type2_fixed_candidate}, a union bound gives
\begin{equation}\label{eq:visit_threshold}
\Pr\{\mathcal C_i(b_i,\mathcal D_i)=1,\ T_{s^*}\ge N\mid z\} \le\sum_{\pi_{-i}'(s^*)\in\widehat\Pi_{i,s^*,0}(T)} \Pr\{\mathcal E_N(\pi_{-i}'(s^*))\mid z\} \le H_i(T)e^{c_i/2-\beta N}.
\end{equation}

\medskip\noindent\textit{Number of visits.}
Let $p_S$ be as in Assumption~\ref{assump:recurrence}. With a slight abuse of notation, let $m$ denote the maximum over all pairs $s,s'\in S$ of the integers in Assumption~\ref{assump:recurrence}, which is finite since $S$ is finite. Divide the epoch into $n_T:=\lfloor T/(m+1)\rfloor$ complete
blocks of $m+1$ stages, ignoring the remainder.
Let $J_\chi$ indicate a visit to $s^*$ in block $\chi$,
whose first stage is $t_\chi=(\chi-1)(m+1)$.
From any initial state, reachability provides at most $m$
prescribed joint actions that reach $s^*$ with probability
at least $p_S$. Each prescribed joint action has probability
at least $\pi_{\min}^{|I|}$ under the actual policies.
Multiplying along each path and summing over paths reaching
$s^*$ therefore gives
\begin{equation}\label{eq:glrt_reachability_probability}
\mathbb E[J_\chi\mid\mathcal F_{t_\chi}]
=\Pr(J_\chi=1\mid\mathcal F_{t_\chi})
\ge p_S\pi_{\min}^{|I|m}.
\end{equation}
Also, $T_{s^*}\ge\sum_{\chi=1}^{n_T}J_\chi$, since distinct
blocks contribute distinct visits. For $\theta>0$, conditional Hoeffding's lemma applied to the
centered indicator gives
\[
\mathbb E[e^{-\theta J_\chi}\mid\mathcal F_{t_\chi}]\le e^{-\theta\mathbb E[J_\chi\mid\mathcal F_{t_\chi}]+\theta^2/8}\le e^{-\theta p_S\pi_{\min}^{|I|m}+\theta^2/8}.
\]
Iterating over the blocks, without assuming independence, yields
\[
\mathbb E\!\left[e^{-\theta\sum_{\chi=1}^{n_T}J_\chi}\mid z\right]\le e^{n_T(-\theta p_S\pi_{\min}^{|I|m}+\theta^2/8)}.
\]
Markov's inequality then gives
\[
\Pr\!\left\{\sum_{\chi=1}^{n_T}J_\chi<n_Tp_S\pi_{\min}^{|I|m}/2|z\right\}\le e^{\theta n_Tp_S\pi_{\min}^{|I|m}/2}\mathbb E\!\left[e^{-\theta\sum_{\chi=1}^{n_T}J_\chi}\mid z\right] \le e^{-\theta n_Tp_S\pi_{\min}^{|I|m}/2+n_T\theta^2/8}.
\]
Taking $\theta=2p_S\pi_{\min}^{|I|m}$ proves
\begin{equation}\label{eq:glrt_visit_tail}
\Pr\!\left\{T_{s^*}<\frac{n_Tp_S\pi_{\min}^{|I|m}}2
\,\middle|\,z\right\}
\le e^{-n_Tp_S^2\pi_{\min}^{2|I|m}/2}.
\end{equation}
Split the Type II event according to this visit threshold.
Using \eqref{eq:visit_threshold} with
$N=n_Tp_S\pi_{\min}^{|I|m}/2$ yields
\begin{equation}\label{eq:glrt_type2_bound}
\Pr\{\mathcal C_i(b_i,\mathcal D_i)=1\mid z\} \le e^{-n_Tp_S^2\pi_{\min}^{2|I|m}/2}
+H_i(T)e^{c_i/2-\beta n_Tp_S\pi_{\min}^{|I|m}/2}.
\end{equation}

\medskip\noindent\textit{Sufficient epoch length.}
Set $D:=|\mathcal O|\max_{j\in I}|A_j|$ and
\begin{equation}\label{eq:glrt_length_constants}
\lambda:=
\frac{p_S\pi_{\min}^{|I|m}
\min\{p_S\pi_{\min}^{|I|m},\beta\}}{4(m+1)}>0.
\end{equation}
For $T\ge2(m+1)$, $n_T\ge T/[2(m+1)]$. Thus
\[
\begin{aligned}
\frac{n_Tp_S^2\pi_{\min}^{2|I|m}}2
&\ge\frac{Tp_S^2\pi_{\min}^{2|I|m}}{4(m+1)}
\ge\lambda T,\\
\frac{\beta n_Tp_S\pi_{\min}^{|I|m}}2
&\ge\frac{T\beta p_S\pi_{\min}^{|I|m}}{4(m+1)}
\ge\lambda T.
\end{aligned}
\]
Moreover,
$H_i(T)e^{c_i/2}
=\sqrt{|S|}(T+1)^{3|A_i||\mathcal O|/2}\xi^{-\bar r/2}
\le\sqrt{|S|}(T+1)^{3D/2}\xi^{-\bar r/2}$,
and the latter is at least one. Hence
\eqref{eq:glrt_type2_bound} implies
\begin{equation}\label{eq:glrt_type2_simplified}
\Pr\{\mathcal C_i(b_i,\mathcal D_i)=1\mid z\} \le
2\sqrt{|S|}(T+1)^{3D/2}\xi^{-\bar r/2}e^{-\lambda T}.
\end{equation}
Using $\log(1+x)\le ax+\log(1+1/a)$ for $x\ge0,a>0$,
with $x=T$ and $a=\lambda/(3D)$, gives
\[
\frac{3D}{2}\log(T+1)
\le\frac{\lambda T}{2}
+\frac{3D}{2}\log(1+3D/\lambda).
\]
The logarithm of the bound in
\eqref{eq:glrt_type2_simplified} is therefore at most
\[
\log(2\sqrt{|S|})
+\frac{3D}{2}\log(1+3D/\lambda)
+\frac{\bar r}{2}\log(1/\xi)-\frac{\lambda T}{2}.
\]
Choose
\begin{equation}\label{eq:glrt_epoch_bound}
\bar T(\xi):=\Bigg\lceil\max\Bigg\{2(m+1), \frac{\log(4|S|)+3\bar r\log(1/\xi)+3D\log(1+3D/\lambda)}{\lambda} \Bigg\}\Bigg\rceil.
\end{equation}
For every integer $T\ge\bar T(\xi)$, the preceding logarithmic
bound is at most $-\bar r\log(1/\xi)$, so the Type II error
is at most $\xi^{\bar r}$. Together with the bound of type I error in step 2, this proves
Assumption~\ref{ass:error_control} holds for all $i$ and $z$.
\hfill$\square$

\section{Sufficient Condition for Assumption~\ref{assump: bad strategy}}
\label{appendix:assumption6}
\begin{lemma}\label{lemma:simple_condition}
Suppose
$\bigl|\{\Br_i^\sigma(b_i):b_i\in\mathcal B_i\}\bigr|>|I|$ for all $i\in I$.
There exists $\bar\tau>0$ such that Assumption~\ref{assump: bad strategy} holds for every $\tau\in(0,\bar\tau)$.
\end{lemma}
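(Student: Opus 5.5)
The plan is a pigeonhole argument on the finite set of attainable smooth best responses. Fix $\sigma$ and $M$, so each $\mathcal B_i$ is a fixed finite set. Write $\mathcal P_i:=\{\Br_i^\sigma(b_i):b_i\in\mathcal B_i\}$. By hypothesis $|\mathcal P_i|\ge|I|+1$. For stationary profiles $x,y$, measure distance by $d(x,y):=\max_{s\in S}\|x(s)-y(s)\|_2$. Two uniform separation constants come from finiteness:
\[
\rho_i:=\min_{\substack{\pi,\pi'\in\mathcal P_i\\ \pi\ne\pi'}} d(\pi,\pi')>0,
\qquad
\nu_i:=\min_{\substack{b_i,b_i'\in\mathcal B_i\\ \Br_i^\sigma(b_i)\ne\Br_i^\sigma(b_i')}} d(b_i,b_i')>0.
\]
The constant $\nu_i$ is positive because beliefs with different smooth best responses must differ. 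Both minima range over finite nonempty sets. I would then set $\bar\tau:=\min_{i\in I}\min\{\rho_i,\nu_i\}/2$ and fix $\tau\in(0,\bar\tau)$.

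\textbf{Step 1: discard the bad policies.} Fix $i$ and arbitrary opponents' beliefs $b_{-i}$, with induced policy $\pi_{-i}=\Br_{-i}^\sigma(b_{-i})$. For each $j\ne i$, the profile $b_{ji}\in(\Delta_i^M)^{|S|}$ can satisfy $d(\pi,b_{ji})\le\tau$ for at most one $\pi\in\mathcal P_i$. Indeed, if two distinct $\pi,\pi'\in\mathcal P_i$ were both within $\tau$ of $b_{ji}$, the triangle inequality would give $d(\pi,\pi')\le2\tau<\rho_i$, a contradiction. Hence at most $|I|-1$ elements of $\mathcal P_i$ are within $\tau$ of some opponent's belief about $i$. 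At least two distinct elements $\pi^1\ne\pi^2$ of $\mathcal P_i$ remain. Each satisfies $d(\pi^k,b_{ji})>\tau$ for every $j\ne i$, which is the second requirement of Assumption~\ref{assump: bad strategy}.

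\textbf{Step 2: pick a belief far from $\pi_{-i}$.} Choose beliefs $b^1,b^2\in\mathcal B_i$ with $\Br_i^\sigma(b^k)=\pi^k$. Since $\pi^1\ne\pi^2$, the definition of $\nu_i$ gives $d(b^1,b^2)\ge\nu_i>2\tau$. It follows that $b^1$ and $b^2$ cannot both lie within $\tau$ of $\pi_{-i}$. Take $\tilde b_i$ to be one of them with $d(\tilde b_i,\pi_{-i})>\tau$, and set $\tilde\pi_i:=\Br_i^\sigma(\tilde b_i)$. This pair satisfies both inequalities in Assumption~\ref{assump: bad strategy}. Since $i$ and $b_{-i}$ were arbitrary and $\bar\tau$ does not depend on them, the lemma follows.

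The step needing the most care is Step 2. The exclusion in Step 1 leaves one surplus policy beyond the $|I|-1$ that are excluded, not merely one survivor. That second survivor is exactly what lets us dodge the additional requirement that the chosen belief itself be inconsistent with $\pi_{-i}$, and this is where $|\mathcal P_i|>|I|$ rather than $\ge|I|$ is used. I also need the separation constant $\nu_i$, which depends only on the fixed finite grid and the fixed $\sigma$. The threshold $\bar\tau$ is therefore valid for the given $(\sigma,M)$, and I will note that the lemma treats these as fixed when $\tau$ is varied.
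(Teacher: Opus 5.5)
Your proof is correct and follows essentially the same route as the paper, which adapts the pigeonhole argument of Lemma~7 in the authors' prior work to statewise-maximum Euclidean distances. That argument has three parts: more than $|I|$ attainable smooth best responses, at most $|I|-1$ of them excluded by opponents' beliefs about $i$, and a surviving pair whose preimages cannot both be close to $\pi_{-i}$; finiteness of players and belief grids then yields $\bar\tau$. The only difference is cosmetic: the paper takes a positive separation margin for each pair $(i,b_{-i})$ and minimizes over the finitely many such pairs, whereas you fix the uniform constants $\rho_i,\nu_i$ in advance and use the triangle inequality, which gives the explicit threshold $\bar\tau=\min_i\min\{\rho_i,\nu_i\}/2$.
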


The proof of \citet[Lemma~7]{ruifan} applies to full stationary-policy profiles. Specifically, regard each $b_{ji}$ and $\Br_i^\sigma(b_i)$ as a point in $(\Delta_i)^{|S|}$ and replace each distance in that proof by the corresponding maximum statewise Euclidean distance. The image-size condition gives a strictly positive separation margin for every $i$ and $b_{-i}$. Since the player set and belief spaces are finite, the minimum of these margins is some $\bar\tau>0$. Hence, Assumption~\ref{assump: bad strategy} holds for every $\tau\in(0,\bar\tau)$.

\section{Proof of Theorem~\ref{theorem:main}}
\label{appendix:theorem1}

Our proof follows the three-step approach of our prior work \cite{ruifan}, adapted to state-dependent beliefs and stationary policies in discounted Markov games. First, we establish regular perturbation and compute transition resistances in terms of belief-based discounted values, using Proposition~\ref{prop:GLRT} to control testing errors along the induced state trajectories. Second, we show that the recurrent classes are singleton consistent states, whose existence and approximate stationary Nash equilibrium property follow from Proposition~\ref{lemma: sigma, tau, M}. Here, consistency requires beliefs to agree with opponents’ policies within the prescribed tolerance at every game state.
Third, we construct paths between these states and compute their minimum resistances, from which the resistance-tree criterion of \citet{Young1993} yields the max--min selection rule. %

\medskip\noindent\textit{1. Regular perturbation.} Fix all game and learning parameters except $\xi$, and choose $T(\xi)$ so that Assumption~\ref{ass:error_control} holds. For $z=(b,\pi)\in\mathcal Z$, write
$I_c(z):=\{i\in I:z\in\mathcal Z_i^\dagger\}$.
We call player $i$ consistent at $z$ if $\|b_i(s)-\pi_{-i}(s)\|_2\le\tau$ for every $s\in S$, and inconsistent otherwise. Thus $I_c(z)$ is the set of consistent players, and $z\in\mathcal Z^\dagger$ exactly when $I_c(z)=I$.

In the unperturbed process $P^0$, both Type I and Type II error probabilities are zero, and players do not explore when the null hypothesis is not rejected. A consistent player retains its belief whether or not it tests. An inconsistent player retains its belief if it does not test, which occurs with probability $1-\gamma_i$; otherwise, the test rejects and the player draws a belief according to $\psi_i(\cdot\mid b_i)$. The independent update randomizations give $P^0_{zz'}=\prod_{i\in I}P^0_{i,zz'}$, where $P^0_{i,zz'}$ is the probability that player $i$ updates its belief from $b_i$ to $b_i'$ under $P^0$, given the current state $z$. Specifically,
\begin{equation}\label{eq:blue-P0}
P^0_{i,zz'}=
\begin{cases}
\mathbf 1\{b_i'=b_i\}, & i\in I_c(z),\\
(1-\gamma_i)\mathbf 1\{b_i'=b_i\}
+\gamma_i\psi_i(b_i'\mid b_i), & i\notin I_c(z).
\end{cases}
\end{equation}

Following \citet{Young1993}, we call $P^\xi$ a regular perturbation of $P^0$ if $P^\xi$ is irreducible and aperiodic for all sufficiently small $\xi>0$, converges to $P^0$ as $\xi\to0$, and satisfies
\[
0<\lim_{\xi\to0}P^\xi_{zz'}\xi^{-r_{zz'}}<\infty
\]
for some $r_{zz'}\ge0$ for every positive transition. The exponent $r_{zz'}$ is called the resistance of the transition from $z$ to $z'$.

\begin{lemma}\label{lemma:blue-regular}
The family $P^\xi$ is a regular perturbation of $P^0$, with transition resistances
\[
r_{zz'}=\sum_{\substack{i\in I_c(z),\ b_i'\ne b_i}}f_i\bigl(V_i(\mu,\pi_i,b_i)\bigr).
\]
\end{lemma}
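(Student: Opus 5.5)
Since the update randomizations of different players are independent given $z$ (each player's test decision, test outcome, exploration coin and resampling draw are separate, and the data $\mathcal D_i$ enter only through $\mathcal C_i$), we first write $P^\xi_{zz'}=\prod_{i\in I}P^\xi_{i,zz'}$. We then compute the order of each factor in $\xi$ and add the exponents. The tests of different players use correlated data from the same trajectory. We therefore do not claim independence of the test outcomes. Instead, we bound each player's error events marginally and use the factorization only for the leading-order terms; the error terms are absorbed via union bounds.

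\emph{Consistent player $i\in I_c(z)$.} If $b_i'=b_i$, the probability is at least $1-\gamma_i$ (no test) and at most $1$, so it is $\Theta(1)$ with exponent $0$. If $b_i'\neq b_i$, the player must test and then either reject (Type I error, probability at most $\xi^{\bar r}$ by \eqref{eq:type1}) or fail to reject and explore. The latter has probability $\Pr[\mathcal C_i=1\mid z]\,\xi^{f_i(V_i(\mu,\pi_i,b_i))}$, and $\Pr[\mathcal C_i=1\mid z]\in[1-\xi^{\bar r},1]$. In both cases the new belief is drawn from $\psi_i(b_i'\mid b_i)>0$. Hence
\[
P^\xi_{i,zz'}=\gamma_i\psi_i(b_i'\mid b_i)\bigl(\xi^{f_i(V_i)}(1-O(\xi^{\bar r}))+O(\xi^{\bar r})\bigr).
\]
Because $\bar r>f_i(V_i)$ by the choice in Assumption~\ref{ass:error_control}, the ratio of this factor to $\xi^{f_i(V_i)}$ tends to $\gamma_i\psi_i(b_i'\mid b_i)\in(0,\infty)$. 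The exponent $f_i(V_i(\mu,\pi_i,b_i))$ depends only on $z$, since $V_i$ is evaluated at the current belief and policy.

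\emph{Inconsistent player $i\notin I_c(z)$.} If the player tests, it rejects with probability in $[1-\xi^{\bar r},1]$ by \eqref{eq:type2}, and then resamples. A Type II error followed by exploration also leads to resampling. Therefore
\[
P^\xi_{i,zz'}=(1-\gamma_i)\mathbf 1\{b_i'=b_i\}+\gamma_i\psi_i(b_i'\mid b_i)\bigl(1-O(\xi^{\bar r})\bigr)+\gamma_i\mathbf 1\{b_i'=b_i\}\,O(\xi^{\bar r}),
\]
which converges to \eqref{eq:blue-P0} and is bounded below by a positive constant. Its exponent is $0$.

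Multiplying the factors gives $P^\xi\to P^0$ entrywise. It also shows that every transition with positive probability for small $\xi$ satisfies $0<\lim_{\xi\to 0}P^\xi_{zz'}\xi^{-r_{zz'}}<\infty$ with $r_{zz'}$ as stated. The limit is the product of the finite positive constants above, and the sum of the $f_i$ terms is at most $\bar r-1$, so no error term dominates. Irreducibility and aperiodicity for all $\xi\in(0,1)$ come from Lemma~\ref{lemma: P_ep}.

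The main obstacle is making the error terms uniform. The GLRT error bounds in Assumption~\ref{ass:error_control} hold for the chosen $T(\xi)$ but are only upper bounds, not exact rates. We handle this with two-sided sandwich bounds: for a consistent player who explores, the lower bound is $\gamma_i\psi_i(1-\xi^{\bar r})\xi^{f_i}$ and the upper bound is $\gamma_i\psi_i(\xi^{f_i}+\xi^{\bar r})$. The condition $\bar r\ge \sum_i f_i+1$ then makes the upper and lower bounds agree to leading order. This means only the limit of the normalized product exists in $(0,\infty)$, not necessarily its exact value. If the regular-perturbation definition requires the limit itself to exist, we would fix a specific test, or note that the resistance-tree argument only uses bounds of the form $c\,\xi^{r}\le P^\xi_{zz'}\le C\,\xi^{r}$, which the sandwich provides.
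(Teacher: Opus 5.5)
Your per-player order computations are correct and match the paper's. The two ingredients you invoke, a union bound on test errors and $r_{zz'}\le\bar r-1$, are also the ones the paper uses. The gap is in how you combine the per-player factors into the joint transition probability.

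Your opening claim $P^\xi_{zz'}=\prod_{i\in I}P^\xi_{i,zz'}$ lists the test outcome among the independent randomizations, and that is false. Every $\mathcal C_i$ is a function of data from the same trajectory: the states are common, and $o_i^{t}$ depends on all $a_j^{t}$. So, writing $g_i(c)$ for player $i$'s conditional update probability given test outcome $c$, you have $P^\xi_{zz'}=\mathbb E\bigl[\prod_i g_i(\mathcal C_i)\bigr]$, which need not equal $\prod_i\mathbb E[g_i(\mathcal C_i)]$. You note this in your next sentence. However, the later step ``Multiplying the factors gives\ldots'' still multiplies the marginal factors $P^\xi_{i,zz'}$, which carry the error terms. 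The decoupling is therefore asserted rather than proved, and ``absorbed via union bounds'' is never turned into an actual estimate.

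The missing step is the paper's comparison with an error-free chain. Let $\widetilde P^\xi$ be the process in which every performed test returns the correct outcome, with the same testing, exploration and resampling randomizations. Its test outcomes are deterministic given $z$, so $\widetilde P^\xi_{zz'}=\prod_i\widetilde P^\xi_{i,zz'}$ holds exactly. This yields the order $\xi^{r_{zz'}}$ and the limit you computed.

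Couple the two chains through the same observations and coins. They then differ only on the event that some performed test errs, so $|P^\xi_{zz'}-\widetilde P^\xi_{zz'}|\le\sum_i\gamma_i\xi^{\bar r}\le|I|\xi^{\bar r}$. This is $o(\xi^{r_{zz'}})$ because $r_{zz'}\le\bar r-1$. Equivalently, since each $g_i\in[0,1]$, $\bigl|\mathbb E\prod_i g_i(\mathcal C_i)-\prod_i g_i(c_i^*)\bigr|\le\Pr(\exists i:\ \mathcal C_i\ne c_i^*)$, where $c_i^*$ is the correct outcome. This step makes no independence assumption on the test outcomes.

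With this step in place, the hedging in your last paragraph is unnecessary. Your two-sided bounds, divided by $\xi^{r_{zz'}}$, converge to the same value, because the $O(\xi^{\bar r})$ terms vanish after normalization. So the limit exists and equals $\prod_{i\in I_c(z),\,b_i'\ne b_i}\gamma_i\psi_i(b_i'\mid b_i)\prod_{i\notin I_c(z)}P^0_{i,zz'}\in(0,\infty)$. There is no need to fix a particular test or to weaken the regular-perturbation definition to bounds of the form $c\,\xi^{r}\le P^\xi_{zz'}\le C\,\xi^{r}$.
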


\begin{proof}
Irreducibility and aperiodicity follow from Lemma~\ref{lemma: P_ep}. We verify convergence to $P^0$ and the claimed resistance formula by first computing transition probabilities with error-free tests, and then bounding the effect of testing errors.

Fix $z=(b,\pi)$ and $z'=(b',\pi')$ in $\mathcal Z$. Introduce an auxiliary process $\widetilde P^\xi$ in which every performed test returns the correct outcome, while testing decisions, exploration, and resampling follow the original rules. Unlike $P^0$, this auxiliary process retains exploration at positive $\xi$. Let $\widetilde P^\xi_{i,zz'}$ denote player $i$'s probability of updating from $b_i$ to $b_i'$ in this process.

If $i\in I_c(z)$, a performed test fails to reject, so player $i$ resamples with probability $\gamma_i\xi^{f_i(V_i(\mu,\pi_i,b_i))}$. Accounting for both retaining the belief without resampling and drawing it again when resampling gives $\widetilde P^\xi_{i,zz'}=(1-\gamma_i\xi^{f_i(V_i(\mu,\pi_i,b_i))})\mathbf 1\{b_i'=b_i\}+\gamma_i\xi^{f_i(V_i(\mu,\pi_i,b_i))}\psi_i(b_i'\mid b_i)$. Thus, if $b_i'\ne b_i$, this probability equals $\gamma_i\psi_i(b_i'\mid b_i)\xi^{f_i(V_i(\mu,\pi_i,b_i))}$; if $b_i'=b_i$, it converges to one because the transformed value is strictly positive.

If $i\notin I_c(z)$, a performed test rejects, so player $i$ resamples whenever testing and otherwise retains their belief. Their update probabilities therefore coincide with those in \eqref{eq:blue-P0}: $\widetilde P^\xi_{i,zz'}=P^0_{i,zz'}>0$, independently of $\xi$.

The auxiliary test outcomes are determined by $z$, so independence of the update randomizations gives $\widetilde P^\xi_{zz'}=\prod_{i\in I}\widetilde P^\xi_{i,zz'}$. The factors associated with consistent players who change belief contribute exactly the powers of $\xi$ appearing in the stated resistance $r_{zz'}$. Dividing by $\xi^{r_{zz'}}$ and taking the limit therefore yields $\lim_{\xi\to0}\widetilde P^\xi_{zz'}\xi^{-r_{zz'}}=\bigl(\prod_{\{i\in I_c(z):b_i'\ne b_i\}}\gamma_i\psi_i(b_i'\mid b_i)\bigr)\bigl(\prod_{i\notin I_c(z)}P^0_{i,zz'}\bigr)$. Every factor is strictly positive and finite, with empty products equal to one. Hence this limit belongs to $(0,\infty)$. Moreover, the individual update probabilities converge to their counterparts in \eqref{eq:blue-P0}, establishing $\widetilde P^\xi\to P^0$.

We next compare the actual and auxiliary processes. Starting both at $z$, use the same within-epoch observations and the same randomizations for testing decisions, exploration, and resampling, but replace each performed test outcome by its correct outcome in the auxiliary process. If every performed test in the actual process is correct, the two processes produce identical next states.

Let $E$ denote the event that at least one performed test is incorrect. Assumption~\ref{ass:error_control} bounds each player's test error probability by $\xi^{\bar r}$. Since the decision to test is independent of the observations, $\Pr(E\mid z)\le\sum_{i\in I}\gamma_i\xi^{\bar r}\le |I|\xi^{\bar r}$. The coupled next states agree outside $E$, so $|P^\xi_{zz'}-\widetilde P^\xi_{zz'}|\le\Pr(E\mid z)\le |I|\xi^{\bar r}$. This bound uses a union bound and does not require independent test outcomes.

Since $\widetilde P^\xi\to P^0$ and the comparison bound tends to zero, we obtain $P^\xi\to P^0$. Furthermore, Assumption~\ref{ass:error_control} ensures $r_{zz'}\le\sum_{i\in I}f_i(V_i(\mu,\pi_i,b_i))<\bar r$. Consequently, $|P^\xi_{zz'}\xi^{-r_{zz'}}-\widetilde P^\xi_{zz'}\xi^{-r_{zz'}}|\le |I|\xi^{\bar r-r_{zz'}}\to0$. The normalized actual transition probability therefore converges to the same positive finite limit computed for the auxiliary process. This verifies the remaining regular-perturbation conditions and proves the resistance formula.
\end{proof}

\medskip\noindent\textit{2. Recurrent classes.} Once a Markov chain enters a recurrent communication class, it never leaves, and any state in the class can be reached from any other state in the class with positive probability.

\begin{lemma}\label{lemma:blue-recurrent}
The recurrent classes of $P^0$ are exactly the singletons $\{z\}$ with $z\in\mathcal Z^\dagger$.
\end{lemma}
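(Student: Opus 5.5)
We prove the two inclusions separately, working directly from the unperturbed transition kernel \eqref{eq:blue-P0}. Throughout, recall that $P^0_{zz'}=\prod_{i\in I}P^0_{i,zz'}$ and that $\mathcal Z^\dagger$ is nonempty by Proposition~\ref{lemma: sigma, tau, M}.

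\emph{Consistent states are absorbing.} First we show that every $z\in\mathcal Z^\dagger$ is an absorbing state of $P^0$. If $I_c(z)=I$, then by \eqref{eq:blue-P0} each factor $P^0_{i,zz'}$ equals $\mathbf 1\{b_i'=b_i\}$. Hence $P^0_{zz}=1$, and since the policy component is determined by the beliefs through $\Br^\sigma$, we have $P^0_{zz'}=0$ for every $z'\ne z$. So $\{z\}$ is a closed communication class, and therefore a recurrent singleton class.

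\emph{Inconsistent states are transient.} Next we show that no state $z\notin\mathcal Z^\dagger$ belongs to a recurrent class. It suffices to exhibit, from every such $z$, a positive-probability path under $P^0$ into $\mathcal Z^\dagger$. Once this is done, $z$ leads to an absorbing state from which it cannot return, so $z$ is transient, and the recurrent classes are exactly the consistent singletons. The path has two stages.

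\emph{Stage (a): make every player inconsistent.} Since $z\notin\mathcal Z^\dagger$, some player $i\notin I_c(z)$ exists. With probability $\gamma_i\psi_i(\tilde b_i\mid b_i)>0$, player $i$ moves to the belief $\tilde b_i$ given by Assumption~\ref{assump: bad strategy}, applied to the current opponents' beliefs $b_{-i}$. All other players keep their beliefs, which happens with positive probability because consistent players always retain their belief and inconsistent players retain it with probability $1-\gamma_j$. At the new state $\hat z$, two things hold:
\begin{itemize}
\item player $i$'s belief $\tilde b_i$ is more than $\tau$ from $\pi_{-i}=\Br^\sigma_{-i}(b_{-i})$ at some state, so $i$ is inconsistent;
\item for every $j\ne i$, player $j$'s belief $b_{ji}$ is more than $\tau$ from $\tilde\pi_i=\Br_i^\sigma(\tilde b_i)$ at some state. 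Since $\|b_j(s)-\hat\pi_{-j}(s)\|_2\ge\|b_{ji}(s)-\tilde\pi_i(s)\|_2$ (the $\ell_2$ norm over the product dominates each block), $j$ is inconsistent as well.
\end{itemize}
Thus $I_c(\hat z)=\varnothing$.

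\emph{Stage (b): jump to a consistent state.} At $\hat z$ every player tests and rejects with probability $\prod_i\gamma_i>0$. Because $\psi_i$ has full support, the joint resampling lands on the beliefs $b^\dagger$ of any fixed $z^\dagger\in\mathcal Z^\dagger$ with probability $\prod_i\gamma_i\psi_i(b_i^\dagger\mid\hat b_i)>0$. This completes the path $z\to\hat z\to z^\dagger$.

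\emph{Main obstacle.} The delicate step is stage (a). When some players are currently consistent, they never move under $P^0$, so we cannot resample them directly; we must make them inconsistent by changing a single inconsistent player's policy. That is exactly what Assumption~\ref{assump: bad strategy} provides, via the second inequality $\max_s\|\tilde\pi_i(s)-b_{ji}(s)\|_2>\tau$ for all $j\ne i$. The only technical check is the norm comparison above, since consistency is measured on the full opponent profile while the assumption controls just the $i$-th component.
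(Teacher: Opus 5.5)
Your proposal is correct and matches the paper's proof essentially step for step. You show that consistent states are absorbing by \eqref{eq:blue-P0}, and that every inconsistent state reaches a consistent one along the two-step path $z\to\tilde z\to w$ built from Assumption~\ref{assump: bad strategy}, using the same norm-domination check to make every player inconsistent at $\tilde z$. The only cosmetic difference is in bounding the first step: the paper uses the event that no player other than $i$ tests, giving $\gamma_i\psi_i(\tilde b_i\mid b_i)\prod_{j\ne i}(1-\gamma_j)$, which is the bound your retention argument yields.
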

\begin{proof}
At any consistent state $z\in\mathcal{Z}^\dagger$, every player retains their belief under $P^0$ by \eqref{eq:blue-P0}. Because policies are uniquely determined by beliefs, the process remains at $z$ with probability one. Thus each consistent state is absorbing and forms a singleton recurrent class.

To rule out other recurrent classes, we show that every inconsistent state can reach a consistent state with positive probability. Fix $z=(b,\pi)\notin\mathcal{Z}^\dagger$, and choose a player $i$ whose belief is inconsistent, so that $\max_{s\in S}\|b_i(s)-\pi_{-i}(s)\|_2>\tau$. We construct a path $z\to\tilde z\to w$, where every player is inconsistent at the intermediate state $\tilde z$, and $w$ is consistent.

By Assumption~\ref{assump: bad strategy}, there exists a belief $\tilde b_i\in \mathcal B_i$, with induced policy $\tilde\pi_i=\operatorname{Br}_i^\sigma(\tilde b_i)$, satisfying $\max_{s\in S}\|\tilde b_i(s)-\pi_{-i}(s)\|_2>\tau$ and $\max_{s\in S}\|\tilde\pi_i(s)-b_{ji}(s)\|_2>\tau$ for every $j\ne i$. Define the intermediate state by replacing only player $i$'s belief and policy: $\tilde b=(\tilde b_i,b_{-i})$, $\tilde\pi=(\tilde\pi_i,\pi_{-i})$, and $\tilde z=(\tilde b,\tilde\pi)$.

We verify that every player is inconsistent at $\tilde z$. Player $i$ is inconsistent by the first inequality, since their opponents' policies remain unchanged. For any other player $j$, their belief remains $b_j$, while player $i$'s policy changes to $\tilde\pi_i$. The discrepancy concerning player $i$ is one component of the discrepancy between player $j$'s belief and the opponents' policy profile. Hence, $\max_{s\in S}\|\tilde b_j(s)-\tilde\pi_{-j}(s)\|_2\ge\max_{s\in S}\|b_{ji}(s)-\tilde\pi_i(s)\|_2>\tau$. Thus player $j$ is also inconsistent. The game state witnessing inconsistency may differ across players.

We next verify that both transitions in the proposed path have positive probability under $P^0$. For the first transition, consider the event that player $i$ tests and resamples $\tilde b_i$, while every other player does not test. Since player $i$ is inconsistent at $z$, their test rejects with probability one in the unperturbed process. All other players retain their beliefs on this event. Therefore, $P^0_{z\tilde z}\ge\gamma_i\psi_i(\tilde b_i\mid b_i)\prod_{j\ne i}(1-\gamma_j)>0$.

For the second transition, choose any consistent state $w=(b^w,\pi^w)\in\mathcal{Z}^\dagger$, whose existence follows from Proposition~\ref{lemma: sigma, tau, M}. Every player is inconsistent at $\tilde z$, so every performed test rejects under $P^0$. If all players test and player $j$ draws $b_j^w$, the resulting belief profile is $b^w$, and the policy update produces $\pi^w$. Consequently, $P^0_{\tilde z w}\ge\prod_{j\in I}\gamma_j\psi_j(b_j^w\mid\tilde b_j)>0$. Both lower bounds are strictly positive because $\gamma_j\in(0,1)$ and the resampling distributions have full support. It follows that $[(P^0)^2]_{zw}\ge P^0_{z\tilde z}P^0_{\tilde z w}>0$.

Therefore, every inconsistent state can reach an absorbing consistent state in at most two transitions. If an inconsistent state belonged to a recurrent class, closure would require that class to contain the reachable consistent state. This is impossible because the consistent state is absorbing and cannot return to the inconsistent state. Hence there are no recurrent classes other than the singleton consistent states.
\end{proof}

\medskip\noindent\textit{3. Resistance and stochastic potential.}
The preceding results establish that $P^\xi$ is a regular perturbation of $P^0$ and that the recurrent classes of $P^0$ are exactly the singleton consistent states. Although all consistent states are absorbing under $P^0$, perturbations allow transitions between them. We use the resistance-tree framework of \citet{Young1993} to determine which of these states are stochastically stable.

Recall that the transition resistance $r_{zz'}$ is the leading-order exponent of $P^\xi_{zz'}$ as $\xi\to0$. For a finite path $\rho=(z_0\to z_1\to\cdots\to z_\ell)$, its resistance is the sum of the transition resistances, $R(\rho):=\sum_{k=0}^{\ell-1}r_{z_kz_{k+1}}$. For distinct consistent states $z,w\in\mathcal Z^\dagger$, define $\widehat r_{zw}:=\min_{\rho:z\to w}R(\rho)$, where the minimum is over all finite paths in $\mathcal Z$ from $z$ to $w$. Thus, $\widehat r_{zw}$ measures the least total resistance required to move from one consistent state to another, possibly through intermediate inconsistent states. We show that the minimum path resistance $\widehat r_{zw}$ depends only on the departing state $z$.

\begin{lemma}\label{lemma:blue-path-resistance}
For distinct $z=(b,\pi),w\in\mathcal Z^\dagger$,
$\widehat r_{zw}=\min_{i\in I}f_i\bigl(V_i(\mu,\pi_i,b_i)\bigr)$.
\end{lemma}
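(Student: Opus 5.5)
We prove the two inequalities $\widehat r_{zw}\ge m(z)$ and $\widehat r_{zw}\le m(z)$ separately, where $m(z):=\min_{i\in I}f_i(V_i(\mu,\pi_i,b_i))$. Throughout we use the resistance formula of Lemma~\ref{lemma:blue-regular}: a transition out of $z'$ costs $f_i(V_i(\mu,\pi'_i,b'_i))$ for each player $i$ who is consistent at $z'$ and changes belief, while inconsistent players move at zero cost.

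\emph{Lower bound.} Let $\rho=(z=z_0\to z_1\to\cdots\to z_\ell=w)$ be any path with $w\ne z$. Since $z\ne w$ and policies are determined by beliefs, $b^{z_1}\ne b$ at the first step that leaves $z$; without loss of generality this is the first step, because self-loops only add nonnegative resistance. At $z_0=z\in\mathcal Z^\dagger$ every player is consistent, so $I_c(z)=I$. Any player whose belief changes therefore contributes $f_i(V_i(\mu,\pi_i,b_i))\ge m(z)$. Since at least one player changes belief, $r_{z_0z_1}\ge m(z)$, and as all resistances are nonnegative, $R(\rho)\ge m(z)$.

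\emph{Upper bound.} Let $i^*$ attain the minimum in $m(z)$. We mimic the two-step construction in the proof of Lemma~\ref{lemma:blue-recurrent}.
\begin{enumerate}
\item Player $i^*$ tests, passes, explores and draws the belief $\tilde b_{i^*}$ given by Assumption~\ref{assump: bad strategy} (applied with opponents' beliefs $b_{-i^*}$ and policies $\pi_{-i^*}$), while all others keep their beliefs. This transition has positive probability, and its resistance is exactly $f_{i^*}(V_{i^*}(\mu,\pi_{i^*},b_{i^*}))=m(z)$, since only the consistent player $i^*$ changes belief.
\item At the resulting state $\tilde z$, every player is inconsistent, by the same argument as in Lemma~\ref{lemma:blue-recurrent}. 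Player $i^*$ is inconsistent by the first inequality of Assumption~\ref{assump: bad strategy}. Every $j\ne i^*$ is inconsistent by the second inequality, because the $i^*$-component of $\tilde\pi_{-j}$ is $\tilde\pi_{i^*}$. Since $I_c(\tilde z)=\emptyset$, the transition $\tilde z\to w$, in which all players test, reject, and draw $b^w_j$ with $\psi_j(b^w_j\mid\tilde b_j)>0$, has positive $P^0$ probability and hence zero resistance.
\end{enumerate}
The two-step path therefore has total resistance $m(z)$, which gives $\widehat r_{zw}\le m(z)$.

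The only delicate points are that Assumption~\ref{assump: bad strategy} is stated for arbitrary opponents' beliefs, so it applies at a consistent $z$, and that the first step has positive probability at positive $\xi$ with the stated exponent, which Lemma~\ref{lemma:blue-regular} already guarantees. This step, that a single exploration makes everyone inconsistent, is the crux: without it, the dynamics could require several consistent players to explore simultaneously, and the upper bound would fail.
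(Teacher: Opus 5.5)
Your proof is correct and matches the paper's argument essentially step for step. The lower bound comes from the first transition out of $z$: every player is consistent there, so any belief change costs at least $m(z)$. The upper bound uses the two-step path $z\to\tilde z\to w$ built from Assumption~\ref{assump: bad strategy} for a minimizing player $i^*$. The one point the paper states explicitly and you leave implicit is that $\tilde b_{i^*}\neq b_{i^*}$. It follows because $\tilde z$ is fully inconsistent while $z$ is consistent, and it is what makes your first step a genuine belief change with resistance exactly $f_{i^*}(V_{i^*}(\mu,\pi_{i^*},b_{i^*}))=m(z)$.
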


\begin{proof}
Fix distinct states $z=(b,\pi),w\in\mathcal Z^\dagger$. We establish a lower bound for the resistance of every path from $z$ to $w$ and then construct a path that attains this bound.

First, consider any finite path $\rho$ from $z$ to $w$, and let $z'=(b',\pi')$ be its first state different from $z$. Such a state exists because $w\ne z$. Since policies are uniquely determined by beliefs, $z'\ne z$ implies that $b_j'\ne b_j$ for at least one player $j$. Every player is consistent at $z$, so Lemma~\ref{lemma:blue-regular} gives $r_{zz'}=\sum_{\{j\in I:b_j'\ne b_j\}}f_j(V_j(\mu,\pi_j,b_j))\ge\min_{j\in I}f_j(V_j(\mu,\pi_j,b_j))$. All transition resistances are nonnegative, and therefore $R(\rho)\ge r_{zz'}\ge\min_{j\in I}f_j(V_j(\mu,\pi_j,b_j))$. Since this holds for every path from $z$ to $w$, we obtain $\widehat r_{zw}\ge\min_{j\in I}f_j(V_j(\mu,\pi_j,b_j))$.

To establish the reverse inequality, choose a player $i\in\arg\min_{j\in I}f_j(V_j(\mu,\pi_j,b_j))$. Assumption~\ref{assump: bad strategy} provides a belief $\tilde b_i\in \mathcal B_i$, with induced policy $\tilde\pi_i=\operatorname{Br}_i^\sigma(\tilde b_i)$, such that $\max_{s\in S}\|\tilde b_i(s)-\pi_{-i}(s)\|_2>\tau$ and $\max_{s\in S}\|\tilde\pi_i(s)-b_{ji}(s)\|_2>\tau$ for every $j\ne i$. Define $\tilde b=(\tilde b_i,b_{-i})$, $\tilde\pi=(\tilde\pi_i,\pi_{-i})$, and $\tilde z=(\tilde b,\tilde\pi)\in\mathcal Z$. These inequalities ensure that every player is inconsistent at $\tilde z$. Player $i$ is inconsistent by the first inequality, since their opponents' policies remain unchanged. For any $j\ne i$, their belief remains $b_j$, but the policy of opponent $i$ changes to $\tilde\pi_i$. Consequently, $\max_{s\in S}\|\tilde b_j(s)-\tilde\pi_{-j}(s)\|_2\ge\max_{s\in S}\|b_{ji}(s)-\tilde\pi_i(s)\|_2>\tau$, so player $j$ is also inconsistent. Moreover, $\tilde b_i\ne b_i$: otherwise, player $i$'s belief and the opponents' policies both are unchanged, contradicting consistency at $z$ and inconsistency at $\tilde z$.

Now consider the path $z\to\tilde z\to w$. In the first transition, player $i$ changes their belief to $\tilde b_i$, while every other player retains their belief. Since all players are consistent at $z$ and only player $i$ changes belief, Lemma~\ref{lemma:blue-regular} gives $r_{z\tilde z}=f_i(V_i(\mu,\pi_i,b_i))=\min_{j\in I}f_j(V_j(\mu,\pi_j,b_j))$.

In the second transition, players resample the beliefs specified by $w$. Every player is inconsistent at $\tilde z$, so under $P^0$ each player can test, reject, and draw the required belief with positive probability. Thus this transition requires no exploration by a consistent player, and Lemma~\ref{lemma:blue-regular} gives $r_{\tilde z w}=0$. The constructed path therefore has resistance $R(z\to\tilde z\to w)=r_{z\tilde z}+r_{\tilde z w}=\min_{j\in I}f_j(V_j(\mu,\pi_j,b_j))$. By the definition of minimum path resistance, $\widehat r_{zw}\le\min_{j\in I}f_j(V_j(\mu,\pi_j,b_j))$. Combining this with the lower bound proves the claimed equality.
\end{proof}

We now use these path resistances to identify the stochastically stable states. By Lemma~\ref{lemma:blue-recurrent}, the recurrent classes of $P^0$ are the singletons $\{z\}$ with $z\in\mathcal Z^\dagger$. Consider the directed graph on $\mathcal Z^\dagger$ in which the edge $(w,w')$ from $w$ to $w'\neq w$ has weight $\widehat r_{ww'}$. For $z\in\mathcal Z^\dagger$, a $z$-tree is a set of edges in which every $w\in\mathcal Z^\dagger\setminus\{z\}$ has exactly one outgoing edge, $z$ has none, and there is no directed cycle, so that following the edges from any $w\neq z$ leads to $z$. Let $\mathrm{Tree}_z$ denote the set of all $z$-trees, which is nonempty since every $w\neq z$ may point directly to $z$. The \emph{stochastic potential} of $z$ is $\Phi(z):=\min_{\mathbb T\in\mathrm{Tree}_z}\sum_{(w,w')\in\mathbb T}\widehat r_{ww'}$.

\begin{lemma}[\citet{Young1993}]\label{theorem: Young 1993}
For a regular perturbation of a finite Markov chain, the stochastically stable states are precisely those belonging to recurrent classes with minimum stochastic potential.
\end{lemma}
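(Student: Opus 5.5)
The plan is to prove the statement through the Markov chain tree theorem (Freidlin--Wentzell), which gives an explicit formula for $\mu^\xi$, and then to read off the leading-order behaviour in $\xi$.

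\textbf{Step 1 (tree formula).} For each $\xi>0$, $P^\xi$ is irreducible, so for every $z\in\mathcal Z$
\[
\mu^\xi_z=\frac{q^\xi_z}{\sum_{z'\in\mathcal Z}q^\xi_{z'}},\qquad q^\xi_z:=\sum_{\mathbb T\in\mathcal T_z}\prod_{(x,y)\in\mathbb T}P^\xi_{xy},
\]
where $\mathcal T_z$ is the set of spanning in-trees of the whole space $\mathcal Z$ rooted at $z$. Regularity gives $P^\xi_{xy}\sim c_{xy}\xi^{r_{xy}}$ with $c_{xy}\in(0,\infty)$ on every edge that is positive for small $\xi$. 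Hence each tree contributes $\Theta(\xi^{R(\mathbb T)})$, with $R(\mathbb T)$ the sum of its edge resistances, and every such contribution is nonnegative. It follows that $q^\xi_z=\Theta(\xi^{\gamma(z)})$, where $\gamma(z):=\min_{\mathbb T\in\mathcal T_z}R(\mathbb T)$. Therefore $\lim_{\xi\to0}\mu^\xi_z>0$ if and only if $\gamma(z)=\min_{z'}\gamma(z')$. No cancellation can occur because all terms are positive, and the limit exists because each $q^\xi_z\xi^{-\gamma(z)}$ converges to a positive constant.

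\textbf{Step 2 (restrict to recurrent classes).} This is the main obstacle: I must show $\gamma(z)=\Phi(z)$ up to a common additive constant for recurrent roots, and that non-recurrent roots never attain the minimum.

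\emph{Upper bound.} Given an optimal tree on the recurrent classes, I realise each reduced edge $(w,w')$ by a least-resistance path of cost $\widehat r_{ww'}$. I attach every transient state by zero-resistance $P^0$-edges that lead into some recurrent class, and every state of a recurrent class by zero-resistance edges inside that class. I then prune cycles by keeping only the first outgoing edge chosen for each state; pruning never increases resistance. This gives $\gamma(z)\le\Phi(z)$ for every $z$ in a recurrent class.

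\emph{Lower bound.} Take any tree $\mathbb T\in\mathcal T_z$. For each recurrent class $w\neq$ the class of $z$, follow $\mathbb T$ from $w$ until it first hits a different recurrent class $w'$. The resulting path costs at least $\widehat r_{ww'}$, and distinct classes use disjoint segments up to their first exit, so I can charge each segment to its own class. The induced map $w\mapsto w'$ defines a tree on classes rooted at the class of $z$. This gives $R(\mathbb T)\ge\Phi(z)$.

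\emph{Transient roots.} If $z$ is not in a recurrent class, the argument above still gives $R(\mathbb T)\ge\Phi(w^\ast)$ for some recurrent class $w^\ast$. In addition, the tree must contain an outgoing edge from $w^\ast$. Leaving a recurrent class of $P^0$ requires a positive-resistance edge, so this edge is an extra strictly positive cost, and hence $\gamma(z)>\min\Phi$.

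\textbf{Step 3 (combine).} Steps 1 and 2 together identify the stochastically stable states as exactly the states in recurrent classes of minimum stochastic potential. In our setting the recurrent classes are the singletons in $\mathcal Z^\dagger$ by Lemma~\ref{lemma:blue-recurrent}, so the charging argument is simplest there.

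The hardest part will be the bookkeeping in the lower bound. I need to ensure that the path segments charged to different classes are genuinely edge-disjoint within $\mathbb T$, which holds because each vertex of a tree has a unique outgoing edge. I also need the induced map on classes to be acyclic, which holds because following $\mathbb T$ from any vertex ends at the root.
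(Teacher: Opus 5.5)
The paper gives no proof of this lemma; it cites Young (1993). Your architecture (Markov chain tree theorem, then comparing minimum-resistance trees on $\mathcal Z$ with trees on recurrent classes) is Young's own, and Step 1 is fine.

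\textbf{The gap: the lower bound in Step 2.} The segments charged to distinct classes need not be edge-disjoint, and the reason you give points the wrong way. Because every vertex of $\mathbb T$ has a unique out-edge, two segments that meet at a transient state coincide from there on until they reach the same class. Their common tail is paid once in $R(\mathbb T)$ but twice in $\sum_w\widehat r_{ww'}$.

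For example, let $P^0$ have absorbing states $e,f,g$ and a transient state $t$ with $P^0_{te}=1$, and let the transitions $e\to t$, $f\to t$, $g\to t$, $t\to f$, $t\to g$ have resistance $1$. The $g$-tree $\mathbb T=\{e\to t,\ f\to t,\ t\to g\}$ has $R(\mathbb T)=3$. Your induced class tree $\{e\to g,\ f\to g\}$ has weight $\widehat r_{eg}+\widehat r_{fg}=4$. The bound $R(\mathbb T)\ge\Phi(g)=3$ does hold, but only via the class tree $f\to e\to g$, and $\widehat r_{fe}=1$ uses the zero-resistance move $t\to e$, which is not in $\mathbb T$.

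That is the missing idea: your charging never uses that intermediate states are transient, i.e.\ have zero-resistance exits, and this is exactly what makes the lower bound true. Singleton classes do not rescue it in the paper's setting, since two consistent states can exit to the same inconsistent state.

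\textbf{A repair.} Eliminate transient states before charging. If $t$ is transient with parent $p$ and children $c_1,\dots,c_k$ in $\mathbb T$, pick $u\ne t$ with $P^0_{tu}>0$. Reattach to $p$ (through $t$) the child whose subtree contains $u$, or any child if none does, and attach the other children to $u$ (through $t$). This gives a $z$-rooted in-tree on the remaining states, with the new edges being two-step paths through $t$, and resistance at most $R(\mathbb T)$ because $r_{tu}=0$. Iterate in the reduced chain $P_{xy}+P_{xt}P_{ty}/(1-P_{tt})$, which is again regular and keeps the recurrent classes, the values $\widehat r$, and zero-resistance exits for the remaining transient states. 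Equivalently, apply the tree theorem to the chain watched on recurrent states.

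Once only recurrent states remain, start each class at its state nearest to $z$ in the tree. Its edge leaves the class and lands directly in another class. Distinct classes use edges out of distinct states, so your charging becomes valid. The class map also strictly lowers the distance to $z$, hence is acyclic.

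\textbf{Further points.}
\begin{itemize}
\item Acyclicity genuinely needs this choice of starting state. With classes $\{a_1,b_1\}$ and $\{a_2,b_2\}$ and $a_1\to b_2\to z$, $a_2\to b_1\to z$ in $\mathbb T$, starting from $a_1$ and $a_2$ gives a 2-cycle.
\item For a transient root, several classes can reach $z$ without meeting another class, so you get a forest rather than a $w^\ast$-tree. Instead, reroot $\mathbb T$ along a simple zero-resistance path from $z$ into a class $w$, extended inside $w$ to a state whose $\mathbb T$-edge leaves $w$. Dropping that positive-resistance edge gives a recurrent-rooted tree of strictly smaller resistance.
\item In the upper bound, keeping the first chosen out-edge can create a cycle: processing $w_1\to x\to w_2$ before $w_2\to x\to y\to w_3$ gives $x\to w_2\to x$. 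Instead, note that every state reaches $z$ in the union graph and extract a spanning in-tree rooted at $z$. Its resistance is at most the sum of the path resistances.
\end{itemize}
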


\emph{Proof of Theorem \ref{theorem:main}.} For a fixed root $z=(b,\pi)$, every $z$-tree contains exactly one outgoing edge from each vertex $w=(b^w,\pi^w)\ne z$, and none from $z$. By Lemma~\ref{lemma:blue-path-resistance}, the resistance of the edge leaving $w$ depends only on $w$. Thus every $z$-tree sums the same contributions, one from each $w\ne z$, regardless of how the vertices are connected. Its total resistance therefore equals $\Phi(z)$, giving
\[
\Phi(z) =\sum_{w\in\mathcal Z^\dagger\setminus\{z\}}\min_{i\in I}f_i\bigl(V_i(\mu,\pi_i^w,b_i^w)\bigr) =\sum_{w\in\mathcal Z^\dagger}\min_{i\in I}f_i\bigl(V_i(\mu,\pi_i^w,b_i^w)\bigr)-\min_{i\in I}f_i\bigl(V_i(\mu,\pi_i,b_i)\bigr).
\]
The sum in the last expression is independent of $z$. Lemma~\ref{theorem: Young 1993} therefore implies that $z=(b,\pi)$ is stochastically stable if and only if $z\in\mathcal Z^\dagger$ and
\[
\min_{i\in I}f_i\bigl(V_i(\mu,\pi_i,b_i)\bigr)=\max_{z'=(b',\pi')\in\mathcal Z^\dagger}\min_{i\in I}f_i\bigl(V_i(\mu,\pi_i',b_i')\bigr).
\]

This proves the selection rule in Theorem~\ref{theorem:main}. The approximate-equilibrium conclusion follows from Proposition~\ref{lemma: sigma, tau, M}. When $f_i(x)=x$ for all $i\in I$, replacing $f_i(V_i)$ with $V_i$ throughout gives
$\min_{i\in I}V_i(\mu,\pi_i,b_i)=\max_{z'=(b',\pi')\in\mathcal Z^\dagger}\min_{i\in I}V_i(\mu,\pi_i',b_i')$. \hfill $\square$

\bibliographystyle{IEEEtranN}
\bibliography{library}

\begin{thebibliography}{21}
\providecommand{\natexlab}[1]{#1}
\providecommand{\url}[1]{#1}
\csname url@samestyle\endcsname
\providecommand{\newblock}{\relax}
\providecommand{\bibinfo}[2]{#2}
\providecommand{\BIBentrySTDinterwordspacing}{\spaceskip=0pt\relax}
\providecommand{\BIBentryALTinterwordstretchfactor}{4}
\providecommand{\BIBentryALTinterwordspacing}{\spaceskip=\fontdimen2\font plus
\BIBentryALTinterwordstretchfactor\fontdimen3\font minus
  \fontdimen4\font\relax}
\providecommand{\BIBforeignlanguage}[2]{{%
\expandafter\ifx\csname l@#1\endcsname\relax
\typeout{** WARNING: IEEEtranN.bst: No hyphenation pattern has been}%
\typeout{** loaded for the language `#1'. Using the pattern for}%
\typeout{** the default language instead.}%
\else
\language=\csname l@#1\endcsname
\fi
#2}}
\providecommand{\BIBdecl}{\relax}
\BIBdecl

\bibitem[Young(1993)]{Young1993}
H.~P. Young, ``The evolution of conventions,'' \emph{Econometrica}, vol.~61,
  no.~1, pp. 57--84, 1 1993.

\bibitem[Marden and Shamma(2012)]{MardenShamma2012}
J.~R. Marden and J.~S. Shamma, ``Revisiting log-linear learning: Asynchrony,
  completeness and payoff-based implementation,'' \emph{Games and Economic
  Behavior}, vol.~75, pp. 788--808, 7 2012.

\bibitem[Pradelski and Young(2012)]{Pradelski2012}
B.~S. Pradelski and H.~P. Young, ``Learning efficient nash equilibria in
  distributed systems,'' \emph{Games and Economic Behavior}, vol.~75, pp.
  882--897, 7 2012.

\bibitem[Marden et~al.(2014)Marden, Young, and Pao]{MardenYoungPao2012}
J.~R. Marden, H.~P. Young, and L.~Y. Pao, ``Achieving pareto optimality through
  distributed learning,'' \emph{SIAM Journal on Control and Optimization},
  vol.~52, no.~5, pp. 2753--2770, 2014.

\bibitem[Yongacoglu et~al.(2022)Yongacoglu, Arslan, and
  Y{\"u}ksel]{yongacoglu2022decentralized}
B.~Yongacoglu, G.~Arslan, and S.~Y{\"u}ksel, ``Decentralized learning for
  optimality in stochastic dynamic teams and games with local control and
  global state information,'' \emph{IEEE Transactions on Automatic Control},
  vol.~67, no.~10, pp. 5230--5245, 2022.

\bibitem[Donmez et~al.(2025)Donmez, Unlu, and Sayin]{donmez2025logitq}
A.~S. Donmez, O.~Unlu, and M.~O. Sayin, ``Logit-{Q} dynamics for efficient
  learning in stochastic teams,'' \emph{SIAM Journal on Control and
  Optimization}, vol.~63, no.~5, pp. 3219--3243, 2025.

\bibitem[Christianos et~al.(2023)Christianos, Papoudakis, and
  Albrecht]{christianos2023pareto}
F.~Christianos, G.~Papoudakis, and S.~V. Albrecht, ``Pareto actor-critic for
  equilibrium selection in multi-agent reinforcement learning,''
  \emph{Transactions on Machine Learning Research}, 2023.

\bibitem[Zhang et~al.(2024)Zhang, Zardini, Ozdaglar, Shamma, and
  Li]{zhang2024equilibrium}
R.~Zhang, G.~Zardini, A.~Ozdaglar, J.~Shamma, and N.~Li, ``Equilibrium
  selection for multi-agent reinforcement learning: A unified framework,''
  2024, arXiv:2406.08844, revised January 2026.

\bibitem[Kiremitci et~al.(2026)Kiremitci, Donmez, and
  Sayin]{kiremitci2026decentralized}
S.~T. Kiremitci, A.~S. Donmez, and M.~O. Sayin, ``Decentralized optimal
  equilibrium learning in stochastic games via single-bit feedback,'' 2026,
  arXiv:2602.12830.

\bibitem[Foster and Young(2003)]{FosterYoung2003}
D.~P. Foster and H.~P. Young, ``Learning, hypothesis testing, and {Nash}
  equilibrium,'' \emph{Games and Economic Behavior}, vol.~45, no.~1, pp.
  73--96, 2003.

\bibitem[Jindani(2022)]{Jindani2022}
S.~Jindani, ``Learning efficient equilibria in repeated games,'' \emph{Journal
  of Economic Theory}, vol. 205, p. 105551, 2022.

\bibitem[Calvano et~al.(2020)Calvano, Calzolari, Denicolò, and
  Pastorello]{Calvano}
E.~Calvano, G.~Calzolari, V.~Denicolò, and S.~Pastorello, ``Artificial
  intelligence, algorithmic pricing, and collusion,'' \emph{American Economic
  Review}, vol. 110, no.~10, pp. 3267--3297, October 2020.

\bibitem[Klein(2021)]{klein2021autonomous}
T.~Klein, ``Autonomous algorithmic collusion: {Q}-learning under sequential
  pricing,'' \emph{The RAND Journal of Economics}, vol.~52, no.~3, pp.
  538--558, 2021.

\bibitem[Calvano et~al.(2021)Calvano, Calzolari, Denicolò, and
  Pastorello]{calvano2021imperfect}
E.~Calvano, G.~Calzolari, V.~Denicolò, and S.~Pastorello, ``Algorithmic
  collusion with imperfect monitoring,'' \emph{International Journal of
  Industrial Organization}, vol.~79, p. 102712, 2021.

\bibitem[Asker et~al.(2024)Asker, Fershtman, and Pakes]{asker2024impact}
J.~Asker, C.~Fershtman, and A.~Pakes, ``The impact of artificial intelligence
  design on pricing,'' \emph{Journal of Economics \& Management Strategy},
  vol.~33, no.~2, pp. 276--304, 2024.

\bibitem[Mailath and Olszewski(2011)]{MailathOlszewski2011}
G.~J. Mailath and W.~Olszewski, ``Folk theorems with bounded recall under
  (almost) perfect monitoring,'' \emph{Games and Economic Behavior}, vol.~71,
  pp. 174--192, 2011, special Issue In Honor of John Nash.

\bibitem[Cartea et~al.(2026)Cartea, Chang, Penalva, and
  Waldon]{cartea2026algorithmic}
{\'A}.~Cartea, P.~Chang, J.~Penalva, and H.~Waldon, ``Algorithmic collusion and
  a folk theorem from learning with bounded rationality,'' \emph{Games and
  Economic Behavior}, vol. 157, pp. 1--21, 2026.

\bibitem[Askenazi-Golan et~al.(2026)Askenazi-Golan, Mergoni~Cecchelli, Plumb,
  and Possnig]{AskenaziGolanEtAl2026}
\BIBentryALTinterwordspacing
G.~Askenazi-Golan, D.~Mergoni~Cecchelli, E.~Plumb, and C.~Possnig, ``The bounds
  of algorithmic collusion: {Q}-learning, gradient learning, and the folk
  theorem,'' 2026, version 2. [Online]. Available:
  \url{https://arxiv.org/abs/2411.12725v2}
\BIBentrySTDinterwordspacing

\bibitem[Yang and Wu(2025)]{ruifan}
R.~Yang and M.~Wu, ``Learning with episodic hypothesis testing in general
  games: A framework for equilibrium selection,'' \emph{Proceedings of the ACM
  on Measurement and Analysis of Computing Systems}, vol.~9, no.~3, Dec. 2025.

\bibitem[Geist et~al.(2019)Geist, Scherrer, and Pietquin]{geist2019theory}
M.~Geist, B.~Scherrer, and O.~Pietquin, ``A theory of regularized markov
  decision processes,'' in \emph{International conference on machine
  learning}.\hskip 1em plus 0.5em minus 0.4em\relax PMLR, 2019, pp. 2160--2169.

\bibitem[Maheshwari et~al.(2025)Maheshwari, Wu, Pai, and
  Sastry]{maheshwari2025independent}
C.~Maheshwari, M.~Wu, D.~Pai, and S.~Sastry, ``Independent and decentralized
  learning in markov potential games,'' \emph{IEEE Transactions on Automatic
  Control}, vol.~70, no.~11, pp. 7538--7553, 2025.

\end{thebibliography}

\end{document}